\newcommand{\reach}{\Rightarrow}
\renewcommand{\vec}[1]{\ensuremath{\mathbf{#1}}}
\newcommand{\reachset}[1]{\ensuremath{\mathsf{reach}(#1)}}
\newcommand{\C}{\ensuremath{\mathcal{C}}}
\newcommand{\D}{\ensuremath{\mathcal{D}}}
\newcommand{\E}{\ensuremath{\mathcal{E}}}
\newcommand{\N}{\ensuremath{\mathbb{N}}}
\newcommand{\Z}{\ensuremath{\mathbb{Z}}}
\newcommand{\R}{\ensuremath{\mathbb{R}}}
\newcommand{\Q}{\ensuremath{\mathbb{Q}}}
\newcommand{\Rp}{\ensuremath{\mathbb{R}_{\geq 0}}}
\newcommand{\calD}{\mathcal{D}}
\newcommand{\calC}{\mathcal{C}}
\newcommand{\calE}{\mathcal{E}}
\newcommand{\va}{\vec{a}}
\newcommand{\vc}{\vec{c}}
\newcommand{\vd}{\vec{d}}
\newcommand{\vr}{\vec{r}}
\newcommand{\vv}{\vec{v}}
\newcommand{\vs}{\vec{s}}
\newcommand{\vM}{\vec{M}}
\newcommand{\vp}{\vec{p}}
\newcommand{\vi}{\vec{i}}
\newcommand{\vx}{\vec{x}}
\newcommand{\vy}{\vec{y}}
\newcommand{\vw}{\vec{w}}
\newcommand{\vh}{\vec{h}}
\newcommand{\vu}{\vec{u}}
\newcommand{\vb}{\vec{b}}
\def\longrightharpoonup{\relbar\joinrel\rightharpoonup}
\def\longleftharpoondown{\leftharpoondown\joinrel\relbar}
\def\longrightleftharpoons{\mathop{\vcenter{\hbox{\ooalign{\raise1pt\hbox{$\longrightharpoonup\joinrel$}\crcr\lower1pt\hbox{$\longleftharpoondown\joinrel$}}}}}}
\def\rxn{\mathop{\rightarrow}\limits}  
\def\revrxn{\mathop{\rightleftharpoons}\limits}
\newcommand{\recc}{\mathrm{recc}}
\title{The computational power of discrete chemical reaction networks with bounded executions}
\titlerunning{Execution bounded chemical reaction networks}
\author{David {Doty}}{Computer Science, University of California--Davis, CA, USA \and \url{https://web.cs.ucdavis.edu/~doty/}}{doty@ucdavis.edu}{https://orcid.org/0000-0002-3922-172X}{NSF awards 2211793, 1900931, 1844976, and DoE EXPRESS award SC0024467}
\author{Ben {Heckmann}}{CIT, Technical University of Munich, Germany \and Computer Science, University of California--Davis, CA, USA}{ben.heckmann@tum.de}{}{NSF award 1844976}
\authorrunning{D. {Doty} and B. {Heckmann}}
\keywords{chemical reaction networks, population protocols, stable computation}
\date{}
\begin{document}

\maketitle

\begin{abstract}
    Chemical reaction networks (CRNs) model systems where molecules interact according to a finite set of \emph{reactions} such as $A + B \to C$, representing that if a molecule of $A$ and $B$ collide, they disappear and a molecule of $C$ is produced. 
    CRNs can compute Boolean-valued predicates $\phi:\N^d \to \{0,1\}$ and integer-valued functions $f:\N^d \to \N$; for instance $X_1 + X_2 \to Y$ computes the function $\min(x_1,x_2),$
    since starting with $x_i$ copies of $X_i$, eventually $\min(x_1,x_2)$ copies of $Y$ are produced.
    
    We study the computational power of \emph{execution bounded} CRNs,
    in which only a finite number of reactions can occur from the initial configuration
    (e.g., ruling out reversible reactions such as $A \revrxn B$).
    The power and composability of such CRNs depends crucially on some other modeling choices that do not affect the computational power of CRNs with unbounded executions, namely whether an initial leader is present, and whether (for predicates) all species are required to ``vote'' for the Boolean output.
    If the CRN starts with an initial leader, and can allow only the leader to vote,
    then all semilinear predicates and functions can be stably computed in $O(n \log n)$ parallel time by execution bounded CRNs.
    
    However, 
    if no initial leader is allowed, all species vote,
    and the CRN is ``non-collapsing''
    (does not shrink from initially large to final $O(1)$ size configurations),
    then execution bounded CRNs are severely limited, 
    able to compute only \emph{eventually constant} predicates.
    A key tool is a characterization of execution bounded CRNs as precisely those with a nonnegative \emph{linear potential function} that is strictly decreased by every reaction~\cite{czerner2024fast}.
\end{abstract}

\section{Introduction}
Chemical reaction networks (CRNs) are a fundamental tool for understanding and designing molecular systems. By abstracting chemical reactions into a set of finite, rule-based transformations, CRNs allow us to model the behavior of complex chemical systems. 
For instance, the CRN with a single reaction $2X \rightarrow Y$, produces one $Y$ every time two $X$ molecules randomly react together, effectively calculating the function $f(x) = \lfloor x/2 \rfloor$ if the initial count of $X$ is interpreted as the input and the eventual count of $Y$ as the output.
A commonly studied special case of CRNs is the \emph{population protocol} model of distributed computing \cite{angluin2007computational},
in which each reaction has exactly two reactants and two products, e.g., $A+B \to C+D$.
This model assumes idealized conditions where reactions can proceed indefinitely, constrained only by the availability of reactants in the well-mixed solution.

Precisely the \emph{semilinear} predicates $\phi: \N^d \to \{0,1\}$~\cite{angluin2004computation} and functions $f:\N^d \to \N$~\cite{Chen2012DeterministicFunction} can be computed \emph{stably}, roughly meaning that the output is correct no matter the order in which reactions happen.
In population protocols or other CRNs with a finite reachable configuration space, this means that the output is correct with probability 1 under a stochastic scheduler that picks the next molecules to react at random.
However, existing constructions to compute semilinear predicates and functions use CRNs with \emph{unbounded executions},
meaning that it is possible to execute infinitely many reactions from the initial configuration.
CRNs with \emph{bounded executions} have several advantages.
With an absolute guarantee on how many reactions will happen before the CRN terminates,
wet-lab implementations need only supply a bounded amount of fuel to power the reactions.
Such CRNs are simpler to reason about:
each reaction brings it ``closer'' to the answer.
They also lead to a simpler definition of stable computation than is typically employed: an execution bounded CRN stably computes a predicate/function if it gets the correct answer after sufficiently many reactions.


To study this topic, we study networks that must eventually reach a configuration where no further reactions can occur, regardless of the sequence of reactions executed. 
This restriction is nontrivial because the techniques of \cite{Chen2012DeterministicFunction,Doty2013Leaderless} rely on reversible reactions 
(leading to unbounded executions) 
catalyzed by species we expect to be depleted once a computational step has terminated. 
This trick seems to add computational power to our system by undoing certain reactions as long as a specific species is present. Consider the following CRN computing $f(x_1, x_2, x_3) = \min \left(x_1-x_2, x_3\right)$. The input values $x_i$ are given by the counts of $X_i$, and the output by the count of $Z$ molecules in the stable state:
\begin{align} 
    X_1 & \rightarrow Y \label{x_to_y}\\ 
    X_2+Y & \rightarrow \varnothing \label{y_minus_x2}\\ 
    Y+X_3 & \rightarrow Z \label{min_y_x3}\\ 
    Z+X_2 & \rightarrow X_2+X_3+Y \label{reverse_min_y_x3}
\end{align}
Reactions \eqref{x_to_y} and \eqref{y_minus_x2} compute $x_1 - x_2$, storing the result in the count of $Y$. Next, reaction \eqref{min_y_x3} can be applied exactly $\min(y, x_3)$ times. But since the order of reactions is a stochastic process, we might consume copies of $Y$ in \eqref{min_y_x3}, before all of $x_2$ is subtracted from it. Therefore, we add reaction \eqref{reverse_min_y_x3}, using $X_2$ as a catalyst to undo reaction \eqref{min_y_x3} as long as copies of $X_2$ are present, indicating that the first step of computation has not terminated.
However, this means the above CRN does not have bounded exectutions, since reactions \eqref{min_y_x3} and \eqref{reverse_min_y_x3} can be alternated in an infinite execution.
A similar technique is used in \cite{Chen2012DeterministicFunction}, where semilinear sets are understood as a finite union of linear sets, shown to be computable in parallel by CRNs.
A reversible, catalyzed reaction finally converts the output of one of the CRNs to the global output.
Among other questions, we explore how the constructions of \cite{Chen2012DeterministicFunction} and \cite{Doty2013Leaderless} can be modified to provide equal computational power while guaranteeing bounded execution. 

The paper is organized as follows.
\cref{sec:execution-bounded-crns} defines execution boundedness (\Cref{def:execution-bounded}). 
We introduce alternative characterizations of the class for use in later proofs, such as the lack of self-covering execution paths.
\cref{sec:positive-results-sets} and \ref{sec:positive-results-fuctions} contain the main positive results of the paper and provide the concrete constructions used to decide semilinear sets and functions using execution bounded CRNs whose initial configurations contain a single leader.
\cref{sec:limitations} discusses the limitations of execution bounded CRNs, introducing the concept of a ``linear potential function'' as a core characterization of these systems.
We demonstrate that entirely execution bounded CRNs that are leaderless and non-collapsing (such as all population protocols),
can only stably decide trivial semilinear predicates: the \emph{eventually constant} predicates
(\Cref{defn:eventually_constant_predicate}).

\section{Preliminaries}

We use established notation from \cite{Chen2012DeterministicFunction, Doty2013Leaderless} and stable computation definitions from \cite{angluin2007computational} for (discrete) chemical reaction networks.

\subsection{Notation}

Let $\N$ denote the nonnegative integers. For any finite set $\Lambda$, we write $\N^\Lambda$ to mean the set of functions $f: \Lambda \rightarrow \N$. 
Equivalently, $\N^\Lambda$ can be interpreted as the set of vectors indexed by the elements of $\Lambda$, and so $\vc \in \N^\Lambda$ specifies nonnegative integer counts for all elements of $\Lambda$.
$\vc(i)$ denotes the $i$-th coordinate of $\vc$, and if $\vc$ is indexed by elements of $\Lambda$, then $\vc(Y)$ denotes the count of species $Y \in \Lambda$.
We sometimes use multiset notation for such vectors, e.g., $\{A, 3C\}$ for the vector $(1,0,3)$, assuming there are three species $A,B,C$.
If $\Sigma \subseteq \Lambda$, then $\vec{i} \upharpoonright \Sigma$ denotes restriction of $\vec{i}$ to $\Sigma$.

For two vectors $\vx,\vy \in \R^k$,
we write 
$\vx \geqq \vy$ to denote that $\vx(i) \geq \vy(i)$ for all $1 \leq i \leq k$,
$\vx \geq \vy$ to denote that $\vx \geqq \vy$ but $\vx \neq \vy$,
and $\vx > \vy$ to denote that $\vx(i) > \vy(i)$ for all $1 \leq i \leq k$.
In the case that $\vy = \vec{0}$,
we say that $\vx$ is 
\emph{nonnegative},
\emph{semipositive},
and \emph{positive},
respectively.
Similarly define $\leqq,\leq,<$.

For a matrix or vector $\vx$,
define $\| \vx \| = \| \vx \|_1 = \sum_{i} |\vx(i)|$,
$i$ ranges over all the entries of $\vx$.

\subsection{Chemical Reaction Networks}

A \emph{chemical reaction network} (CRN) is a pair $\C=(\Lambda, R)$, where $\Lambda$ is a finite set of chemical \emph{species}, and $R$ is a finite set of reactions over $\Lambda$, where each \emph{reaction} is a pair $(\vr,\vp) \in \N^\Lambda \times \N^\Lambda$ indicating the \emph{reactants} $\vr$ and \emph{products} $\vp$. 
A \emph{population protocol}~\cite{angluin2004computation} is a CRN in which all reactions $(\vr,\vp)$ obey $\| \vr \| = \| \vp \| = 2$.
(Note that CRNs, including population protocols, do not assume any underlying ``communication graph'' and model a well-mixed system in which each equal-sized of molecules is as likely to collide and react as any other.)
We write reactions such as $A+2B \rxn A+3C$ to represent the reaction $(\{A,2B\}, \{A,3C\})$.
A \emph{configuration} $\vec{c} \in \N^{\Lambda}$ of a CRN assigns integer counts to every species $S \in \Lambda$. 
When convenient, we use the notation $\left\{n_1 S_1, n_2 S_2, \ldots, n_k S_k\right\}$ to describe a configuration $\vc$ with $n_i \in \N$ copies of species $S_i$, i.e., $\vc(S_i) = n_i$, 
and any species that is not listed is assumed to have a zero count. 
If some configuration $\vc$ is understood from context, for a species $S$, we write $\# S$ to denote $\vc(S).$
A reaction $(\vr, \vp)$ is said to be \emph{applicable} in configuration $\vc$ if $\vr \leqq \vc$. 
If the reaction $(\vr, \vp)$ is applicable,
applying it results in configuration $\vc' = \vc - \vr + \vp$, and we write $\vc \rightarrow \vc'$.

An \emph{execution} $\E$ is a finite or infinite sequence of one or more configurations $\E=(\vec{c}_0, \vec{c}_1, \vec{c}_2, \ldots)$ such that, for all $i \in\{1, \ldots,|\E|-1\}, \vec{c}_{i-1} \rightarrow \vec{c}_i$ and $\vec{c}_{i-1} \neq \vec{c}_i$.
$\vec{x} \reach_P \vec{y}$ denotes that $P$ is finite, starts at $\vec{x}$, and ends at $\vec{y}$.
In this case we say $\vy$ is \emph{reachable} from $\vx$.
Let $\reachset{\vx} = \{ \vy \mid \vx \reach \vy\}$.
Note that the reachability relation is \emph{additive}:
if $\vx \reach \vy$, then for all $\vc \in \N^\Lambda$,
$\vx + \vc \reach \vy + \vc$.

For a CRN $\calC=(\Lambda,R)$ where $|\Lambda|=n$ and $|R|=m$,
define the $n \times m$ \emph{stoichiometric matrix} $\vM$ of $\calC$ as follows.
The species are ordered $S_1,\dots,S_n$, and the reactions are ordered $(\vr_1,\vp_1),\dots,(\vr_m,\vp_m)$,
and $\vM_{ij} = \vp_j(S_i) - \vr_j(S_i)$.
In other words, $\vM_{ij}$ is the net amount of $S_i$ produced when executing the $j$'th reaction.
For instance, if the CRN has two reactions $S_1 \rxn S_2 + 2S_3$ and $3S_2 + S_3 \rxn S_1 + S_2 + S_3$,
then 
\opt{full}{
\[
    \vM = 
\begin{pmatrix}
-1 & 1 \\
1  & -2 \\
2  & 0
\end{pmatrix}.
\]
}
\opt{submission,final}{
$
    \vM = 
\begin{pmatrix}
-1 & 1 \\
1  & -2 \\
2  & 0
\end{pmatrix}.
$
}

\begin{remark}
Let $\vu \in \N^R$.
Then the vector $\vM \vu \in \Z^\Lambda$ represents the change in species counts that results from applying reactions by amounts described in $\vu$. 
In the above example, if $\vu = (2,1)$,
then $\vM \vu = (-1,0,4)$,
meaning that executing the first reaction twice ($\vu(1)=2$)
and the second reaction once ($\vu(2)=1$) causes $S_1$ to decrease by 1, $S_2$ to stay the same, and $S_3$ to increase by 4.
\end{remark}

\subsection{Stable computation with CRNs}

To capture the result of computations done by a CRN, we generalize the definitions to include information about how to interpret the final configuration after letting the CRN run until the result cannot change anymore (characterized below as \emph{stable computation}). Computation primarily involves two classes of functions: 1. evaluating predicates $\phi:\N^k \to \{0,1\}$ to determine properties of the input, and 2. executing general functions that map an input configuration to an output, denoted as \( f: \N^k \rightarrow \N \). 

The definitions below reference \emph{input species} $\Sigma \subseteq \Lambda$ and an \emph{initial context} $\vs \in \N^{\Lambda \setminus \Sigma}$.
If $\vs = \vec{0}$ we say that CRN is \emph{leaderless}.
The initial context may be any constant multiset of species, though in practice it tends to be a single ``leader'' molecule.
Furthermore, other initial contexts such as $\{ 2A, 3B \}$ could be produced from a single leader $L$ via a reaction $L \to 2A + 3B$, so we may assume without loss of generality that the initial context, if it is nonzero, is simply a single leader.
In both cases, we say $\vi \in \N^\Lambda$ is a \emph{valid initial configuration} if $\vi = \vs + \vx$, where $\vx(S) = 0$ for all $S \in \Lambda \setminus \Sigma$;
i.e., $\vi$ is the initial context plus only input species.

A \emph{chemical reaction decider} (CRD) is a tuple $\D=(\Lambda, R, \Sigma, \Upsilon_1,\Upsilon_0, \vec{s})$, where $(\Lambda, R)$ is a CRN, $\Sigma \subseteq \Lambda$ is the set of \emph{input species}, $\Upsilon_1 \subseteq \Lambda$ is the set of \emph{yes voters}, and $\Upsilon_0 \subseteq \Lambda$ is the set of \emph{no voters}, such that $\Upsilon_1 \cap \Upsilon_0 = \emptyset$, and $\vec{s} \in \N^{\Lambda \backslash \Sigma}$ is the \emph{initial context}. 
If $\Upsilon_1 \cup \Upsilon_0 = \Lambda$, we say the CRD is \emph{all-voting}.
We define a global output partial function $\Phi: \N^{\Lambda} \dashrightarrow \{0,1\}$ as follows. $\Phi(\vec{c})$ is undefined if either $\vec{c}=\vec{0}$, 
or if there exist $S_0 \in \Upsilon_0$ and $S_1 \in \Upsilon_1$ such that $\vec{c}\left(S_0\right)>0$ and $\vec{c}\left(S_1\right)>0$.
In other words, we require a unanimous vote as our output. 
We say $\vc$ is \emph{stable} if, for all $\vc'$ such that $\vc \reach \vc'$,
$\Phi(\vc) = \Phi(\vc').$
We say a CRD $\mathcal{D}$ \emph{stably decides} the predicate $\psi: \N^{\Sigma} \rightarrow\{0,1\}$ if, 
for any valid initial configuration $\vi \in \N^{\Lambda}$, letting $\vi_0 = \vi \upharpoonright \Sigma$, 
for all configurations $\vc \in \N^{\Lambda}, 
\vi \reach \vc$ implies $\vc \reach \vc^{\prime}$ such that $\vc^{\prime}$ is stable and $\Phi\left(\vc^{\prime}\right)=\psi\left(\vi_0\right)$. 
We associate to a predicate $\psi$ the set $A = \psi^{-1}(1)$ of inputs on which $\psi$ outputs 1,
so we can equivalently say the CRD \emph{stably decides} the set $A.$

A chemical reaction computer $(C R C)$ is a tuple $\C=(\Lambda, R, \Sigma, Y, \vec{s})$, where $(\Lambda, R)$ is a CRN, 
$\Sigma \subset \Lambda$ is the set of \emph{input species}, 
$Y \in \Lambda \backslash \Sigma$ is the \emph{output species}, and $\vec{s} \in \N^{\Lambda \backslash \Sigma}$ is the \emph{initial context}.
A configuration $\vec{o} \in \N^{\Lambda}$ is \emph{stable} if, for every $\vec{c}$ such that $\vec{o} \reach \vec{c}, \vec{o}(Y) = \vec{c}(Y)$, i.e. the output can never change again. 
We say that $\C$ \emph{stably computes} a function $f: \N^k \rightarrow \N$ if for any valid initial configuration $\vec{i} \in \N^{\Sigma}$ and any $\vec{c} \in \N^{\Lambda}, \vec{i} \reach \vec{c}$ implies
$\vec{c} \reach \vec{o}$ such that $\vec{o}$ is stable and $f(\vec{i} \upharpoonright \Sigma)=\vec{o}(Y)$.

\subsection{Time model}

The following model of stochastic chemical kinetics is widely used in quantitative biology and other fields dealing with chemical reactions between species present in small counts~\cite{Gillespie77}.
It ascribes probabilities to execution sequences, and also defines the time of reactions, allowing us to study the computational complexity of the CRN computation in \Cref{sec:positive-results-sets,sec:positive-results-fuctions}.
If the volume is defined to be the total number of molecules,
then the time model is essentially equivalent to the notion of \emph{parallel time} studied in population protocols~\cite{AngluinAE2008Fast}.
In this paper, the rate constants of all reactions are $1$, and we define the kinetic model with this assumption.
A reaction is \emph{unimolecular} if it has one reactant and \emph{bimolecular} if it has two reactants.
We use no higher-order reactions in this paper.

The kinetics of a CRN is described by a continuous-time Markov process as follows.
Given a fixed volume $v > 0$,
the \emph{propensity} of a unimolecular reaction $\alpha : X \to \ldots$ in configuration $\vc$ is $\rho(\vc, \alpha) = \vc(X)$.
The propensity of a bimolecular reaction $\alpha : X + Y \to \ldots$, where $X \neq Y$, is $\rho(\vc, \alpha) = \frac{\vc(X) \vc(Y)}{v}$.
The propensity of a bimolecular reaction $\alpha : X + X \to \ldots$ is $\rho(\vc, \alpha) = \frac{1}{2} \frac{\vc(X) (\vc(X) - 1)}{v}$.
The propensity function determines the evolution of the system as follows.
The time until the next reaction occurs is an exponential random variable with rate $\rho(\vc) = \sum_{\alpha \in R} \rho(\vc,\alpha)$ (note that $\rho(\vc)=0$ if no reactions are applicable to $\vc$).
The probability that next reaction will be a particular $\alpha_{\text{next}}$ is $\frac{\rho(\vc,\alpha_{\text{next}})}{\rho(\vc)}$.

The kinetic model is based on the physical assumption of well-mixedness that is valid in a dilute solution.
Thus, we assume the \emph{finite density constraint}, which stipulates that a volume required to execute a CRN must be proportional to the maximum molecular count obtained during execution~\cite{SolCooWinBru08}.
In other words, the total concentration (molecular count per volume) is bounded.
This realistically constrains the speed of the computation achievable by CRNs.

For a CRD or CRC stably computing a predicate/function,
the \emph{stabilization time} is the function $t:\N \to \N$ defined for all $n \in \N$ as $t(n)=$ the worst-case expected time to reach from any valid initial configuration of size $n$ to a stable configuration.

\subsection{Semilinear sets, predicates, functions}

\begin{definition}
    A set $L \subseteq \N^d$ is \emph{linear} if there are vectors $\vb,\vp_1,\dots,\vp_k$ such that
    $L = \{ \vb + n_1 \vp_1 + \dots + n_k \vp_k \mid n_1,\dots,n_k \in \N \}$.
    A set is \emph{semilinear} if it is a finite union of linear sets.
    A predicate $\phi:\N^d \to \{0,1\}$ is \emph{semilinear} if the set $\phi^{-1}(1)$ is semilinear.
    A function $f: \N^d \to \N$ is \emph{semilinear} if its \emph{graph}
    $\{ (\vx,y) \in \N^{d+1} \mid f(\vx) = y \}$
    is semilinear.
\end{definition}

The following is a known characterization of the computational power of CRNs~\cite{angluin2007computational,chen2023rate}.

\begin{theorem}
[\cite{angluin2007computational,chen2023rate}]
    A predicate/function is stably computable by a CRD/CRC if and only if it is semilinear.
\end{theorem}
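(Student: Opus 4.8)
The plan is to prove both directions separately, and within each to handle predicates first and then reduce functions to the predicate case. I would begin with the easier ``semilinear $\reach$ stably computable'' direction for predicates. Using the classical equivalence between semilinear sets and Presburger-definable sets, every semilinear predicate is a finite Boolean combination of \emph{threshold} predicates of the form $\sum_i a_i x_i \geq b$ and \emph{modular} predicates $\sum_i a_i x_i \equiv b \pmod{m}$. I would first exhibit small CRDs stably deciding each atomic predicate: a threshold predicate is decided by having species carry signed contributions $a_i$ that cancel pairwise (truncated subtraction), leaving a residual sign that the voters read off; a modular predicate is decided by a cyclic ``counter'' species whose residue mod $m$ encodes the running sum. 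Then I would show the class of stably decidable predicates is closed under the Boolean operations: negation simply swaps $\Upsilon_1$ and $\Upsilon_0$, while conjunction and disjunction are handled by duplicating the input into two disjoint copies (via a leader reaction $X_i \to X_i' + X_i''$) so that two CRDs run in parallel, then combining their votes with a product rule.

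For functions, I would invoke the fact that a semilinear function has a semilinear graph that can be presented so that $\N^d$ is partitioned into finitely many semilinear domains, on each of which $f$ agrees with an affine map $\vx \mapsto \vc\cdot\vx + c_0$. The CRC first decides which domain the input lies in (a predicate computation, handled above), and then computes the corresponding affine value using unimolecular production for the constant and positive coefficients and truncated subtraction for negative coefficients, with the leader coordinating which affine branch governs the output species $Y$.

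The harder direction is ``stably computable $\reach$ semilinear.'' For predicates I would follow a pumping argument. The two essential tools are Dickson's lemma, which makes the componentwise order on $\N^\Lambda$ a well-quasi-order, and the additivity of reachability noted in the preliminaries ($\vx \reach \vy$ implies $\vx + \vc \reach \vy + \vc$). The goal is to show the yes-set $A = \psi^{-1}(1)$ satisfies a periodicity property forcing semilinearity: if an input configuration stabilizes to output $b$, then by extracting a repeated (self-covering) configuration along a stabilizing execution and pumping that cycle, one shows that $A$ is invariant under adding sufficiently large multiples of a finite set of period vectors. Combined with a bound, via well-quasi-ordering, on the finitely many ``base'' behaviors, this expresses $A$ as a finite union of linear sets, i.e.\ semilinear. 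For functions I would reduce to the predicate case by augmenting the CRC with a guessed output value $y$ as an extra input and observing that the resulting decider for the graph relation $\{(\vx,y) : f(\vx)=y\}$ is stably decidable, hence its graph is semilinear by the predicate result.

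I expect the main obstacle to be the reverse direction's pumping lemma. Unlike in finite automata, reachability in CRNs (equivalently in vector addition systems) does not a priori yield a semilinear reachable set, so the delicate part is extracting, from the stability requirement together with Dickson's lemma, enough uniform periodicity to conclude semilinearity while carefully controlling the interaction between the unbounded input directions and the finitely many species that carry the vote. Managing these pumping constructions so that they simultaneously preserve reachability and the unanimous-vote output condition encoded by $\Phi$ is where the real work lies.
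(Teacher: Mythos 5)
First, note that the paper does not prove this theorem at all: it is imported as a known result, cited to Angluin et al.\ and to Chen--Doty--Soloveichik (via \cite{angluin2007computational,chen2023rate}), so there is no in-paper proof to compare your attempt against. What can be said is that your forward direction (semilinear $\Rightarrow$ stably computable) is essentially the standard construction from that literature, and indeed it is the same decomposition the paper itself re-engineers in its Sections 4--5 for the execution-bounded setting: atomic CRDs for threshold and mod predicates, closure under Boolean operations by duplicating the input and combining votes, and for functions a partition of $\N^d$ into semilinear domains on which $f$ is affine, with a predicate computation selecting the branch. That half of your plan is sound and would go through.

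The reverse direction is where your proposal has a genuine gap rather than merely a different route. ``Extract a self-covering configuration along a stabilizing execution and pump that cycle'' does not, by itself, yield semilinearity of $A=\psi^{-1}(1)$. Two things are missing. First, pumping a cycle found on one execution changes the configuration you arrive at, and you must separately argue that the pumped configuration still \emph{stabilizes to the same output}; additivity of reachability gives you that the pumped configuration is reachable, but stability is a universally quantified condition over everything reachable from it, and that is not preserved by adding molecules in general. Second, semilinearity demands a \emph{uniform} finite family of base points and period vectors covering all of $A$, which cannot be obtained from pumping individual executions; it requires a global well-quasi-order argument over the input space. The actual proof of Angluin, Aspnes, and Eisenstat proceeds quite differently: it first shows that the set of output-$b$-stable \emph{configurations} is semilinear (via a truncation lemma resting on Dickson's/Higman's lemma, showing stability is determined by the configuration capped at a uniform constant), and only then, in a second and separate argument, that the set of \emph{inputs} from which only $b$-stable configurations are stably reachable is semilinear. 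Your sketch collapses these two stages into one pumping step, and your closing admission that this is ``where the real work lies'' is accurate: as written, that direction is a statement of intent rather than a proof. The function-to-predicate reduction (guess $y$, decide the graph relation) is the right idea and is how Chen--Doty--Soloveichik proceed, but it too needs an explicit comparison gadget turning the CRC into a CRD whose vote stabilizes.
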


\begin{definition}
    $T \subseteq \N^d$ is a \emph{threshold} set is if there are constants $c,w_1,\dots,w_d \in \Z$ such that
    $T = \{ \vx \in \N^d \mid w_1 \vx(1) + \dots + w_d \vx(d) \leq c \}.$
    $M \subseteq \N^d$ is a \emph{mod} set if there are constants $c,m,w_1,\dots,w_d \in \N$ such that
    $M = \{ \vx \in \N^d \mid w_1 \vx(1) + \dots + w_d \vx(d) \equiv c \mod m \}.$
    \label{defn:threshold_and_mod_set}
\end{definition}

The following well-known characterization of semilinear sets is useful.

\begin{theorem}
[\cite{Ginsburg1966Semigroups}]
\label{thm:semilinear-Boolean-combination-threshold-mod}
    A set is \emph{semilinear} if and only if it is a Boolean combination (union, intersection, complement) of threshold and mod sets.
\end{theorem}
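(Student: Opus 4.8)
The plan is to prove the two directions separately, routing the hard direction through the identification of semilinear sets with sets definable in Presburger arithmetic.

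For the easy direction --- that every Boolean combination of threshold and mod sets is semilinear --- I would first verify that threshold and mod sets are themselves semilinear, then appeal to closure of the semilinear sets under Boolean operations. A mod set is immediately semilinear: fixing each admissible residue pattern of the coordinates modulo $m$ exhibits it as a finite union of cosets of the lattice $(m\N)^d$, each of which is linear. A threshold set is the set of nonnegative integer points satisfying a single linear inequality, and its semilinearity follows from the general fact that the nonnegative integer solutions of any finite system of linear Diophantine inequalities form a semilinear set (the integer points of a rational polyhedron decompose into finitely many translates of the finitely generated integer points of its recession cone). Union-closure is trivial from the definition; the substantive input here is closure under intersection and complement, which is the Ginsburg--Spanier theorem.

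For the hard direction --- that every semilinear set is such a Boolean combination --- I would first observe that semilinear sets coincide with the subsets of $\N^d$ definable by first-order formulas over $(\N,+,<)$: a linear set $\{\vb + \sum_i n_i \vp_i\}$ is the projection of the solution set of the conjunction of linear equalities $\vx = \vb + \sum_i n_i \vp_i$, an existential Presburger formula, and finite unions correspond to disjunctions. The key step is then quantifier elimination for Presburger arithmetic carried out in the language augmented with the congruence predicates $\equiv_m$ and the order $<$: every formula becomes equivalent to a quantifier-free one whose atomic formulas are precisely linear (in)equalities and congruences, that is, finite Boolean combinations of threshold and mod sets. (An equality $w_1 x_1 + \cdots + w_d x_d = c$, for instance, is the intersection of two threshold sets.)

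The main obstacle is eliminating the existentially quantified multipliers $n_1,\dots,n_k$ in the definition of a linear set. Projecting an integer solution set of linear equalities along these variables cannot in general be captured by inequalities alone, because it forces membership in a sublattice --- and this is exactly what the mod sets encode. Concretely, eliminating a variable $y$ from a constraint of the form $a y = t(\vx)$ requires both bounding $t(\vx)$ and recording the divisibility condition $a \mid t(\vx)$; the systematic bookkeeping of these induced congruences, which is the heart of the Presburger quantifier-elimination procedure, is what makes the class of threshold/mod Boolean combinations closed under integer projection and hence constitutes the technical core of the argument. An alternative to invoking Presburger elimination would be a direct semigroup-theoretic analysis of the lattice generated by the period vectors $\vp_i$, but it amounts to the same lattice/congruence bookkeeping.
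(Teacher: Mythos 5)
The paper does not prove this statement at all; it is quoted as a classical result with a citation to Ginsburg's book, so there is no in-paper argument to compare against. Judged on its own, your outline is the standard and correct route. The easy direction is fine: a mod set is a finite union of the linear sets $\vb + m\N\vu_1 + \dots + m\N\vu_d$ over the admissible residue patterns $\vb \in \{0,\dots,m-1\}^d$, a threshold set is the set of lattice points of a rational half-space intersected with the nonnegative orthant (hence semilinear), and the Boolean closure of semilinear sets is a legitimately separate theorem of Ginsburg--Spanier that you may invoke. The hard direction via Presburger definability plus quantifier elimination in the language extended by the congruence predicates $\equiv_m$ is exactly how this equivalence is classically established, and you correctly identify where the real work lives: eliminating the multipliers $n_1,\dots,n_k$ forces divisibility conditions that only the mod sets can express, which is why inequalities alone do not suffice. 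Two small details you should be aware of if you were to write this out in full: first, the quantifier-free atoms produced by elimination may have congruences with negative coefficients, which must be normalized to the paper's definition of a mod set (coefficients in $\N$) by adding multiples of $m$; second, the claim that linear sets are existentially definable and that elimination terminates with only threshold/congruence atoms is the entire content of Presburger's theorem, so your proof is really a reduction to that result rather than a self-contained argument --- which is acceptable, since the paper itself treats the statement as imported background.
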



\section{Execution bounded chemical reaction networks}
\label{sec:execution-bounded-crns}

In this section, we define execution bounded CRNs and state 
\opt{submission,final}{an alternate characterization}
\opt{full}{some alternate characterizations}
of the definition.


\begin{definition}
    A CRN $\C$ is \emph{execution bounded from configuration $\vec{x}$} if all executions $\E = (\vec{x}, \ldots )$ starting at $\vec{x}$ are finite. 
    A CRD or CRC $\calC$ is \emph{execution bounded} if it is execution bounded from every valid initial configuration.
    $\calC$ is \emph{entirely execution bounded} if it is execution bounded from every configuration.
    \label{def:execution-bounded}
\end{definition}

This is a distinct concept from the notion of ``bounded'' CRNs studied by Rackoff~\cite{Rackoff1978CoveringBoundedness} (studied under the equivalent formalism of vector addition systems).
That paper defines a CRN to be \emph{bounded} from a configuration $\vx$ if $|\reachset{\vx}|$ is finite (and shows that the decision problem of determining whether this is true is $\mathsf{EXPSPACE}$-complete.)
We use the term \emph{execution bounded} to avoid confusion with this concept.

\begin{toappendix}

We first observe that being execution bounded from $\vx$ implies a slightly stronger condition:
there is a uniform upper bound on the length of \emph{all} executions from $\vec{x}$.\footnote{
    In other words, this rules out the possibility that, although all executions from $\vec{x}$ are finite, there are infinitely many of them $\E_1,\E_2,\dots$, each finite but longer than the previous.
}

\begin{observation}
\label{obs:finite_global_bound_on_execution_length}
    A CRN is execution bounded from $\vec{x}$ if and only if there is a constant $N \in \N$ such that all executions from $\vec{x}$ have length at most $N$.
    Equivalently, there are finitely many executions from $\vec{x}$, all of which are finite.
\end{observation}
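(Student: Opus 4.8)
The plan is to prove the two nontrivial implications via a tree-of-executions argument built on K\"onig's Lemma; the remaining equivalences are immediate. Taking the easy directions first: if some constant $N$ bounds the length of every execution from $\vx$, then in particular every such execution is finite, so $\C$ is execution bounded from $\vx$; and a uniform length bound trivially yields finitely many executions, each being a sequence of bounded length drawn at each step from a finite set of successors. Conversely, finitely many executions that are all finite immediately give a uniform bound, by taking the maximum of their (finitely many) lengths. Thus the whole statement reduces to showing that \emph{execution boundedness from $\vx$ implies a uniform length bound $N$}.

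For this direction, I would organize all finite executions from $\vx$ into a tree $T$: a node is a finite execution prefix $(\vx, \vc_1, \dots, \vc_k)$, the root is the one-configuration execution $(\vx)$, and the children of $(\vx, \dots, \vc_k)$ are exactly the prefixes $(\vx, \dots, \vc_k, \vc_{k+1})$ with $\vc_k \to \vc_{k+1}$ and $\vc_k \neq \vc_{k+1}$. Under this ordering by prefix, the branches of $T$ (maximal or infinite paths from the root) are in bijection with the executions starting at $\vx$; in particular, an \emph{infinite} path in $T$ corresponds exactly to an \emph{infinite} execution from $\vx$.

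The key structural fact is that $T$ is \emph{finitely branching}. Each child of a node $(\vx, \dots, \vc_k)$ is obtained by applying one applicable reaction $(\vr, \vp) \in R$ to $\vc_k$, producing $\vc_{k+1} = \vc_k - \vr + \vp$; since $R$ is finite, $\vc_k$ has at most $|R|$ distinct successors, so every node has at most $|R|$ children. (Here it is essential that $R$ is finite; the reachable configuration set $\reachset{\vx}$ may well be infinite, but that is irrelevant to the branching degree.) Now invoke K\"onig's Lemma in contrapositive form: a finitely branching tree with no infinite path is finite. Since $\C$ is execution bounded from $\vx$, there is no infinite execution, hence no infinite path in $T$, so $T$ is finite. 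A finite tree has finite depth; letting $N$ be its depth (the length of the longest branch) gives the desired uniform bound on execution length, and its finitely many nodes give finitely many executions.

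The main obstacle---really the only subtle point---is setting up the tree so that K\"onig's Lemma applies cleanly. One must resist indexing tree nodes by configurations themselves, since a configuration can recur along a single execution and distinct executions can pass through the same configuration; instead one indexes by execution prefixes, so that the prefix order is genuinely a tree and paths correspond bijectively to executions. With that bookkeeping in place, finiteness of $R$ delivers finite branching and K\"onig's Lemma does the rest.
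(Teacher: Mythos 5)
Your proof is correct and follows essentially the same route as the paper: both build the tree of execution prefixes rooted at $\vec{x}$ (indexed by prefixes rather than configurations, for exactly the reason you note), observe it is finitely branching since $R$ is finite, and apply K\H{o}nig's Lemma to conclude the tree is finite, yielding the uniform bound $N$ as its depth.
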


\begin{proof}
    We use K\H{o}nig's lemma to show that in the absence of an infinite path, the number of all possible paths must be finite, which directly implies a global bound on the length of \emph{all} executions. We represent the set of all executions for $\C$ as a tree where each edge represents a single reaction applied and each node stores the complete execution sequence starting from configuration \vec{x}. Note that this construction is slightly different from a more straightforward graph with the reachable states as nodes, which would not give us a tree, since the same state can be reached by different executions. Formally, we generate the tree as follows: $T_\C(\vec{x}) = (V, E)$ where $V \triangleq \{ \E \in \{\N^\Lambda \}^* \mid \E \text{ is a valid execution sequence starting from }\vec{x}\}$, $E \triangleq \{ (\E_1, \E_2 ) \mid \E_1 \preceq \E_2 \land |\E_2| = |\E_1| + 1$\}.
    In other words, all the executions from $\vec{x}$ of length $d$ are the nodes at depth $d$ of this tree.
    One can think of the nodes as being labeled by configurations rather than executions (specifically the final configuration of the execution, with the tree rooted at $\vec{x}$), but the same configuration can label multiple nodes if it can be reached from $\vec{x}$ via different executions.
    In this case the children of a configuration are those that are reachable from it by applying a single reaction.
    
    This tree is finitely branching, as we can only choose from a finite number of reactions at any node. By definition of execution bounded, there is no execution sequence with an infinite length.
    Due to the bijection between paths in $T_\C(\vec{x})$ and executions possible in $\C$, there is no infinite path in the tree.
    By K\H{o}nig's Lemma, the tree has a finite number of nodes, guaranteeing a single bound $N$ (the depth of the tree) on the length of every execution.
\end{proof} 

The next lemma characterizes execution boundedness as equivalent to having a finite reachable state space with no cycles.

\begin{lemma}
     A CRN is execution bounded from $\vec{x}$ if and only if $\reachset{\vx} = \{ \vec{y} \mid \vec{x} \reach \vec{y} \}$ is finite and, for all $\vy \in \reachset{\vx}$, $\vec{y} \not \reach \vec{y}$ except by the zero-length execution.
\end{lemma}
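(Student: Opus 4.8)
The plan is to prove both implications, leaning on \Cref{obs:finite_global_bound_on_execution_length} for the forward direction and on a pigeonhole argument for the reverse. Throughout, the key structural fact I will use is that executions can be concatenated: whenever $\vx \reach \vy$ and $\vy \reach \vz$, splicing the two sequences at the shared configuration $\vy$ yields a valid execution witnessing $\vx \reach \vz$, and the splice respects the requirement that consecutive configurations differ, because the second execution's first step already moves away from $\vy$.

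For the forward direction, I would assume $\C$ is execution bounded from $\vx$. \Cref{obs:finite_global_bound_on_execution_length} gives that there are only finitely many executions from $\vx$, each of finite length. Since $\reachset{\vx}$ is exactly the set of configurations appearing in some such execution, it is a finite union of finite sequences and hence finite. For the acyclicity condition, I would argue by contradiction: suppose some $\vy \in \reachset{\vx}$ admits a positive-length execution $\vy \reach \vy$. Fixing an execution $\vx \reach \vy$ and then appending the cycle $\vy \reach \vy$ to itself infinitely often produces, by the concatenation property, a single infinite execution starting at $\vx$, contradicting execution boundedness. Hence no such nontrivial cycle exists.

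For the reverse direction, I would assume $\reachset{\vx}$ is finite and that no $\vy \in \reachset{\vx}$ satisfies $\vy \reach \vy$ by a positive-length execution, and again argue by contradiction. Suppose there is an infinite execution $\vx = \vc_0 \to \vc_1 \to \vc_2 \to \cdots$. Every $\vc_i$ lies in the finite set $\reachset{\vx}$, so by pigeonhole some configuration repeats: there exist $i < j$ with $\vc_i = \vc_j$. The segment $\vc_i \to \vc_{i+1} \to \cdots \to \vc_j$ is then a positive-length (of length $j - i \geq 1$) execution from $\vc_i$ back to itself, i.e., $\vc_i \reach \vc_i$ nontrivially, contradicting the acyclicity assumption. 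Therefore every execution from $\vx$ is finite, and $\C$ is execution bounded from $\vx$.

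The only genuinely non-routine step is establishing finiteness of $\reachset{\vx}$ in the forward direction, which is where \Cref{obs:finite_global_bound_on_execution_length} (and the K\H{o}nig's-lemma argument behind it) does the real work: without it one only knows each individual execution is finite, not that there are finitely many of them, so finiteness of the reachable set would not follow. The remaining pieces---the concatenation-into-an-infinite-execution contradiction and the pigeonhole on a finite state space---are elementary, with the one point to verify being that the concatenated and truncated sequences remain valid executions under the convention that consecutive configurations differ.
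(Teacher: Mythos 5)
Your proof is correct. The forward direction is essentially the paper's argument: finiteness of $\reachset{\vx}$ follows from \Cref{obs:finite_global_bound_on_execution_length}, and a nontrivial cycle $\vy \reach \vy$ would splice into an infinite execution from $\vx$. For the reverse direction you diverge from the paper: you argue by contradiction, taking a hypothetical infinite execution and applying pigeonhole on the finite set $\reachset{\vx}$ to extract a repeated configuration, hence a forbidden positive-length cycle. The paper instead argues directly, building the acyclic reachability digraph on $\reachset{\vx}$ and bounding every execution's length by the longest path, at most $|\reachset{\vx}|-1$. Your route is more elementary and avoids the paper's slightly delicate claim of a ``bijection'' between executions and paths in that digraph; the paper's route has the minor advantage of producing an explicit uniform bound $N = |\reachset{\vx}|$ on execution length, directly matching the quantitative form of \Cref{obs:finite_global_bound_on_execution_length}, whereas you only conclude that every execution is finite (which is all \Cref{def:execution-bounded} requires). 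Your attention to the splicing technicality --- that consecutive configurations must differ --- is a point the paper glosses over.
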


\begin{proof}
    Every configuration reachable from \vec{x} is reached through some execution contained in $T_\C(\vec{x})$ as a node and there exists only a finite number of them (\Cref{obs:finite_global_bound_on_execution_length}). Multiple unique executions can produce the same configuration but one execution cannot produce multiple configurations. 
    Thus, there exists a surjection from the nodes of $T_\C(\vec{x})$ into $\reachset{\vx}$ and $\reachset{\vx}$ must also be finite. For the second part of the condition, we prove its contrapositive and assume there exists $\vec{y} \in \reachset{\vx}$, $\vec{y} \reach_P \vec{y} \land |P| > 0$. Let $P = (\vec{p}_1, \vec{p}_2, \dots, \vec{p}_n)$. It holds that $\vec{p}_1 = \vec{p}_n$ and $\vec{p}_{n-1} \reach \vec{p}_1$. We can construct an infinite-length execution $P^\prime = (\vec{x}, \dots, \vec{p}_1,\dots, \vec{p}_{n-1}, \vec{p}_1, \dots)$, which must also be a valid under the reactions of $\C$, making $\C$ execution unbounded from $\vec{x}$.

    If $\reachset{\vx}$ is finite and contains no such $\vec{y}$, then we can construct a finite, directed, acyclic graph $G_\C(\vec{x}) = (V, E)$ where $V=\reachset{\vx}$, $E=\{(\vec{x}, \vec{y}) \mid \vec{x} \reach_P \vec{y} \land |P| > 0\}$. The longest path in the graph has length of at most $|\reachset{\vx}|-1$. A bijection exists between paths in $G_\C(\vec{x})$ and executions possible in $\C$ starting from \vec{x}. We set $n = |\reachset{\vx}|$ satisfying that each execution has length of at most $n$, making $\C$ execution bounded.
\end{proof}

The following result is used frequently in impossibility proofs for CRNs and population protocols,
and it will help us prove another characterization of execution bounded CRNs in \Cref{lem:execution-bdd-iff-no-self-covering-path}.

\begin{lemma}
\emph{(Dickson's Lemma)}
    For every infinite sequence of nonnegative integer vectors $\vx_1,\vx_2,\dots \in \N^k$,
    there are $i < j$ such that
    $\vx_i \leqq \vx_j$.
\end{lemma}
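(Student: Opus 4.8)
The plan is to prove the lemma by induction on the dimension $k$, and to make the induction go through smoothly it is cleanest to establish the slightly stronger statement that every infinite sequence $\vx_1,\vx_2,\dots \in \N^k$ admits an infinite \emph{non-decreasing} subsequence $\vx_{i_1} \leqq \vx_{i_2} \leqq \cdots$ with $i_1 < i_2 < \cdots$. The lemma then follows at once by taking $i = i_1$ and $j = i_2$, since $i < j$ and $\vx_{i_1} \leqq \vx_{i_2}$. This reformulation is convenient because a monotone subsequence is exactly the object that can be fed recursively into the induction hypothesis.

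For the base case $k = 1$, I would show that every infinite sequence of naturals has an infinite non-decreasing subsequence. Call an index $i$ a \emph{peak} if $\vx_i > \vx_j$ for every $j > i$. If there were infinitely many peaks $i_1 < i_2 < \cdots$, then $\vx_{i_1} > \vx_{i_2} > \cdots$ would be an infinite strictly decreasing sequence in $\N$, contradicting the well-foundedness of $\N$. Hence there are only finitely many peaks, and beyond the last peak every index $i$ has some later index $j > i$ with $\vx_j \geqq \vx_i$. Starting just after the last peak and greedily chaining such indices yields an infinite subsequence $\vx_{i_1} \leqq \vx_{i_2} \leqq \cdots$, as required.

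For the inductive step, assume the claim in dimension $k-1$ and take an infinite sequence in $\N^k$. First I would apply the one-dimensional case to the sequence of first coordinates, extracting an infinite subsequence along which the first coordinate is non-decreasing. Restricting that subsequence to its last $k-1$ coordinates gives an infinite sequence in $\N^{k-1}$, to which the induction hypothesis applies, producing a further subsequence that is non-decreasing in coordinates $2,\dots,k$. The only point needing care — and the main thing to get right in the bookkeeping — is that passing to this sub-subsequence cannot destroy the monotonicity already secured in the first coordinate, since any subsequence of a sequence that is non-decreasing in a fixed coordinate remains non-decreasing in that coordinate. Thus the final subsequence is non-decreasing in all $k$ coordinates, completing the induction and hence the proof.
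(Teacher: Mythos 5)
Your proof is correct. Note, however, that the paper does not actually prove Dickson's Lemma: it is stated as a classical fact (it is a standard tool in impossibility arguments for CRNs and population protocols), so there is no in-paper argument to compare against. What you give is the standard proof — strengthening to the existence of an infinite componentwise non-decreasing subsequence, handling $k=1$ via the peak/monotone-subsequence argument, and inducting on the dimension while observing that passing to a further subsequence preserves monotonicity in coordinates already handled — and all the steps check out, including the key bookkeeping point you flag. The only cosmetic gap is the degenerate case where there are no peaks at all in the base case, in which one simply starts the greedy chain at the first index; this does not affect correctness.
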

\end{toappendix}

We first observe an equivalent characterization of execution bounded that will be useful in the negative results of \Cref{sec:limitations}.
\opt{submission}{A proof is in the appendix.}

\begin{definition}
    A execution $\calE = (\vx_1,\vx_2,\dots)$ is \emph{self-covering} if for some $i < j$,
    $\vx_i \leqq \vx_j$.
    It is \emph{strictly self-covering} if $\vx_i \leq \vx_j$.
    We also refer to these as (strict) self-covering \emph{paths}.\footnote{
    Rackoff~\cite{Rackoff1978CoveringBoundedness} uses the term ``self-covering'' to mean what we call \emph{strictly self-covering} here,
    and points out that Karp and Miller~\cite{karp1969parallel} showed that $|\reachset{\vx}|$ is infinite if and only if there is a strictly self-covering path from $\vx$.
    The distinction between these concepts is illustrated by the CRN $A \revrxn B$.
    From any configuration $\vx$, $\reachset{\vx}$ is finite
    ($|\reachset{\vx}| = \vx(A) + \vx(B) + 1$), and there is no strict self-covering path.
    However, from (say) $\{A\}$,
    there is a (nonstrict) self-covering path $\{A\} \reach \{B\} \reach \{A\}$, and by repeating, this CRN has an infinite cycling execution within its finite configuration space $\mathsf{reach}(\{A\}) = \{\{A\},\{B\}\}$.
    }
\end{definition}

\begin{lemmarep}
\label{lem:execution-bdd-iff-no-self-covering-path}
    A CRN is execution bounded from $\vx$ if and only if there is no self-covering path from $\vx$.
\end{lemmarep}

\begin{proof}
    For (the contrapositive of) the forward direction, assume there is a self-covering path from $\vx$, which reaches to $\vx_i$ and later to $\vx_j \geqq \vx_i$.
    By additivity the reactions leading from $\vx_i$ to $\vx_j$ can be repeated indefinitely (in a cycle if $\vx_i=\vx_j$, and increasing some molecular counts unboundedly if $\vx_i \leq \vx_j$),
    so $\calC$ is not execution bounded from $\vx$.

    For the reverse direction,
    assume $\calC$ is not execution bounded from $\vx$.
    Then there is an infinite execution $\calE = (\vx=\vx_1,\vx_2,\vx_3,\dots).$
    By Dickson's Lemma there are $i<j$ such that $\vx_i \leqq \vx_j$,
    i.e., $\calE$ is self-covering.
\end{proof}

\section{Execution bounded CRDs stably decide all semilinear sets}
\label{sec:positive-results-sets}

In this section, we will show that execution bounded CRDs have the same computational power as unrestricted CRDs.
The following is the main result of this section.

\begin{theorem}
\label{thm:semilinear_sets_decidable}
    Exactly the semilinear sets are stably decidable by execution bounded CRDs.
    Furthermore, each can be stably decided with expected stabilization time $\Theta(n \log n)$.
\end{theorem}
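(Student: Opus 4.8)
The plan is to prove both directions of the equivalence and then establish the time bound. For the easy direction, since execution bounded CRDs are a special case of CRDs, any predicate they stably decide must be semilinear by the known characterization (\Cref{thm:semilinear-Boolean-combination-threshold-mod} together with the Angluin et al.\ result cited above). So the real content is the forward direction: every semilinear set is stably decidable by an \emph{execution bounded} CRD, and moreover with the stated time bound. By \Cref{thm:semilinear-Boolean-combination-threshold-mod}, a semilinear set is a Boolean combination of threshold sets and mod sets. My strategy is therefore: (i) build execution bounded CRDs deciding a single threshold set and a single mod set; (ii) show that execution boundedness is preserved under the Boolean operations (negation, union, intersection) needed to combine them; and (iii) verify the $\Theta(n \log n)$ time bound throughout.

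For step (i), the key difficulty is that the textbook constructions (e.g.\ from \cite{angluin2007computational}) use reversible or catalytic reactions that create unbounded executions, exactly as illustrated by reactions \eqref{min_y_x3}--\eqref{reverse_min_y_x3} in the introduction. I would instead exploit the assumption that an initial leader is present and that only the leader need vote (the all-voting restriction is \emph{not} imposed here, per the paper's framing). The leader can act as a sequential controller that absorbs input signals and updates its internal state monotonically. For a threshold predicate $\sum w_i x_i \le c$, the plan is to have each input molecule $X_i$ send $w_i$ signed tokens (positive or negative depending on the sign of $w_i$) to the leader, which accumulates a running count that is driven monotonically toward a bounded register; cancellation reactions between positive and negative tokens strictly decrease the total molecule count and hence cannot cycle. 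For a mod predicate, the leader maintains a counter modulo $m$, again updated by consuming input-derived tokens exactly once each. In both cases the cleanest way to \emph{certify} execution boundedness is the linear potential function characterization of \cite{czerner2024fast} referenced in the abstract: I would assign each species a nonnegative weight so that every reaction strictly decreases the weighted total, which by \Cref{lem:execution-bdd-iff-no-self-covering-path} rules out self-covering paths and hence guarantees bounded executions.

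For step (ii), I would show that if CRDs $\calD_1$ and $\calD_2$ are execution bounded and decide sets $A_1, A_2$, then there are execution bounded CRDs deciding $\overline{A_1}$, $A_1 \cup A_2$, and $A_1 \cap A_2$. Negation is trivial (swap $\Upsilon_0$ and $\Upsilon_1$). For union and intersection, the obstacle is that a naive parallel composition lets the two subCRDs run concurrently and a downstream combiner reads their outputs, but the combiner must not fire until both subcomputations have stabilized — and a reversible ``wait'' reaction would reintroduce unbounded executions. The resolution, again, is to use the leader as a sequencer: run $\calD_1$ to completion, have it signal the leader, then launch $\calD_2$, and finally have the leader compute the Boolean combination of the two outputs in a single monotone sweep. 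Execution boundedness of the composite follows by concatenating the (bounded) potential functions of the parts with appropriate scaling, so that the combined linear potential is still strictly decreased by every reaction.

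The main obstacle I anticipate is step (ii) done carefully enough to preserve \emph{both} execution boundedness \emph{and} the $\Theta(n\log n)$ time bound simultaneously. Sequentializing the subcomputations via the leader is conceptually clean for boundedness, but a single leader interacting with $n$ molecules one at a time naively costs $\Theta(n)$ \emph{parallel} time per signal, which could blow past $\Theta(n \log n)$ if many sequential phases are stacked. So I would need the leader-driven phases to each run in $O(n \log n)$ time and to use only a constant number of phases (the Boolean combination has constant size once the semilinear set is fixed). The $\Theta(n \log n)$ upper bound on time should follow from standard epidemic/leader-election style analysis (as in \cite{AngluinAE2008Fast}): broadcasting a signal from the leader, or collecting input counts into the leader's registers via a single bimolecular interaction per input molecule, takes expected $O(n \log n)$ time under the stochastic kinetic model, and the matching $\Omega(n \log n)$ lower bound comes from the coupon-collector cost of the last few reactions. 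I would present the construction modularly so that each gadget's correctness, execution boundedness (via an explicit linear potential function), and $O(n\log n)$ running time are verified independently, then assembled.
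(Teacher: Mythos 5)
Your overall architecture (reduce to threshold and mod sets via \Cref{thm:semilinear-Boolean-combination-threshold-mod}, build leader-driven gadgets for each, then close under Boolean operations) matches the paper, and your step (i) constructions are essentially the paper's \Cref{lem:mod_sets_decidable,lem:threshold_sets_decidable}. The gap is in step (ii). The sequencing mechanism ``run $\calD_1$ to completion, have it signal the leader, then launch $\calD_2$'' is not implementable in the stable computation model: a CRN cannot detect the absence of applicable reactions, so no molecule can ever certify that $\calD_1$ has stabilized, and a finite-state leader cannot count the unbounded number of input molecules remaining to be processed. This is exactly the difficulty that the reversible catalytic reactions \eqref{min_y_x3}--\eqref{reverse_min_y_x3} in the introduction were designed to work around. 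The paper instead composes the sub-CRDs \emph{in parallel} (\Cref{lem:boolean-closure-crd}): the input is duplicated via $X_i \to X_{i,1}+X_{i,2}$, both sub-CRDs run concurrently, and their (possibly premature) votes are recorded in a four-state combiner species by reactions catalyzed by each sub-CRD's voter. These recording reactions may fire ``too early'' and be corrected later, but the total number of corrections is bounded because each sub-CRD is \emph{single-voting} and its lone voter flips its vote at most $\|\vec{i}\|$ times.

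Your fallback plan---certifying execution boundedness of the composite by a combined linear potential function---also cannot succeed. The vote-recording reactions must exist in both directions (e.g.\ $Y_1 + V_{NN} \to Y_1 + V_{YN}$ for the \emph{yes} voter of $\calD_1$ and $N_1 + V_{YN} \to N_1 + V_{NN}$ for its \emph{no} voter), which forces $v_{V_{NN}} > v_{V_{YN}}$ and $v_{V_{YN}} > v_{V_{NN}}$ simultaneously, a contradiction. Consistently with \Cref{thm:potential-function-iff-entirely-execution-bounded}, the composed CRD is \emph{not} entirely execution bounded: if both $Y_1$ and $N_1$ were present at once (an unreachable situation from a valid single-leader initial configuration, but a legal configuration), the recording reactions would cycle forever, as the paper points out in \Cref{sec:limitations}. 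So the composite is execution bounded only from valid initial configurations, and that must be argued combinatorially via the single-voting invariant rather than via a potential function. Your mod and threshold gadgets do admit the trivial potential (every reaction strictly decreases total molecule count), so the certification idea is fine there, but it breaks at the composition step where it is most needed.
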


Since semilinear sets are Boolean combinations of mod and threshold predicates, we prove this theorem by showing that execution bounded CRDs can decide mod and threshold sets individually as well as any Boolean combination in the following lemmas. To ensure execution boundedness in the last step, we require the following property.

\begin{definition}
    Let $\calD$ be a CRD with voting species $\Upsilon$.
    We say $\calD$ is \emph{single-voting} if for any valid initial configuration $\vi \in \N^\Sigma$ and any $\vc \in \N^\Lambda$ s.t. $\vi \reach \vc$, 
    $\sum_{V \in \Upsilon} \vc(V) = 1$,
    i.e., exactly one voter is present in every reachable configuration.
\end{definition}



\opt{submission,final}{
\Cref{lem:mod_sets_decidable,lem:threshold_sets_decidable} are proven in
\opt{submission}{the appendix}.
\opt{final}{the full version of this paper.}
}

\begin{lemmarep}
\label{lem:mod_sets_decidable}
    Every mod set $M=\big\{\left(x_1, \ldots, x_d\right) \mid \sum_{i=1}^d w_i x_i \equiv c \bmod m\big\}$ is stably decidable by an execution bounded, single-voting CRD with expected stabilization time $\Theta(n\log n)$.
\end{lemmarep}

We design a CRD $\D$ with exactly one leader present at all times, cycling through $m$ ``states'' while consuming the input and accepting on state $c$.
Let $\Sigma = \{X_1, \dots, X_d\}$ be the set of input species and start with only one $L_0$ leader, i.e. set the initial context $\vec{s}(L_0)=1$ and $\vec{s}(S)=0$ for all other species. For each $i \in \{1, \dots, d\}, j \in \{0, \dots ,m-1\}$ add the following reaction:
$
    X_i + L_j \rightarrow L_{j + w_i \bmod m}.
$
Let only $L_c$ vote \emph{yes} and all other species \emph{no}, i.e. $\Upsilon = \{L_c\}$. For any valid initial configuration, $\D$ reaches a stable configuration which votes \emph{yes} if and only if the input is in the mod set, and \emph{no} otherwise. \opt{submission}{The  time and execution boundedness are proven in the appendix.}

\begin{proof}
$\calD$ terminates with the correct output value: At any point in time, there is a single leader $L_j$ present (the initial configuration contains a single leader and each reaction produces and consumes one). Every reaction satisfies the following invariant (for the leader's subscript $j$): $j \equiv \sum_{i=1}^d w_i x_i' \bmod m$ where $x_i'$ is the updated count of species $X_i$ in the current configuration. By design of $\D$, there will be a reaction applicable as long as there are copies of $X_i$ (a leader with any subscript can react with any $X_i$). After applying this reaction as often as possible, we have reached a stable configuration with $L_{\sum_{i=1}^{d} w_i x_i \bmod m}$ as the only species present.

$\calD$ is execution bounded: Every reaction reduces the count of chemicals by one. Every possible execution contains exactly $\|\vec{i}\|$ configurations, where $\|\vec{i}\|$ is the number of all molecules in the starting configuration.

$\calD$ is single-voting: Initially, $L_0$ is present and the only voter. Every valid input contains no voter and every reaction results in no change to the count of copies of $L_i$. 

$\calD$ stabilizes in $\Theta(n \log n)$ time: We start with $\#L = 1, \#X = n$ in volume $n$. $n$ reactions must occur before $\D$ terminates. For the first reaction, we have a rate of $\lambda=\frac{n \cdot 1}{n}$, for the last (with only the leader and one $X$ present), our rate will be $\lambda = \frac{1 \cdot 1}{n}$. Thus, the expected time for all $n$ reactions to complete is 
\[
\sum_{i=1}^{n} \frac{n}{i} = n \sum_{i=1}^{n} \frac{1}{i} = \Theta(n \log n).
\qedhere
\]
\end{proof}

\begin{lemmarep}
\label{lem:threshold_sets_decidable}
    Every threshold set $T=\big\{(x_1, \ldots, x_d) \mid \sum_{i=1}^d w_i x_i \geq t\big\}$ is stably decidable by an execution bounded, single-voting CRD with expected stabilization time $\Theta(n \log n)$.
\end{lemmarep}

We design a CRD $\D$ which multiplies the input molecules according to their weight and consumes positive and negative units alternatingly using a single leader. Once no more reaction is applicable, the leader's state will indicate whether or not there are positive units left and the threshold is met. Let $\Sigma = \{X_1, \dots, X_d\}$ be the set of input species and $\Upsilon = \{L_Y\}$ the \emph{yes} voter. We first add reactions to multiply the input species by their respective weights. For all $i \in \{1, \dots, d\}$, add the reaction:
\begin{align}
    X_i \rightarrow \begin{cases}
        w_i P &\text{if } w_i>0 \\
        -w_i N &\text{if } w_i<0 \\
        \emptyset &\text{otherwise}
    \end{cases}
    \label{rxn:x_to_weighted_pos_neg}
\end{align}
$P$ and $N$ represent ``positive'' and ``negative'' units respectively. Now add reactions to consume $P$ and $N$ alternatingly using a leader until we run out of one species:
\begin{align}
    L_Y + N &\rightarrow L_N \label{rxn:yes_to_no}\\
    L_N + P &\rightarrow L_Y \label{rxn:no_to_yes}
\end{align}
Finally, initialize the CRD with one $L_Y$ and the threshold number $t$ copies of $P$ (or $-tN$ if $t$ is negative), i.e. $\vec{s}(L_Y)=1$, $\vec{s}(P)=t$ if $t>0$, or $\vec{s}(N)=-t$ if $t<0$, and $\vec{s}(S)=0$ for all other species. For any valid initial configuration, $\D$ reaches a stable configuration which votes \emph{yes} if and only if the weighted sum of inputs is above the threshold, and \emph{no} otherwise. 
\opt{submission}{The execution time is proven in the appendix.}

\begin{proof}
    $\calD$ is single-voting since it starts with a single leader and no reaction changes the count of $L_B$ molecules.
    
    $\calD$ stabilizes in $\Theta(n \log n)$ time: First, all input species will be converted to $w_i$ instances of $P$ or $N$.  We run these reactions until no $X_i$. As they are independent of molecules other than the reactant, these reactions have a rate of $\lambda = i$, so the expected time until the next reaction is $\frac{1}{i}$. The total time for reactions \eqref{rxn:x_to_weighted_pos_neg} to complete is therefore $\sum_{i=1}^n \frac{1}{i} = \mathit{\Theta}(\log n)$. The time for reactions \eqref{rxn:yes_to_no} and \eqref{rxn:no_to_yes} on the other hand is asymptotically dominated by the last reaction, where $\#L=1$ and $\#B=1$, where $B \in \{P, N\}$, so $\lambda = \frac{1\cdot1}{n}$. Let $n_P, n_N$ be the counts of $P, N$ and assume without loss of generality $n_P \geq n_N$. We get:
    \[
    \sum_{i=0}^{n_N-1} \frac{n}{(n_P-i)} + \frac{n}{(n_N-i)} \leq
    \sum_{i=0}^{n_N-1} \frac{2n}{(n_N-i)} = 
    2n \sum_{i=1}^{n_N} \frac{1}{i} = 
    \Theta(n \log n).
    \qedhere
    \]
\end{proof}

\begin{lemma}
\label{lem:boolean-closure-crd}
    If sets $X_1, X_2 \subseteq \N^d$ are stably decided by some execution bounded, single-voting CRD, then so are $X_1 \cup X_2, X_1 \cap X_2$, and $\overline{X_1}$ with expected stabilization time $O(n \log n)$.
\end{lemma}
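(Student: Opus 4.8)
The plan is to dispatch complement with a one-line voter swap and to handle union and intersection with a single \emph{product-leader} construction that is tailored to preserve both single-voting and execution boundedness. For $\overline{X_1}$, take the CRD deciding $X_1$ and exchange its yes- and no-voters (swap $\Upsilon_1$ and $\Upsilon_0$); the reaction set is untouched, so execution boundedness, single-voting, and the $O(n\log n)$ bound all carry over verbatim. For union and intersection, let $\calD_1,\calD_2$ be the given single-voting, execution bounded CRDs over the common input alphabet $\Sigma=\{X_1,\dots,X_d\}$. First I would relabel all non-input species so that $\calD_1$ and $\calD_2$ share no species except the inputs, and add duplication reactions $X_i \to X_i^{(1)} + X_i^{(2)}$ routing one copy of each input to each sub-CRD, so that neither sub-CRD consumes the other's input.

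The crux is merging the two single voters into one. Because each $\calD_k$ is single-voting, every applicable reaction either ignores the voter entirely or consumes exactly one voter and produces exactly one voter (producing or consuming a voter without the other would violate the single-voter invariant in a reachable configuration). I would therefore introduce one product-leader species $L_{(a,b)}$ ranging over pairs of voter states $(a,b)\in\Upsilon^{(1)}\times\Upsilon^{(2)}$, initialize with $L_{(a_0,b_0)}$ (the two initial voters) together with the non-voter parts of the two contexts, and translate each voter reaction of $\calD_1$, say $V_a + W \to V_{a'} + Z$, into the family $L_{(a,b)} + W \to L_{(a',b)} + Z$ for every $b$ (symmetrically for $\calD_2$, updating the second coordinate); voter-free reactions are kept unchanged on their disjoint species. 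Then exactly one product-leader is present in every reachable configuration, so the combined CRD is single-voting, and I declare $L_{(a,b)}$ to vote yes according to the desired Boolean combination of ``$a\in\Upsilon_1^{(1)}$'' and ``$b\in\Upsilon_1^{(2)}$''.

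For correctness and boundedness I would use projection maps $\pi_1,\pi_2$ sending $L_{(a,b)}\mapsto V_a$ (resp.\ $V_b$), mapping $X_i^{(1)},X_i\mapsto X_i$ (resp.\ $X_i^{(2)},X_i$), and erasing the other sub-CRD's species; under $\pi_1$ a combined execution projects to a genuine $\calD_1$ execution from the valid initial configuration $\vs_1+\vi_0$, with duplications and $\calD_2$-steps becoming stutters, and symmetrically for $\pi_2$. From any reachable combined configuration I can drive sub-CRD $1$ to a stable configuration and then sub-CRD $2$ (their leader updates touch different coordinates and commute, and sub-$2$ steps freeze sub-$1$'s projection), reaching a configuration whose product-leader encodes the correct votes $(v_1,v_2)$; since the vote depends only on $(v_1,v_2)$ and both sub-configurations are stable, this configuration is stable with the correct output, and unanimity is automatic because only the single leader votes. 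For execution boundedness I argue by contradiction: an infinite combined execution uses only finitely many duplications (each consumes an input), hence infinitely many steps of $\calD_1$ or of $\calD_2$, whose projection is an infinite execution of that sub-CRD from a valid initial configuration, contradicting \Cref{def:execution-bounded} (equivalently, yielding a self-covering path forbidden by \Cref{lem:execution-bdd-iff-no-self-covering-path}).

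For timing, the duplication phase consists of $d$ unimolecular reaction types whose total propensity equals the number of remaining original inputs, so it completes in expected $O(\log n)$ time; thereafter the combined volume is $O(n)$, so every reaction's propensity stays within a constant factor of its stand-alone value, and each sub-CRD, hence the whole system, stabilizes in expected $O(n\log n)$ time. I expect the main obstacle to be exactly the tension the single-voting hypothesis is designed to resolve: naively running the two CRDs in parallel and adding a downstream reaction to merge their verdicts would either leave two voters (breaking single-voting) or require re-reading votes that can still change (reintroducing cycles and destroying execution boundedness). The product-leader sidesteps this by never materializing two independent voters, and the technical heart of the argument is verifying that it faithfully and boundedly simulates both sub-CRDs via the projection maps.
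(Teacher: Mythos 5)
Your complement argument coincides with the paper's (swap the voter sets), but for union and intersection you take a genuinely different route. The paper runs the two single-voting CRDs fully in parallel, each keeping its own voter, and introduces a fourth count-one species $V_{b_1 b_2}$ recording the pair of votes, updated catalytically by the sub-voters via reactions of the form $S_b + V_{\overline{b}\,?} \to S_b + V_{b\,?}$; execution boundedness of these catalytic updates then rests on the observation that each single voter flips its vote only $O(\|\vi\|)$ times. Your construction instead fuses the two voters into one product-leader $L_{(a,b)}$ and rewrites every voter-touching reaction of each sub-CRD to act on the corresponding coordinate, proving correctness and boundedness via the two projection maps onto the sub-CRDs. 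The trade-offs: the paper's scheme is more black-box (the sub-CRDs' reaction sets are untouched) but needs the bounded-flip argument plus an extra $O(n)$ rendezvous between two count-one species per vote flip; your product-leader requires the structural observation --- which you state and which is correct --- that in a single-voting CRD every reaction applicable in a reachable configuration consumes and produces exactly one voter, but in exchange single-voting and execution boundedness of the composite become essentially immediate from the projections, and the catalytic update phase disappears entirely. Both proofs are sound; the one small point worth making explicit is that before lifting a $\calD_1$-execution from a projected configuration you may first need to fire the remaining duplication reactions $X_i \to X_i^{(1)} + X_i^{(2)}$, since $\pi_1$ identifies $X_i$ with $X_i^{(1)}$ and the lifted $\calD_1$-reactions consume only the latter.
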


\begin{proof}
To stably decide $\overline{X_1}$, swap the yes and no voters.

For $\cup$ and $\cap$, consider a construction where we decide both sets separately and record both of their votes in a new voter species. For this, we allow the set of all voters to be a strict subset of all species. We first add reactions to duplicate our input with reactions of the form \begin{equation}X_i \rightarrow X_{i,1}+X_{i,2}\label{rxn:split_input_for_parallel_crns}\end{equation} by two separate CRDs. Subsequently, we add reactions to record the separate votes in one of four new voter species: $V_{N N}, V_{N Y}, V_{Y N}, V_{Y Y}$. The first and second CRN determine the first and second subscript respectively. For $b \in \{Y, N\}$ and if $S_b, T_b$ are voters of $\C_1$ and $\C_2$ respectively, add the reactions:
\begin{align}
    S_b+V_{\overline{b} ?} \rightarrow S_b+V_{b ?}
    \label{rxn:boolean-closure-crd-flip-first-vote}
    \\ 
    T_b+V_{? \overline{b}} \rightarrow T_b+V_{? b}
    \label{rxn:boolean-closure-crd-flip-second-vote}
\end{align} 
Above, the ? subscript is shorthand for ``any bit''; e.g. if $N_1$ is the \emph{no} voter of the first CRD, we would add two reactions $N_1+L_{YN} \rightarrow N_1 + L_{NN}$ and $N_1+L_{YY} \rightarrow N_1+L_{N Y}$. We let the \emph{yes} voters be: $\Upsilon = \{V_{N Y}, V_{Y N}, V_{Y Y}\}$ to stably decide $X_1 \cup X_2$ or $\Upsilon = \{V_{YY}\}$ to stably decide $X_1 \cap X_2$.

    Reaction \eqref{rxn:split_input_for_parallel_crns} will complete in $O(\log n)$ time and is clearly execution bounded since the input $X_i$ is finite and not produced in any reaction. Consequently, two separate CRNs run in $\mathit{\Theta}(n \log n)$ time as shown in \cref{lem:mod_sets_decidable} and \cref{lem:threshold_sets_decidable}. After stabilization of the parallel CRNs, we expect reaction \eqref{rxn:boolean-closure-crd-flip-first-vote} and \eqref{rxn:boolean-closure-crd-flip-second-vote} to happen exactly once. Each molecule involved is a leader and has count $1$ in volume $n$. This leads to a rate of $\lambda=\frac{1\cdot1}{n}$, so the expected time for one reaction to happen is $O(n)$. It is important to note that reactions \eqref{rxn:boolean-closure-crd-flip-first-vote} and \eqref{rxn:boolean-closure-crd-flip-second-vote} do not result in unbounded executions due to the unanimous vote in parallel CRDs. In both mod sets and threshold sets, the leader changes its vote a maximum of $|\vec{i}|$ times, with only ever one leader present at any time. Again, we start with only one $V_{bb}$ voter present initially and no reaction changes the count of voters, making our construction single-voting.
\end{proof}

Since semilinear predicates are exactly Boolean combinations of threshold and mod predicates, 
\Cref{lem:mod_sets_decidable,lem:threshold_sets_decidable,lem:boolean-closure-crd} imply \Cref{thm:semilinear_sets_decidable}.

We can also prove the same result for all-voting CRDs.
Note, however, that such CRDs cannot be ``composed'' using the constructions of \Cref{lem:boolean-closure-crd,thm:execution-bdd-CRCs-compute-all-semilinear-functions},
which crucially relied on the assumption that the CRDs being used as ``subroutines'' are single-voting.
\opt{submission}{A proof is in the appendix.}

\begin{thmrep}
\label{lem:boolean-closure-crd-all-voting}
    Every semilinear set is stably decidable by an execution bounded, all-voting CRD, with expected stabilization time $O(n \log n).$
\end{thmrep}

\begin{proof}
    By \Cref{thm:semilinear_sets_decidable},
    every semilinear set is stably decided by  a single-voting CRD.
    We convert this to an all-voting CRD, where every species is required to vote \emph{yes} or \emph{no}, by ``propagating'' the final vote (recorded in the single voter $V^0$ voting \emph{no} or $V^1$ voting \emph{yes}) back to all other molecules. 
    A superscript indicates the ``global'' decision. 
    The execution boundedness proven in \cref{lem:boolean-closure-crd} ensures that the leader propagates the final vote only a finite amount of times. 
    For each vote $b \in \{0,1\}$ and each voter $V^b$ voting $b$,
    and all other species $S \in \Lambda \backslash \{V\}$, replace species $S$ with two versions $S^0$ and $S^1$, and add reactions:
    \begin{align}
        V^b+S^{\overline{b}} \rxn V^b+S^b
        \label{rxn:leader_propagate_global_vote}
    \end{align}

    The original reactions of the CRD must also be replaced with ``functionally identical'' reactions for the new versions of species. For example, the reaction $A+B \rxn C+D$ becomes
    \begin{align*}
        A^0+B^0 &\rxn C^0+D^0
        \\
        A^0+B^1 &\rxn C^0+D^0
        \\
        A^1+B^0 &\rxn C^0+D^0
        \\
        A^1+B^1 &\rxn C^1+D^1
    \end{align*}
    In the middle two cases we can pick the superscripts of the products arbitrarily, 
    whereas in the first and last case, we must choose the product votes to match those of the reactants to ensure stable states remain stable.
    
    A vote change of the $V^b$ leader leads to the propagation of the vote to at most $n$ molecules once using reaction \eqref{rxn:leader_propagate_global_vote}.
    This reaction dominates the runtime, as a single molecule is required to interact with each other molecule. We cannot speed this process up using an epidemic style process as conflicting votes would make the CRN execution unbounded. 
    The original CRD takes time $O(n \log n)$ to converge on a correct output for the single voter $V^b$.
    At that point, a standard coupon collector argument shows that the voter $V^b$ takes expected time $O(n \log n)$ to correct the votes of all other species via reaction \eqref{rxn:leader_propagate_global_vote}.
\end{proof}

\section{Execution bounded CRCs stably compute all semilinear functions}
\label{sec:positive-results-fuctions}

In this section we shift focus from computing Boolean-valued predicates $\phi: \N^d \to \{0,1\}$ to integer-valued functions $f: \N^d \to \N$,
showing that execution bounded CRCs can stably compute the same class of functions (semilinear) as unrestricted CRCs.


Similar to \cite{Chen2012DeterministicFunction,Doty2013Leaderless}, we compute semilinear functions by decomposing them into ``affine pieces'', which we will show can be computed by execution bounded CRNs and combined by using semilinear predicates to decide which linear function to apply for a given input.\footnote{
    While this proof generalizes to multivariate output functions as in \cite{Chen2012DeterministicFunction,Doty2013Leaderless}, to simplify notation we focus on single output functions.
    Multi-valued functions $f: \N^d \to \N^l$ can be equivalently thought of as $l$ separate single output functions $f_i: \N^d \to \N$,
    which can be computed in parallel by independent CRCs.
}

We say a partial function $f: \N^k \dashrightarrow \N$ is \emph{affine} if there exist a vectors $\vec{a} \in \Q^k$, $\vec{c} \in \N^k$ with $\vec{x}-\vec{c} \geq \vec{0}$ and nonnegative integer $b \in \N$ such that 
$
f(\vec{x}) = \vec{a}^\top(\vec{x} - \vec{c})+b.
$
For a partial function $f$ we write $\operatorname{dom} f$ for the \emph{domain} of $f$, 
the set of inputs for which $f$ is defined.
This definition of affine function may appear contrived, but the main utility of the definition is that it satisfies \cref{lem:partial_affine_function_decomposition_disjoint_domains}.
For convenience, we can ensure to only work with integer valued molecule counts by multiplying by $\frac{1}{d}$ after the dot product, where $d$ may be taken to be the least common multiple of the denominators of the rational coefficients in the original definition such that $n_i = d \cdot \va(i)$:
\opt{full}{
\[
    f(\vec{x}) = b+\sum_{i=1}^k a_i (x_i-c_i) 
    \iff f(\vec{x}) = b+\frac{1}{d} \sum_{i=1}^k n_i (x_i-c_i).
\]
}
\opt{submission,final}{
$
    f(\vec{x}) = b+\sum_{i=1}^k \va(i) (\vx(i)-\vc(i)) 
    \iff 
    f(\vec{x}) = b+\frac{1}{d} \sum_{i=1}^k n_i (\vx(i)-\vc(i)).
$
}

We say that a partial function $\hat f: \N^k \rightarrow \N^2$ is a \emph{diff-representation} of $f$ if $\operatorname{dom} f=\operatorname{dom} \hat{f}$ and, for all $\vec{x} \in \operatorname{dom} f$, if $\left(y_P, y_C\right)=\hat{f}(\vec{x})$, then $f(\vec{x})=y_P-y_C$, and $y_P=O(f(\vec{x}))$. In other words, $\hat{f}$ represents $f$ as the difference of its two outputs $y_P$ and $y_C$, with the larger output $y_P$ possibly being larger than the original function's output, but at most a multiplicative constant larger \cite{Doty2013Leaderless}.

\begin{lemma}
\label{lem:partial_affine_function_computable}
    Let $f: \N^k \rightarrow \N$ be an affine partial function. Then there is a diff-representation $\hat{f}: \N^k \longrightarrow \N^2$ of $f$ and an execution bounded CRC that monotonically stably computes $\hat{f}$ in expected stabilization time $O(n)$.
\end{lemma}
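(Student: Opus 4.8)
The plan is to build the CRC in three conceptual stages --- weight the inputs, cancel opposite contributions, and divide by $d$ to extract the output --- and to take $\hat f$ to be the (unique) stable output of the resulting network. Starting from the integer form $f(\vx) = b + \frac{1}{d}\sum_i n_i(\vx(i)-\vc(i))$ with $n_i = d\,\va(i) \in \Z$, I would introduce two ``raw unit'' species $P$ and $N$ (positive and negative contributions) together with the reactions $X_i \rxn n_i P$ when $n_i > 0$, $X_i \rxn |n_i| N$ when $n_i < 0$, and $X_i \rxn \varnothing$ when $n_i = 0$. The constant $d\,b - \sum_i n_i \vc(i)$ is a fixed integer, so I place its positive part as copies of $P$ and its negative part as copies of $N$ into the initial context (a constant multiset, hence admissible, and realizable from a single leader). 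Once all $X_i$ are consumed, the net raw count is $\#P - \#N = d\,b + \sum_i n_i(\vx(i)-\vc(i)) = d\,f(\vx)$, which is a nonnegative multiple of $d$ because $f(\vx)\in\N$ on $\operatorname{dom} f$.

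Next I would add the cancellation reaction $P + N \rxn \varnothing$ and a division-by-$d$ gadget converting every $d$ surviving copies of $P$ into one $Y_P$ and every $d$ copies of $N$ into one $Y_C$; since $d$ is a fixed constant, this can be done by a constant-depth tree of bimolecular reactions among the abundant raw species (which also keeps the parallel time low). The decisive observation for correctness of the difference is the floor identity $\lfloor (B + d f)/d\rfloor = \lfloor B/d\rfloor + f$: writing $Q_P,Q_N$ for the totals of raw units produced and $c$ for the number of cancellations, any terminal configuration has divided $Q_P - c$ copies of $P$ and $Q_N - c$ copies of $N$, so, using $Q_P - Q_N = d\,f(\vx)$, we get $y_P - y_C = \lfloor (Q_P-c)/d\rfloor - \lfloor (Q_N-c)/d\rfloor = f(\vx)$ \emph{regardless} of the order in which cancellations and divisions occur. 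Execution boundedness is then immediate: only the weighting reactions create $P$ or $N$, and they consume the finitely many input molecules, so at most $O(n)$ raw units are ever created and every subsequent reaction strictly consumes raw units; hence every execution has length $O(n)$ (equivalently, one can exhibit the nonnegative linear potential function of the cited characterization). A standard kinetic analysis of the bimolecular cancellation and division among $O(n)$ molecules in volume $O(n)$ then yields expected time $O(n)$.

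The main obstacle is securing the remaining two requirements of a diff-representation --- the size bound $y_P = O(f(\vx))$ and monotonicity --- \emph{together with} uniqueness of the stable output, which the naive network above does not provide. Indeed, an execution that divides every raw unit before any cancellation terminates with $y_P \approx Q_P/d$, and this can be $\Theta(n)$ even when $f(\vx)$ is tiny (e.g.\ $f = x_1 - x_2$ with $x_1 \approx x_2$); moreover this is a \emph{distinct} terminal configuration from the fully cancelled one, so $\hat f$ would not even be well defined. The fix is to force cancellation to be completed before any $Y_P$ is committed, so that output is produced only from genuine net surplus. The structural leverage is that $P + N \rxn \varnothing$ is applicable whenever both species are present, so no terminal configuration can contain both $P$ and $N$; the real work is to prevent the divider from prematurely committing a raw unit that still has an opposite partner available. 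I would arrange this by keeping the divider's partial (uncommitted) progress reconcilable with incoming opposite units --- a half-assembled batch can still be annihilated --- and committing an irreversible $Y_P$ only upon completing a full batch of durable surplus, which makes the committed outputs nondecreasing and pins down the unique terminal value $y_P = f(\vx)$, $y_C = 0$ on the net-positive side (so $y_P = O(f(\vx))$). Verifying that this reconciliation mechanism leaves the network execution bounded, that it never overshoots under later cancellation, and that every terminal configuration produces the same $(y_P,y_C)$ is the technical heart of the proof.
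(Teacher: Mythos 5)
Your first and last stages (weighting the inputs by $n_i$, and dividing by $d$ with a pairwise accumulation gadget) match the paper's construction in spirit; the paper handles the $-c_i$ offset with a self-consumption gadget on the input species rather than via the initial context, but that difference is cosmetic and your version is algebraically equivalent. The divergence, and the gap, is your cancellation reaction $P + N \to \varnothing$ and the ``fix'' you build on top of it. The paper never cancels positive against negative units inside this lemma: it divides the positive stream and the negative stream by $d$ \emph{separately} into $Y^P$ and $Y^C$, observes that the two raw totals $Q_P$ and $Q_C$ are congruent mod $d$ (since $Q_P - Q_C = d(f(\vx)-b)$), so $\lfloor Q_P/d\rfloor - \lfloor Q_C/d\rfloor = (Q_P - Q_C)/d$ exactly regardless of scheduling, and defers the actual subtraction $Y^P + Y^C \to \cdots$ to the downstream composition in \Cref{thm:execution-bdd-CRCs-compute-all-semilinear-functions}, where it is catalyzed by the single-voting predicate leaders precisely so that execution boundedness survives. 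Because $Y^P$ and $Y^C$ are never consumed in this lemma's CRC, monotonicity and well-definedness of the stable output are immediate, with no reconciliation mechanism needed.

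Your proposed repair --- commit an irreversible $Y_P$ ``only upon completing a full batch of durable surplus,'' so that the terminal output is $y_P = f(\vx)$, $y_C = 0$ --- cannot be realized, and this is not merely a deferred technicality. What you describe is a monotone stable computation of $f$ itself (output never consumed, $y_C \equiv 0$), and no CRC can do that for a non-monotone $f$ such as $x_1 - x_2$: ``durable surplus'' means the absence of future $N$'s, and absence is not locally detectable. Concretely, stabilize on input $(n,n)$; monotonicity forces $\#Y_P = 0$ in every configuration reachable from $(n,n)$. But by additivity, the execution from $(n,n)$ that initially touches only one copy of $X_2$ can reach the stable configuration for input $(n,1)$, which has $\#Y_P = n-1$, after which the remaining $X_2$'s arrive too late to retract anything --- a contradiction. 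The standard escape is a catalyzed undo reaction like \eqref{reverse_min_y_x3} from the introduction, which sacrifices either monotonicity or execution boundedness. So the diff-representation is not an inconvenience to be engineered away inside this lemma; it is the device that makes a monotone, execution bounded construction possible, and the proof should adopt the paper's two-output structure rather than patch the single-output one.
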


\begin{proof}
Define a CRC $C$ with input species $\Sigma=\{X_1, \ldots, X_k\}$ and output species $\Gamma=\{Y^P, Y^C\}$. 
We need to ensure that after stabilizing, $y=\#Y^P - \#Y^C$

To account for the $b$ offset, start with $b$ copies of $Y^P$.

For the $c_i$ offset, we must reduce the number of $X_i$ by $c_i$. Since the result will be used in the next reaction, we want to produce a new species $X_i'$ and require $X_i'$ to not be consumed during the computation. We achieve this by adding reactions that let $X_i$ consume itself $c_i$ times (keeping track with a subscript) and converting $X_i$ to $X_i'$ once $c_i$ has been reached.
For the sake of notation below, assume input species $X_i$ is actually named $X_{i,1}$.
For each $i \in\{1, \ldots, k\}$ and $m, p \in\left\{1, \ldots, c_i\right\}$, if $m+p \leq c_i$, add the reaction

\begin{equation}
X_{i, m}+X_{i, p} \rightarrow X_{i, m+p} \label{rxn:accumulate_subscripts_for_subtraction}
\end{equation}
If $m+p>c_i$, add the reaction
\begin{align}
    X_{i, m}+X_{i, p} \rightarrow X_{i, c_i}+\left(m+p-c_i\right) X_i^{\prime}
    \label{rxn:reduce_subscript_for_subtraction}
\end{align}
Runtime: In volume $n$, the rate of reactions \eqref{rxn:accumulate_subscripts_for_subtraction} and \eqref{rxn:reduce_subscript_for_subtraction} would be $\lambda \approx \frac{(x_i)^2}{n}$ ($x_i$ molecules have the chance to react with any of the $x_i-1$ others), so the expected time for the next reaction is $\frac{n}{(x_i)^2}$. The expected time for the whole process is $\sum_{i=1}^{x_i} \frac{n}{i^2}=n \sum_{i=1}^{x_i} \frac{1}{i^2}=O(n)$. Further, the reactions are execution bounded since both strictly decrease the number of their reactants and exactly $x_i - 1$ reactions will happen.

To account for the $n_i / d$ coefficient, we multiply by $n_i$, then divide by $d$ using similar reactions as for the subtraction.
To multiply by $n_i$, add the following reaction for each $i \in\{1, \ldots, k\}$:
\begin{align}
X_i' \rightarrow \begin{cases}n_i D_1^P, & \text { if } n_i>0 \\ \left(-n_i\right) D_1^C, & \text { if } n_i<0\end{cases}
\label{rxn:multiply_by_n}
\end{align}
For each $m, p \in\left\{1, \ldots, d-1\right\}$, if $m+p \leq d-1$, add the reactions
\begin{align}
D_m^P+D_p^P \rightarrow D_{m+p}^P \label{rxn:accumulate_subscripts_for_division_pos}\\
D_m^C+D_p^C \rightarrow D_{m+p}^C
\label{rxn:accumulate_subscripts_for_division_neg}
\end{align}
If $m+p>c_i$, add the reactions
\begin{align}
D_m^P+D_p^P \rightarrow D_{m+p-d}^B+Y^P \label{rxn:reduce_subscripts_for_division_pos} \\
D_m^C+D_p^C \rightarrow D_{m+p-d}^B+Y^C \label{rxn:reduce_subscripts_for_division_neg}
\end{align}
Reactions \eqref{rxn:multiply_by_n} complete in expected time $O(\log n)$, while \eqref{rxn:reduce_subscripts_for_division_pos} and \eqref{rxn:reduce_subscripts_for_division_neg} complete in $O(n)$ by a similar analysis as for the first two reactions. 
As for execution boundedness, \eqref{rxn:multiply_by_n} is only applicable once for every $X_i^{\prime}$; all other reactions start with a number of reactants which are a constant factor of $X_i'$ and decrease the count of their reactants by one in each reaction.
\end{proof}

We require the following result due to Chen, Doty, Soloveichik \cite{Chen2012DeterministicFunction}, guaranteeing that any semilinear function can be built from affine partial functions.
\begin{lemma}[\cite{Chen2012DeterministicFunction}]
\label{lem:partial_affine_function_decomposition_non_disjoint_domains}
    Let $f: \N^d \rightarrow \N$ be a semilinear function. Then there is a finite set $\big\{f_1: \N^d \rightarrow \N, \ldots, f_m: \N^d \rightarrow \N\big\}$ of affine partial functions, where each $\operatorname{dom} f_i$ is a linear set, such that, for each $\vec{x} \in \N^d$, if $f_i(\vec{x})$ is defined, then $f(\vec{x})=f_i(\vec{x})$, and $\bigcup_{i=1}^m \operatorname{dom} f_i=\N^d$.
\end{lemma}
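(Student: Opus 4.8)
The plan is to work with the \emph{graph} of $f$ and peel off one affine piece from each linear set in a suitable decomposition of it. Since $f$ is semilinear, its graph $G = \{(\vx,y) \in \N^{d+1} \mid f(\vx)=y\}$ is a semilinear subset of $\N^{d+1}$, hence a finite union of linear sets. I would first invoke the standard structure theorem for semilinear sets (the normal form also used in \cite{Chen2012DeterministicFunction}) that every linear set can be rewritten as a finite union of linear sets whose period vectors are \emph{linearly independent over $\Q$}, and which therefore represent each of their elements by a unique coefficient tuple. Applying this to each piece, I may assume $G = \bigcup_{j=1}^m L_j$ where each $L_j = \{(\vb_j + \sum_{i=1}^{k_j} n_i \vp_{j,i},\ \beta_j + \sum_{i=1}^{k_j} n_i q_{j,i}) : n_1,\dots,n_{k_j} \in \N\}$ has linearly independent full period vectors $(\vp_{j,i}, q_{j,i}) \in \N^{d+1}$, with $\vb_j,\vp_{j,i}\in\N^d$ and $\beta_j,q_{j,i}\in\N$.

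Fix one such $L_j$ and write $P$ for the $d \times k_j$ matrix whose columns are the input-periods $\vp_{j,i}$. The key step is to show these input-periods are themselves linearly independent over $\Q$. Because $L_j \subseteq G$ and $f$ is a \emph{function}, the projection $\vec{n} \mapsto \vb_j + P\vec{n}$ onto the input coordinates is injective on $\N^{k_j}$: if two coefficient tuples gave the same input $\vx$, the corresponding graph points $(\vx,y)$ and $(\vx,y')$ would force $y=y'$ (function property), whence the two graph points coincide and unique representation forces the tuples to coincide. Injectivity on $\N^{k_j}$ of a linear map rules out any nonzero rational kernel vector $\vec{z}$, since clearing denominators and splitting $\vec{z}=\vec{z}^+ - \vec{z}^-$ into positive and negative parts would give distinct tuples $\vec{z}^+ \neq \vec{z}^-$ in $\N^{k_j}$ with $P\vec{z}^+ = P\vec{z}^-$. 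Hence $P$ has full column rank, $P^\top P$ is invertible, and for every $\vx \in \pi(L_j)$ the coefficient tuple is recovered by $\vec{n} = (P^\top P)^{-1}P^\top(\vx-\vb_j)$.

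Substituting this into the output coordinate yields $y = \beta_j + \vq_j^\top \vec{n} = \beta_j + \vq_j^\top (P^\top P)^{-1} P^\top (\vx - \vb_j)$, an affine function of $\vx$ with rational coefficients collected into $\va_j := P(P^\top P)^{-1}\vq_j \in \Q^{d}$, where $\vq_j = (q_{j,1},\dots,q_{j,k_j})$. I would then define the affine partial function $f_j$ by this formula on domain $\operatorname{dom} f_j := \pi(L_j) = \{\vb_j + \sum_i n_i \vp_{j,i} : n_i \in \N\}$, which is a linear set; taking $\vc := \vb_j \in \N^d$ and $b := \beta_j \in \N$ puts $f_j$ in the required form $f_j(\vx) = \va_j^\top(\vx-\vc)+b$, and $\vx \geqq \vb_j$ holds throughout the domain. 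Consistency is immediate: for $\vx \in \operatorname{dom} f_j$ we have $(\vx, f(\vx)) \in L_j$, so the formula's value equals $f(\vx)$ and in particular lies in $\N$. Coverage follows since $\bigcup_j \operatorname{dom} f_j = \pi\big(\bigcup_j L_j\big) = \pi(G) = \N^d$, using that $f$ is total.

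I expect the main obstacle to be the independent-period reduction together with the bridge from ``injective on $\N^{k_j}$'' to ``linearly independent over $\Q$'': the first requires citing or reconstructing the semilinear normal form, and the second is the crucial observation that lets a single rational affine map invert the period matrix. Everything afterward — consistency and coverage — is bookkeeping, and the fact that the domains may overlap is harmless precisely because any two pieces agree with $f$ wherever both are defined.
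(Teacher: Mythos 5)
The paper does not prove this lemma itself---it is imported verbatim from \cite{Chen2012DeterministicFunction}---so the only fair comparison is with the original argument there, and your reconstruction is correct and follows essentially that same route: decompose the graph into linear sets with linearly independent period vectors, use the function property plus unique representation to conclude that the input-projections of the periods are themselves linearly independent, and invert the period matrix to obtain a rational affine formula on each projected (linear) domain. The one step worth making explicit when writing this up is the reduction to linearly independent periods, which is exactly the decomposition into fundamental linear sets (Ito's theorem) that this paper already invokes in the proof of the disjoint-domain strengthening.
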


We strengthen \cref{lem:partial_affine_function_decomposition_non_disjoint_domains} to show we may assume each $\mathrm{dom}\ f_i$ is disjoint from the others.
This is needed not only to prove \Cref{thm:execution-bdd-CRCs-compute-all-semilinear-functions},
but to correct the proof of Lemma 4.4 in \cite{Chen2012DeterministicFunction},
which implicitly assumed the domains are disjoint.
\opt{submission}{\Cref{lem:partial_affine_function_decomposition_disjoint_domains} is proven in the appendix.}

\begin{lemmarep}
\label{lem:partial_affine_function_decomposition_disjoint_domains}
    Let $f: \N^d \rightarrow \N$ be a semilinear function. 
    Then there is a finite set $\big\{f_1: \N^d \rightarrow \N, \ldots, f_m: \N^d \rightarrow \N\big\}$ of affine partial functions, where each $\operatorname{dom} f_i$ is a linear set, and $\operatorname{dom} f_i \cap \operatorname{dom} f_j = \emptyset$ for all $i\neq j$, such that, for each $\vec{x} \in \N^d$, if $f_i(\vec{x})$ is defined, then $f(\vec{x})=f_i(\vec{x})$, and $\bigcup_{i=1}^m \operatorname{dom} f_i=\N^d$. 
\end{lemmarep}

\begin{proof}
By \cite[Theorem 2]{Ito1969SemilinearSetsFiniteUnionDisjointLinearSets}, every semilinear set is a finite union of disjoint fundamental linear sets. 
The author defines a linear set $L=\big\{\vec{b}+n_1 \vec{u}_1+\ldots+n_p \vec{u}_p \mid n_1, \ldots, n_p \in \N\big\}$ as \emph{fundamental}, if $\vec{u}_1, \dots \vec{u}_p \in \N^k$ span a $p$-dimensional vector space in $\R^k$, i.e. all vectors are linearly independent in $\R^k$.\footnote{
    This distinction is significant because not all integer-valued linear sets can be represented using solely linearly independent vectors. 
    An illustrative example is $\vec{b}=\vec{0}, \vec{u}_1=(1,1,1), \vec{u}_2=(2,0,1), \vec{u}_3=(0,2,1)$, as discussed in \cite{Chen2012DeterministicFunction}. 
    The vectors $\vec{u}_1, \vec{u}_2, \vec{u}_3$ are not linearly independent in $\R^3$, yet this set cannot be expressed with less than three basis vectors.
}
The proof of \Cref{lem:partial_affine_function_decomposition_non_disjoint_domains} in~\cite{Chen2012DeterministicFunction} shows that each linear set $L_i$ comprising the semilinear graph of $f$ corresponds to one partial affine function $f_i$.
The fact that \cite[Theorem 2]{Ito1969SemilinearSetsFiniteUnionDisjointLinearSets} lets us assume each $L_i$ is disjoint from the others immediately implies that each $\mathrm{dom}\ f_i$ is disjoint from the others.
\end{proof}

The next theorem shows that semilinear functions can be computed by execution bounded CRCs in expected time $O(n \log n)$.

\begin{theorem}
\label{thm:execution-bdd-CRCs-compute-all-semilinear-functions}
    Let $f: \N^d \to \N$ be a semilinear function. 
    Then there is an execution bounded CRC that stably computes $f$ with expected stabilization time $O(n \log n)$.
\end{theorem}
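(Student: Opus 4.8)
The plan is to combine the three ingredients already established: the decomposition of $f$ into affine partial functions with pairwise disjoint linear domains (\Cref{lem:partial_affine_function_decomposition_disjoint_domains}), the execution bounded CRCs that compute each affine piece in diff-representation (\Cref{lem:partial_affine_function_computable}), and the execution bounded, single-voting CRDs that decide each linear domain (\Cref{thm:semilinear_sets_decidable} and its components). The overall strategy is: run all $m$ affine-piece CRCs in parallel on disjoint copies of the input, run $m$ domain-deciding CRDs in parallel, and then use the unique ``yes'' verdict (guaranteed unique because the domains are disjoint and cover $\N^d$) to select which piece's output is copied to the global output species $Y$.

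First I would duplicate the input: for each input species $X_i$ add a reaction $X_i \to X_{i,1} + \dots + X_{i,m} + X_{i,\mathrm{dec}_1} + \dots + X_{i,\mathrm{dec}_m}$ feeding fresh copies into the $m$ affine CRCs of \Cref{lem:partial_affine_function_computable} and the $m$ domain CRDs of \Cref{thm:semilinear_sets_decidable}. This split is execution bounded (input is finite and never regenerated) and completes in $O(\log n)$ time. Each affine CRC stabilizes to a diff-representation $(\#Y^P_i, \#Y^C_i)$ with $\#Y^P_i - \#Y^C_i = f_i(\vx)$ whenever $\vx \in \operatorname{dom} f_i$; each domain CRD stabilizes with its single leader voting ``yes'' exactly when $\vx \in \operatorname{dom} f_i$. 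Since the domains partition $\N^d$, exactly one leader votes ``yes.''

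Next I would use the unique ``yes'' leader $L_i$ as a catalyst to transfer the $i$-th piece's net output into the global $Y$: add reactions $L_i^{\mathrm{yes}} + Y^P_i \to L_i^{\mathrm{yes}} + Y$ and $L_i^{\mathrm{yes}} + Y + Y^C_i \to L_i^{\mathrm{yes}}$, so that $Y$ accumulates $\#Y^P_i$ copies and is then decremented $\#Y^C_i$ times, leaving $\#Y = y^P_i - y^C_i = f_i(\vx) = f(\vx)$. Because $y^P_i = O(f(\vx)) = O(n)$ by the diff-representation bound, only $O(n)$ such catalyzed reactions fire, so this phase is execution bounded and runs in $O(n \log n)$ time (the leader is a single molecule, so this is a coupon-collector-style bound as in \Cref{lem:boolean-closure-crd-all-voting}). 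The pieces whose domain-CRD leaders vote ``no'' never fire their transfer reactions, so their outputs are harmlessly discarded.

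The main obstacle is ensuring \emph{global} execution boundedness and correctness despite the transfer reactions being catalyzed by the deciding leaders: if a transfer reaction could fire \emph{before} the domain CRD has stabilized, a leader might temporarily vote ``yes'' for the wrong piece and move the wrong output into $Y$, and worse, a ``yes''$\to$``no''$\to$``yes'' oscillation of a leader (as in the threshold construction, where the leader flips up to $\|\vi\|$ times) could cause $Y$ to be loaded and then stranded, or even create a self-covering path. I would resolve this exactly as in \Cref{lem:boolean-closure-crd}: the domain CRDs are execution bounded and single-voting, so each leader changes its vote at most $\|\vi\|$ times and then halts; I must verify that the catalyzed transfer reactions, even interleaved with an as-yet-unstabilized computation, strictly decrease a nonnegative linear potential function (the characterization from \cite{czerner2024fast} referenced in the abstract). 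Concretely, since every reaction in the subcomponents already strictly decreases such a potential, and each transfer reaction strictly decreases $\#Y^P_i + \#Y^C_i$ without increasing any $X$ or $D$ count unboundedly, the composite still admits a global nonnegative strictly-decreasing linear potential, yielding execution boundedness of the whole construction. Correctness then follows because, once \emph{all} subcomponents have stabilized, the unique ``yes'' leader is fixed and the net transfer computes $f(\vx)$ exactly; the premature-firing concern is neutralized by choosing the transfer reactions so that any output moved into $Y$ under a later-reversed vote is itself reversible only through the same potential-decreasing mechanism, which I would make precise by the diff-representation design guaranteeing $Y$ converges monotonically to the correct net value.
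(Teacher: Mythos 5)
Your high-level architecture matches the paper's: decompose via \Cref{lem:partial_affine_function_decomposition_disjoint_domains}, compute each affine piece in diff-representation via \Cref{lem:partial_affine_function_computable}, decide each domain with a single-voting CRD, and use the unique eventual ``yes'' leader to gate the transfer of the selected piece's output into $Y$. However, the concrete reactions you write down have a genuine correctness gap. Your transfer reactions are catalyzed only by the \emph{yes} voter, with no mechanism to undo a transfer once it has happened. The domain CRDs stabilize only eventually; before that, the leader of a \emph{wrong} piece $j$ can transiently be in the ``yes'' state (e.g., the threshold CRD's leader oscillates between $L_Y$ and $L_N$ up to $\|\vi\|$ times), during which your reaction $L_j^{\mathrm{yes}} + Y^P_j \to L_j^{\mathrm{yes}} + Y$ can fire and strand incorrect copies of $Y$ that are never removed. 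Your closing claim that ``$Y$ converges monotonically to the correct net value'' is asserted rather than constructed, and it is false for the reactions as written. The paper closes exactly this hole with the reverse reactions \eqref{rxn:reverse_global_y_1} and \eqref{rxn:reverse_global_y_2} ($L_i^N + Y_i^P \to L_i^N + M_i$, then $M_i + Y \to \widehat{Y}_i^P$), catalyzed by the \emph{no} voter, which retract any prematurely produced $Y$; the active/inactive distinction $Y_i^P$ vs.\ $\widehat{Y}_i^P$ is what lets the system track which outputs have already been transferred. (A smaller issue: your reaction $L_i^{\mathrm{yes}} + Y + Y_i^C \to L_i^{\mathrm{yes}}$ is trimolecular; the paper stays bimolecular via $Y_i^P + Y_i^C \to K$ and $K + Y \to \varnothing$.)

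Your proposed route to execution boundedness is also the wrong tool. You plan to exhibit a global nonnegative linear potential strictly decreased by every reaction, i.e., to invoke \Cref{thm:potential-function-iff-entirely-execution-bounded}. But that characterization applies only to \emph{entirely} execution bounded CRNs, and the corrected construction (with the undo reactions) is not entirely execution bounded: from a configuration containing both $L_i^Y$ and $L_i^N$, the produce/undo pair can cycle forever, so no such potential exists. The paper explicitly notes in \Cref{sec:limitations} that its positive constructions are execution bounded only from valid initial configurations with a single leader. The correct argument is the combinatorial one: because the domain CRDs are single-voting and execution bounded, each leader flips its vote only finitely many times, so each catalytic transfer/undo reaction fires only boundedly often.
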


\begin{proof}
    We employ the same construction of \cite{Chen2012DeterministicFunction} with minor alterations. A CRC with input species $\Sigma=\{X_1, \ldots, X_d\}$ and output species $\Gamma=\{Y\}$. By \Cref{lem:partial_affine_function_decomposition_disjoint_domains}, we decompose our semilinear function into partial affine functions (with linear, disjoint domains), which can be computed in parallel by \Cref{lem:partial_affine_function_computable}. Further, we decide which function to use by computing the predicate $\phi_i=$ ``$\mathrm{x} \in \operatorname{dom} f_i$'' (\Cref{thm:semilinear_sets_decidable}). We interpret each $\widehat{Y}_i^P$ and $\widehat{Y}_i^C$ as an ``inactive'' version of ``active'' output species $Y_i^P$ and $Y_i^C$. Let $L_i^Y, L_i^N$ be the \emph{yes} and \emph{no} voters respectively voting whether $\vec{x}$ lies in the domain of $i$-th partial function. Now, we convert the function result of the applicable partial affine function to the global output by adding the following reactions for each $i \in\{1, \ldots, m\}$.
    \begin{align} 
        L_i^Y+\widehat{Y}_i^P & \rightarrow L_i^Y+Y_i^P+Y \label{rxn:local_y_to_global}\\ 
        L_i^N+Y_i^P & \rightarrow L_i^N+M_i \label{rxn:reverse_global_y_1}\\ 
        M_i+Y & \rightarrow \widehat{Y}_i^P \label{rxn:reverse_global_y_2}
    \end{align}
    Reaction \eqref{rxn:local_y_to_global} produces an output copy of species $Y$ and \eqref{rxn:reverse_global_y_1} and \eqref{rxn:reverse_global_y_2} reverse the first reaction using only bimolecular reactions. Both are catalyzed by the vote of the $i$-th predicate result. Also add reactions
    \begin{align} 
        & L_i^Y+\widehat{Y}_i^C \rightarrow L_i^Y+Y_i^C \label{rxn:activate_yc}\\ 
        & L_i^N+Y_i^C \rightarrow L_i^N+\widehat{Y}_i^C \label{rxn:deactivate_yc}
    \end{align}
    and 
    \begin{align} 
        Y_i^P+Y_i^C & \rightarrow K \label{rxn:p_and_c_to_k}\\ 
        K+Y & \rightarrow \varnothing \label{rxn:k_and_y_consumed}
    \end{align}
    Reactions \eqref{rxn:activate_yc} and \eqref{rxn:deactivate_yc} activate and deactivate the ``negative'' output values and reactions \eqref{rxn:p_and_c_to_k} and \eqref{rxn:k_and_y_consumed} allow two active partial outputs to cancel out and consume the excess $Y$ in the process. When the input is in the domain of function $i$, exactly one copy of $L_i^Y$ will be present, otherwise one copy of $L_i^N$. Since we know that the predicate computation is execution bounded and produces at most one voter, the catalytic reaction will also happen at most as often as the leader changes its vote. Therefore, it is also execution bounded.

    The underlying CRNs computing the predicates and functions have expected stabilization time $O(n \log n).$
    Once they have stabilized,
    the slowest reactions described above are those where a leader ($L_i^Y$ or $L_i^N$) must convert all outputs,
    which also takes expected time $O(n \log n)$ by a coupon collector argument.
\end{proof}




\section{Limitations of execution bounded CRNs}
\label{sec:limitations}

The main positive results of the paper (\Cref{thm:semilinear_sets_decidable,thm:execution-bdd-CRCs-compute-all-semilinear-functions})
rely on the assumption that valid initial configurations have a single leader (in particular, they are execution bounded only from configurations with a single leader, but not from arbitrary configurations).
\Cref{lem:boolean-closure-crd-all-voting} shows that we may assume the CRD deciding a semilinear set is all-voting.
However,
for the ``constructive'' results 
\Cref{{lem:boolean-closure-crd},thm:execution-bdd-CRCs-compute-all-semilinear-functions},
which compose the output of a CRD $\calD$ with downstream computation,
using $\calD$ as a ``subroutine'' to stably compute a more complex set/function,
the constructions crucially use the assumption that $\calD$ is single-voting 
(i.e., only the leader of $\calD$ votes)
to argue the resulting composed CRN is execution bounded.
In this section we show these assumptions are necessary,  
proving that execution bounded CRNs without those constraints are severely limited in their computational abilities.

\opt{full,submission}{
We show that entirely execution bounded CRNs (from every configuration) can be characterized by a simpler property of having a ``linear potential function'' that essentially measures how close the CRN is to reaching a terminal configuration.
This result was essentially proven independently by Czerner, Guttenberg, Helfrich, and Esparza~\cite{czerner2024fast} in the model of population protocols.
We give a self-contained proof since we technically need the more general CRN model, although our proof appears to follow their same technique of using a Farkas-like lemma.
}
\opt{final}{We use a result of Czerner, Guttenberg, Helfrich, and Esparza~\cite{czerner2024fast},
showing that entirely execution bounded CRNs (from every configuration) can be characterized by a simpler property of having a ``linear potential function'' that essentially measures how close the CRN is to reaching a terminal configuration.}
We use this characterization to prove that entirely execution bounded CRNs can stably decide only limited semilinear predicates (eventually constant, \Cref{defn:eventually_constant_predicate}),
assuming all species vote, and that molecular counts cannot decrease to $O(1)$ in stable configurations (see \Cref{defn:non-collapsing}).

\subsection{Linear potential functions}

We define a \emph{linear potential function} of a CRN to be a nonnegative linear function of configurations that each reaction strictly decreases.

\begin{definition}
    A \emph{linear potential function} $\Phi: \Rp^\Lambda \to \Rp$ for a CRN is a nonnegative linear function,
    such that for each reaction 
    $(\vr,\vp)$,
    $\Phi(\vp)-\Phi(\vr) < 0.$
\end{definition}

Note that for a configuration $\vx$, since $\Phi(\vx) = \sum_{S \in \Lambda} v_S \vx(S) \geq 0$,
it must be nondecreasing in each species,
i.e., all coefficients $v_S$ must be nonnegative 
(though some are permitted to be 0).
Intuitively, we can think of $\Phi$ as assigning a nonnegative ``mass'' to each species (the mass of $S$ is $v_S$), such that each reaction removes a positive amount of mass from the system.
Note also that since $\Phi$ is linear, the above is equivalent to requiring that $\Phi(\vp - \vr) < 0$,
if we extend $\Phi$ to a linear function $\Phi: \R^\Lambda \to \R$ on vectors with negative elements.


A CRN may or may not have a linear potential function.
Although it is not  straightforward to ``syntactically check'' a CRN to see if  has a linear potential function,
it is efficiently decidable:
a CRN has a linear potential function if and only if the following system of linear inequalities has a solution 
(which can be solved in polynomial time using linear programming techniques; 
the variables to solve for are the $v_S$ for each $S \in \Lambda$),
where the $i$'th reaction has reactants $\vr_i$ and products $\vp_i$,
and species $S \in \Lambda$ has mass $v_S \geq 0$:
\opt{full}{
\[
    (\forall i) 
    \sum_{S \in \Lambda} 
    [ \vp_i(S) - \vr_i(S) ] v_S < 0
\]
}
\opt{submission,final}{
$
    (\forall i) 
    \sum_{S \in \Lambda} 
    [ \vp_i(S) - \vr_i(S) ] v_S < 0.
$
}
For example, for the reactions $A+A \rxn B+C$ and $B+B \rxn A$,
for each reaction to strictly decrease the potential function $\Phi(\vx) = v_A \vx(A) + v_B \vx(B) + v_C \vx(C)$,
$\Phi$ must satisfy 
$2v_A > v_B + v_C$ and $2v_B > v_A$.
In this case, $v_A = 1, v_B = 1, v_C = 0$ works.

\begin{remark}
\label{rmk:linear-potential-integer-coefficients}
A system of linear inequalities with rational coefficients has a real solution if and only if it has a rational solution.
For any homogeneous system (where all inequalities are comparing to 0), any positive scalar multiple of a solution is also a solution.
By clearing denominators, a system has a rational solution if and only if it has an integer solution.
Thus, one can equivalently define a linear potential function to be a function $\Phi(\vx) = \sum_{S \in \Lambda} v_S \vx(S)$
such that each $v_S \in \N$, i.e., we may assume $\Phi:\N^\Lambda \to \N$.
In particular, since $\Phi$ is decreased by each reaction, it is decreased by at least 1.
\end{remark}

\opt{full,submission}{

The following is a variant of Farkas' Lemma~\cite{farkas1902theorie},
one of several similar ``Theorems of the Alternative'' stating that exactly one of two different linear systems has a solution.
(See~\cite[Chapter 2.4]{mangasarian1994nonlinear} for a list of such theorems.)
A proof can be found in~\cite[Theorem 2.10]{gale1960theory}.

\begin{theorem}
    \label{thm:linear-alternative}
    Let $\vM$ be a real matrix.
    Exactly one of the following statements is true.
    \begin{enumerate}
        \item 
        There is a vector $\vu \geqq \vec{0}$ such that $\vM \vu < \vec{0}$.

        \item
        There is a vector $\vv \geq \vec{0}$ such that $\vv \vM \geqq \vec{0}$.
    \end{enumerate}
\end{theorem}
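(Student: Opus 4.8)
The plan is to prove ``exactly one'' by establishing its two halves separately: that the alternatives are mutually exclusive, and that at least one of them always holds. For mutual exclusivity I would assume both hold at once, with $\vu \geqq \vec{0}$ satisfying $\vM\vu < \vec{0}$ and $\vv \geq \vec{0}$ satisfying $\vv\vM \geqq \vec{0}$, and evaluate the scalar $\vv\vM\vu$ two ways. Read as $\vv(\vM\vu)$, the vector $\vM\vu$ is strictly negative in every coordinate while $\vv$ is nonnegative and nonzero, so the coordinate where $\vv$ is positive forces $\vv(\vM\vu) < 0$. Read as $(\vv\vM)\vu$, both factors are nonnegative, so $(\vv\vM)\vu \geq 0$. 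Associativity then yields $0 > \vv\vM\vu \geq 0$, a contradiction, so the two alternatives cannot hold together.

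For the direction that at least one holds, I would recast alternative~(1) geometrically. Let $A_1,\dots,A_m \in \R^n$ be the columns of $\vM$. After rescaling a witness, alternative~(1) is equivalent to the existence of $\vu \geqq \vec{0}$ with $\vM\vu \leqq -\vec{1}$, which is in turn equivalent to $-\vec{1}$ lying in the convex cone
\[
    C = \operatorname{cone}\{A_1,\dots,A_m,\ve_1,\dots,\ve_n\} = \{\vM\vu + \vw : \vu \geqq \vec{0},\ \vw \geqq \vec{0}\},
\]
where the extra generators $\ve_1,\dots,\ve_n$ supply the nonnegative slack that turns $\leqq$ into membership. Now suppose alternative~(1) fails, so $-\vec{1} \notin C$. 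Since $C$ is finitely generated it is closed, so the separating hyperplane theorem for a point and a closed convex cone yields a vector $\vv$ with $\vv \cdot c \geq 0$ for all $c \in C$ and $\vv \cdot (-\vec{1}) < 0$. Applying this to the generators $\ve_i$ gives $\vv \geqq \vec{0}$, while $\vv \cdot (-\vec{1}) < 0$ forces $\sum_i \vv(i) > 0$, so $\vv \neq \vec{0}$ and hence $\vv \geq \vec{0}$ is semipositive; applying it to the generators $A_j$ gives $\vv \cdot A_j \geq 0$ for every $j$, i.e.\ $\vv\vM \geqq \vec{0}$. This is exactly alternative~(2).

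The main obstacle is the reduction underlying the separation step: recognizing that alternative~(1) is precisely the membership $-\vec{1} \in C$ for the augmented cone $C$, and then justifying that $C$ is closed so that separation applies. The closedness of a finitely generated cone (Minkowski--Weyl) is the one technical fact I would be careful not to take for granted; everything else is bookkeeping about which generators force which inequalities on the separating normal $\vv$. Alternatively, I could bypass the explicit geometry and derive the statement from a canonical form of Farkas' Lemma by an affine reformulation, but I expect the cone-separation argument above to be the most self-contained.
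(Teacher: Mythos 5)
Your proof is correct. Note that the paper does not actually prove \Cref{thm:linear-alternative} itself --- it cites Gale \cite[Theorem~2.10]{gale1960theory} and moves on --- so your argument supplies a self-contained derivation where the paper offers only a reference. Both halves check out: the exclusivity argument via the two readings of $\vv\vM\vu$ is the standard one (the key point being that $\vv \geq \vec{0}$ has a strictly positive coordinate while $\vM\vu$ is strictly negative everywhere, so $\vv(\vM\vu)<0$, whereas $(\vv\vM)\vu \geq 0$); and the existence half correctly reduces alternative~(1), after rescaling, to the membership $-\vec{1} \in \operatorname{cone}\{A_1,\dots,A_m,\ve_1,\dots,\ve_n\}$, so that failure of~(1) lets you separate $-\vec{1}$ from this closed finitely generated cone, and testing the separating normal $\vv$ against the generators $\ve_i$ gives nonnegativity, against $-\vec{1}$ gives $\vv\neq\vec{0}$, and against the columns $A_j$ gives $\vv\vM \geqq \vec{0}$. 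You are right that the only nontrivial ingredient is closedness of finitely generated cones (needed so that a point outside the cone can be strictly separated from it); as long as that is either cited or proved via Carath\'eodory, the argument is complete. This is essentially the same duality-based route one finds in Gale's book, so nothing is lost relative to the paper's citation, and the explicit geometric picture matches the intuition the paper itself gives in \Cref{fig:linear-alternative}.
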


We require the following discrete variant of \Cref{thm:linear-alternative}.
The geometric intuition of this version is illustrated in \Cref{fig:linear-alternative}.
\opt{submission}{It is proven in the appendix.}

\begin{corollaryrep}
    \label{cor:linear-alternative-integer}
    Let $\vM$ be a rational matrix.
    Exactly one of the following statements is true.
    \begin{enumerate}
        \item
        There is an integer vector $\vu \geq \vec{0}$ such that $\vM \vu \geqq \vec{0}$.
        
        \item 
        There is an integer vector $\vv \geqq \vec{0}$ such that $\vv \vM < \vec{0}$.
    \end{enumerate}
\end{corollaryrep}

\begin{proof}
    For convenience when we use \Cref{cor:linear-alternative-integer} in proving \Cref{thm:potential-function-iff-entirely-execution-bounded}, we swapped the roles of $\vu$ and $\vv$ in left- vs. right-multiplication with $\vM$;
    the real-valued version of the statement of \Cref{cor:linear-alternative-integer} is equivalent to \Cref{thm:linear-alternative} by taking the transpose of $\vM$.

    To see that we may assume the vectors are integer-valued if $\vM$ is rational-valued,
    recall that a system of linear equalities/inequalities with rational coefficients has a solution if and only if it has a rational solution.
    Since the system is homogeneous (the matrix-vector product is compared to the zero vector $\vec{0}$),
    any multiple of a solution is also a solution.
    By clearing denominators,
    it has a rational solution if and only if it has an integer solution.
\end{proof}

\begin{figure}
    \centering
    \includegraphics[width=0.8\linewidth]{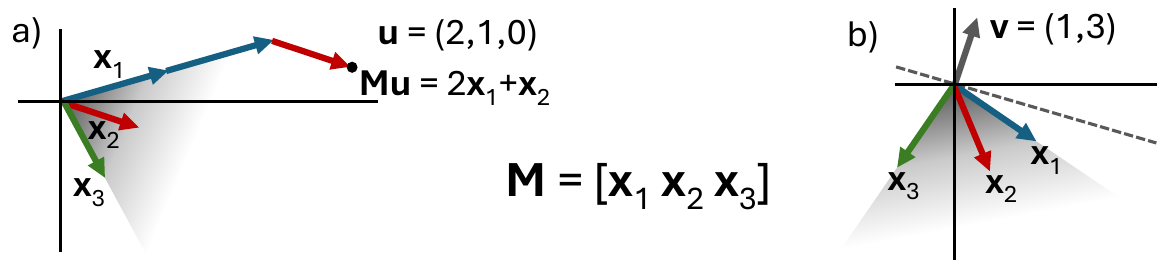}
    \caption{\footnotesize
    Geometric intuition of \Cref{cor:linear-alternative-integer}.
    A matrix $\vM$ has column vectors $\vx_1,\vx_2,\vx_3$.
    The \emph{cone} of $\vM$ is the nonnegative span of these vectors, shown as a faded gray region.
    Exactly one of two scenarios occurs:
    \textbf{a)}
    The cone of $\vM$ intersects the first quadrant (nonnegative orthant in higher dimensions) away from the origin, 
    i.e., some semipositive point ($\vM\vu \geq \vec{0}$ above) is a nonnegative linear combination of vectors $\vx_1,\vx_2,\vx_3$.
    \textbf{b)}
    The cone of $\vM$ does not intersect the first quadrant except at the origin.
    In this case we can draw a dashed line (hyperplane in higher dimensions)
    separating the cone of $\vM$ from the first quadrant.
    The orthogonal vector $\vv \geq \vec{0}$ to this line lies in the first quadrant, but $\vv \vM < 0$ means each vector $\vx_i$ has negative dot product $\vv \cdot \vx_i < 0$, i.e., all $\vx_i$ vectors lie on the \emph{other} side of the line.
    }
    \label{fig:linear-alternative}
\end{figure}

\begin{toappendix}
    Although we do not need the following fact,
    it is worthwhile to observe that,
    if $\vM$ is \emph{integer}-valued (as in our application),
    then the solution $\vu$ or $\vv$ (whichever exists) in \Cref{cor:linear-alternative-integer} has entries that are at most exponential in $\| \vM \|$
    i.e., at most exponential in the sum of absolute values of entries of $\vM$
    (see e.g., \cite{papadimitriou1981complexity}).
    So in particular when we consider $\vM$ having small $O(1)$ size entries,
    this means the solution $\vu$ or $\vv$ has entries that are at most exponential in the number of rows and columns of $\vM$.
    When $\vM$ is a stoichiometric matrix, this corresponds to the number of species and reactions, respectively, of the CRN.
\end{toappendix}

\Cref{cor:linear-alternative-integer} will help us prove the following theorem characterizing CRNs with bounded executions from all configurations.
\Cref{thm:potential-function-iff-entirely-execution-bounded} is used in this paper only to prove 
\Cref{thm:non-collapsing_allvoting_entirely_execution_bounded_cannot_decide_majority_parity,thm:non-collapsing_allvoting_entirely_execution_bounded_decide_almost_constant},
but it may also be of independent interest, since it equates a ``global, infinitary, difficult-to-check'' condition (bounded executions from all configurations) with a ``local, easy-to-check'' condition (having a linear potential function).
}

\opt{final}{The following theorem due to Czerner, Guttenberg, Helfrich, and Esparza, is crucial to proving limitations on execution bounded CRNs such as \Cref{thm:non-collapsing_allvoting_entirely_execution_bounded_cannot_decide_majority_parity,thm:non-collapsing_allvoting_entirely_execution_bounded_decide_almost_constant,}.}

\begin{theorem}[\cite{czerner2024fast}]
    \label{thm:potential-function-iff-entirely-execution-bounded}
    A CRN has a linear potential function if and only if it is entirely execution bounded.
\end{theorem}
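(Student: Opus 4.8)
The plan is to prove both directions, with the easy direction first. For the $(\Rightarrow)$ direction, suppose $\C$ has a linear potential function $\Phi$. By \Cref{rmk:linear-potential-integer-coefficients} we may take $\Phi:\N^\Lambda \to \N$ with integer coefficients, so that every reaction strictly decreases $\Phi$ by at least $1$. Then along any execution $(\vx_0,\vx_1,\dots)$ the quantity $\Phi(\vx_t)$ is a strictly decreasing sequence of nonnegative integers, hence can take at most $\Phi(\vx_0)+1$ values; therefore every execution from $\vx_0$ has length at most $\Phi(\vx_0)$, so $\C$ is execution bounded from $\vx_0$, and since $\vx_0$ was arbitrary, $\C$ is entirely execution bounded.

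For the $(\Leftarrow)$ direction I would argue the contrapositive via \Cref{cor:linear-alternative-integer}, applied to the stoichiometric matrix $\vM$ of $\C$ (whose columns are the net-change vectors $\vp_j - \vr_j$). Observe first that a linear potential function is exactly a nonnegative vector $\vv \geqq \vec 0$ of coefficients $v_S$ with $\vv \vM < \vec 0$ (each reaction strictly decreasing $\Phi$ is precisely the statement that the dot product of $\vv$ with the $j$'th net-change column is negative), i.e. alternative~(2) of \Cref{cor:linear-alternative-integer}. So $\C$ has a linear potential function if and only if alternative~(2) holds, and by the corollary this fails if and only if alternative~(1) holds: there is an integer vector $\vu \geq \vec 0$ with $\vM \vu \geqq \vec 0$. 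It then remains to show that the existence of such a $\vu$ implies $\C$ is \emph{not} entirely execution bounded.

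That final implication is the main obstacle and the crux of the proof. Given $\vu \geq \vec 0$ integral with $\vM \vu \geqq \vec 0$, the vector $\vu$ prescribes a nonnegative integer multiset of reactions whose net effect $\vM\vu$ is nonnegative in every coordinate. The intuition is that applying each reaction $j$ exactly $\vu(j)$ times takes some configuration to a coordinatewise-larger one, yielding a self-covering path, which by \Cref{lem:execution-bdd-iff-no-self-covering-path} contradicts execution boundedness. The subtlety is that $\vu$ only gives the reaction \emph{counts}, not an order in which the reactions are all applicable; a reaction may require reactants that are only produced partway through. The plan is to handle this by choosing a sufficiently large starting configuration: take $\vx$ large enough (e.g. $\vx = N \cdot \sum_j \vu(j)\, \vr_j$ for a suitable multiplier, or more simply a large uniform configuration) so that there is enough of every species to schedule the reactions of $\vu$ in \emph{some} order without ever running out of reactants. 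Concretely, since reachability is additive, it suffices to exhibit one configuration $\vx$ from which the multiset $\vu$ of reactions can be fired in some valid sequence; padding $\vx$ with extra copies of every species removes all applicability obstructions, and the resulting execution ends at $\vx + \vM\vu \geqq \vx$, giving the self-covering path. Thus $\C$ is not entirely execution bounded, completing the contrapositive.

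A remark worth folding in: one must be slightly careful that the self-covering path obtained is genuinely of positive length, which holds because $\vu \geq \vec 0$ is semipositive and hence fires at least one reaction, and that the cycle (when $\vM\vu = \vec 0$) or strictly increasing path (when $\vM\vu \geq \vec 0$) can be repeated indefinitely, exactly as in the forward direction of \Cref{lem:execution-bdd-iff-no-self-covering-path}.
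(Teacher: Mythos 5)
Your proposal is correct and follows essentially the same route as the paper: both directions use the integer-coefficient potential decreasing by at least 1, and the reverse direction applies \Cref{cor:linear-alternative-integer} to the stoichiometric matrix, using a sufficiently large starting configuration to turn a semipositive reaction-count vector $\vu$ with $\vM\vu \geqq \vec 0$ into a self-covering path contradicting \Cref{lem:execution-bdd-iff-no-self-covering-path}. The only difference is presentational (you argue the contrapositive where the paper embeds the same contradiction inside a direct argument), and your explicit handling of the reaction-ordering subtlety via padding is a welcome bit of extra care.
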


\opt{full,submission}{
\begin{proof}
    Let $\calC = (\Lambda,R)$ be a CRN.
    The forward direction is easy:
    assuming $\calC$ has potential function $\Phi$,
    since each reaction decreases $\Phi$ by at least 1
    (see \Cref{rmk:linear-potential-integer-coefficients}),
    from any configuration $\vx$,
    $\calC$ can execute at most $\Phi(\vx)$ reactions while keeping $\Phi$ nonnegative.
    Thus $\calC$ is entirely execution bounded.

    To see the reverse direction, assume that $\calC$ is execution bounded from every configuration,
    and let $\vM$ be the stoichiometric matrix of $\calC$.
    We claim there is no integer vector $\vu \geq \vec{0}$ satisfying $\vM \vu \geqq \vec{0}$;
    for the sake of contradiction suppose otherwise.
    Interpreting $\vu$ as counts of reactions to execute,
    for any sufficiently large configuration $\vx$,
    all reactions in $\vu$ can be applied (in arbitrary order),
    and the vector $\vM \vu$ describes the resulting change in species counts, reaching to configuration $\vy = \vx + \vM \vu$.
    Since $\vM \vu \geqq \vec{0}$,
    this path is self-covering, i.e., $\vy \geqq \vx$.
    But since $\calC$ is execution bounded from every configuration,
    by \Cref{lem:execution-bdd-iff-no-self-covering-path},
    $\calC$ has no self-covering path from any configuration,
    a contradiction.
    This establishes the claim that $\vM \vu \geqq \vec{0}$ has no integer solution $\vu \geq \vec{0}$.
    
    By \Cref{cor:linear-alternative-integer},
    there \emph{is} an integer vector $\vv \geqq \vec{0}$ such that $\vv \vM < \vec{0}$.
    Let $\vv \in \N^\Lambda$ be the coefficients of a linear function $\Phi:\N^\Lambda \to \N$, i.e., $\Phi(\vx) = \vv \cdot \vx$.
    Then the vector $\vv \vM \in \Z^R$ represents the amount $\Phi$ changes by one unit of each reaction, i.e., $\vv \vM(\alpha)$ is the amount $\Phi$ increases when executing reaction $\alpha$ once.
    Since $\vv \vM < \vec{0}$, this means that every reaction strictly decreases $\Phi$, i.e.,
    $\Phi$ is a linear potential function for $\calC$.
\end{proof}
}

\begin{toappendix}
\begin{remark}
    By employing the real-valued version of \Cref{cor:linear-alternative-integer},
    the above proof also shows that \Cref{thm:potential-function-iff-entirely-execution-bounded} holds for the \emph{continuous} model of CRNs~\cite{chen2023rate}, in which species amounts are modeled as continuous nonnegative real concentrations.
    In this case, a continuous CRN would be defined to be execution bounded from configuration $\vx$ if each reaction can be executed by at most a finite (real-valued) amount from $\vx$.
\end{remark}
\end{toappendix}

\subsection{Impossibility of stably deciding majority and parity}

In this section, we prove 
\Cref{thm:non-collapsing_allvoting_entirely_execution_bounded_cannot_decide_majority_parity},
which is a special case of our main negative result, \Cref{thm:non-collapsing_allvoting_entirely_execution_bounded_decide_almost_constant}.
We give a self-contained proof of 
\Cref{thm:non-collapsing_allvoting_entirely_execution_bounded_cannot_decide_majority_parity}
because it is simpler and serves as an intuitive warmup to some of the key ideas used in proving \Cref{thm:non-collapsing_allvoting_entirely_execution_bounded_decide_almost_constant},
without the complexities of dealing with arbitrary semilinear sets.

\Cref{thm:non-collapsing_allvoting_entirely_execution_bounded_cannot_decide_majority_parity} shows a limitation on the computational power of entirely execution bounded, all-voting CRNs, 
but it requires an additional constraint on the CRN for the result to hold
(and we later give counterexamples showing that this extra hypothesis is provably necessary), described in the following definition.

\begin{definition}
\label{defn:non-collapsing}
    Let $\calD$ be a CRD.
    The \emph{output size} of $\calD$ is the function $s:\N \to \N$ defined $s(n) = \min_{\vx,\vy} \{ \|\vy\| \mid \vx \reach \vy, \|\vx\|=n, \vx \text{ is a valid initial configuration}, \vy \text{ is stable} \}$,
    the size of the smallest stable configuration reachable from any valid initial configuration of size $n$.
    A CRD is \emph{non-collapsing} if  $\lim_{n\to\infty}s(n)=\infty$.
\end{definition}

Put another way, $\calD$ is \emph{collapsing} if there is a constant $c$ such that, from infinitely many initial configurations $\vx$,
$\calD$ can reach a stable configuration of size at most $c$.
All population protocols are non-collapsing, since every reaction preserves the configuration size.

\begin{theorem}
\label{thm:non-collapsing_allvoting_entirely_execution_bounded_cannot_decide_majority_parity}
    No non-collapsing, all-voting, entirely execution bounded CRD can stably decide the majority predicate $[X_1 \geq X_2?]$
    or the parity predicate $[X \equiv 1 \mod 2?]$.
\end{theorem}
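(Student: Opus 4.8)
The plan is to invoke \Cref{thm:potential-function-iff-entirely-execution-bounded} to equip the hypothetical CRD with a linear potential function $\Phi$, which by \Cref{rmk:linear-potential-integer-coefficients} we take to have coefficients in $\N$ and to strictly decrease by at least $1$ on every reaction; thus $\Phi(\vc) \geq 0$ for every configuration $\vc$, and the number of reactions in any execution from $\vc$ is at most $\Phi(\vc)$. The quantitative heart of the argument is that flipping the unanimous vote of a large all-voting configuration is expensive: if $\vy$ is a terminal (hence stable) all-$b$ configuration and one reaches from it a stable all-$\overline b$ configuration, then every one of the $\|\vy\|$ voters of type $b$ must be consumed (using $\Upsilon_0 \cap \Upsilon_1 = \emptyset$ and that all species vote), and since each reaction consumes at most $R_{\max} = \max_\alpha \|\vr_\alpha\|$ molecules (a constant of the fixed CRN), this takes at least $\|\vy\|/R_{\max}$ reactions and therefore drops $\Phi$ by at least $\|\vy\|/R_{\max}$.

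For the parity predicate I would build a single infinite sequence of terminal configurations that forces $\Phi$ to decrease without bound. Starting from a terminal configuration $\vy^{(n)}$ reached from $\{nX\}$ (correctness gives it the vote $n \bmod 2$, and non-collapsing makes it large), additivity of reachability gives $\{(n+1)X\} \reach \vy^{(n)} + X$; running to a terminal configuration yields $\vy^{(n+1)}$, which by correctness is all-$\overline{b}$ where $b$ is the vote of $\vy^{(n)}$. Applying the expensive-flip bound together with $\Phi(\vy^{(n)} + X) = \Phi(\vy^{(n)}) + v_X$ gives
\[
    \Phi(\vy^{(n+1)}) \;\leq\; \Phi(\vy^{(n)}) + v_X - \|\vy^{(n)}\|/R_{\max}.
\]
Since $\|\vy^{(n)}\| \geq s(n) \to \infty$ by \Cref{defn:non-collapsing}, for all large $n$ the right-hand side is at most $\Phi(\vy^{(n)}) - 1$, so $\Phi(\vy^{(n)}) \to -\infty$, contradicting $\Phi \geq 0$.

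For majority the same template applies along the diagonal. From an all-yes terminal configuration $\vy^{(m)}$ for the tie input $\{mX_1,mX_2\}$, additivity and correctness give $\vy^{(m)} + X_2 \reach \vw^{(m)}$ (all-no, since $\{mX_1,(m+1)X_2\}$ rejects) and then $\vw^{(m)} + X_1 \reach \vy^{(m+1)}$ (all-yes again, since $\{(m+1)X_1,(m+1)X_2\}$ is a tie). The first flip alone consumes all $\|\vy^{(m)}\|$ yes-voters, so the expensive-flip bound yields $\Phi(\vy^{(m+1)}) \leq \Phi(\vy^{(m)}) + v_{X_1} + v_{X_2} - \|\vy^{(m)}\|/R_{\max}$, and since $\|\vy^{(m)}\| \geq s(2m) \to \infty$, non-collapsing again drives $\Phi$ below $0$, a contradiction.

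I expect the main obstacle to be making the expensive-flip step fully rigorous. One must argue that passing from an all-$b$ configuration to a stable all-$\overline b$ one forces the net count of type-$b$ species to fall from $\|\vy\|$ to $0$, bound the number of reactions from below by $\|\vy\|/R_{\max}$ using the per-reaction bound on consumed reactants, and confirm that at least one reaction genuinely occurs so the flip cannot be ``free'' (this holds because a nonempty all-$b$ configuration, such as $\vy^{(n)}+X$, can never itself be a stable all-$\overline b$ configuration). A secondary point of care is verifying that every intermediate configuration to which correctness is applied is genuinely reachable from a valid initial configuration, which is exactly what additivity of the reachability relation supplies.
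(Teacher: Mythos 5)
Your proposal is correct and follows essentially the same route as the paper's proof: invoke the linear potential function from \Cref{thm:potential-function-iff-entirely-execution-bounded}, build an alternating sequence of stable unanimous configurations by adding one input molecule at a time (via additivity of reachability), use all-voting plus non-collapsing to show each vote flip costs $\Omega(\|\vy\|)$ reactions, and derive a contradiction with $\Phi \geq 0$. If anything, your denominator $R_{\max}=\max_\alpha\|\vr_\alpha\|$ is the more careful choice, since a reaction can convert $\|\vr\|$ voters without any net decrease in molecular count, whereas the paper's stated constant $\max(\|\vr\|-\|\vp\|)$ measures net consumption.
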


\begin{proof}
    Let $\calD=(\Lambda,R,\Sigma,\Upsilon_\mathrm{Y}, \Upsilon_\mathrm{N},\vs)$ be a CRD obeying the stated conditions,
    and suppose for the sake of contradiction that $\calD$ stably decides the majority predicate (so $\Sigma=\{X_1,X_2\}$).
    
    We consider the sequence of stable configurations $\va_1,\vb_1,\va_2,\vb_2,\dots$ defined as follows.
    Let $\va_1$ be a stable configuration reachable from initial configuration $\vs+\{X_1, X_2\}$;
    since the correct answer is yes,
    all species present in $\va_1$ vote yes.
    Now add a single copy of $X_2$.
    By additivity,
    the configuration $\va_1 + \{X_2\}$ is reachable from $\vs + \{X_1,2X_2\}$,
    for which the correct answer in this case is no.
    Thus, since $\calD$ stably decides majority,
    from $\va_1 + \{X_2\}$,
    a stable ``no'' configuration is reachable; 
    call this $\vb_1$.
    Now add a single $X_1$.
    Since the correct answer is yes, from $\vb_1 + \{X_1\}$ a stable ``yes'' configuration is reachable,
    call it $\va_2$.

    Continuing in this way, we have a sequence of stable configurations
    $
        \va_1,
        \vb_1, 
        \va_2,
        \vb_2, 
        \dots
    $
    where all species in $\va_i$ vote yes and all species in $\vb_i$ vote no.
    Since $\calD$ is non-collapsing,
    the size of the configurations $\va_i$ and $\vb_i$ increases without bound as $i \to\infty$.
    (Possibly $\|\va_{i+1}\| < \|\va_i\|$, i.e., the size is not necessarily monotonically increasing, but for all sufficiently large $j > i$, we have $\|\va_j\| > \|\va_i\|$.)
    Since all species vote,
    for some constant $\delta > 0$,
    to get from $\va_i+\{X_2\}$ to $\vb_i$,
    at least $\delta \| \va_i \|$ reactions must occur.
    This is because all species in $\va_i$ must be removed since they vote yes,
    and each reaction removes at most $O(1)$ molecules.
    (Concretely, let $\delta = 1 / \max_{(\vr,\vp) \in R} \| \vr \| - \| \vp \|$, i.e., 1 over the most net molecules consumed in any reaction.)
    Similarly,
    to get from $\vb_i+\{X_1\}$ to $\va_{i+1}$, 
    at least $\delta \| \vb_i \|$ reactions must occur.

    Since $\calD$ is entirely execution bounded, by \Cref{thm:potential-function-iff-entirely-execution-bounded},
    $\calD$ has a linear potential function $\Phi(\vx) = \vv \cdot \vx$,
    where $\vv \geq \vec{0}$.
    Adding a single $X_2$ to $\va_i$ increases $\Phi$ by the constant $\vv(X_2)$.
    Since $\|\va_i\|$ grows without bound,
    the number of reactions to get from $\va_i + \{X_2\}$ to $\vb_i$ increases without bound as $i\to\infty$,
    and since each reaction strictly decreases $\Phi$ by at least 1,
    the total change in $\Phi$ that results from adding $X_2$ and then going from $\va_i + \{X_2\}$ to $\vb_i$ is unbounded in $i$, so unboundedly negative for sufficiently large $i$
    (negative once $i$ is large enough that $\delta \| \va_i \| \geq \vv(X_2) + 2$).
    Similarly, adding a single $X_1$ to $\vb_i$ and going from $\vb_i + \{X_1\}$ to $\va_{i+1}$,
    the resulting total change in $\Phi$ is unbounded and (for large enough $i$) negative.

    $\Phi$ starts this process at the constant $\Phi(\vs+\{X_1,X_2\})$.
    Before $\|\va_i\|$ and $\|\vb_i\|$ are large enough that 
    $\delta \| \va_i \| \geq \vv(X_2) + 2$ and 
    $\delta \| \vb_i \| \geq \vv(X_1) + 2$
    (i.e., large enough that the net change in $\Phi$ is negative resulting from adding a single input and going to the next stable configuration),
    $\Phi$ could increase,
    if $\Phi(\{X_1\})$ 
    (resp. $\Phi(\{X_2\})$) 
    is larger than the net decrease in $\Phi$ due to following reactions to get from $\va_i + \{X_2\}$ to $\vb_i$ 
    (resp. from $\vb_i + \{X_1\}$ to $\va_i$).
    
    However, since $\calD$ is non-collapsing,
    this can only happen for a constant number of $i$
    (so $\Phi$ never reaches more than a constant above its initial value $\Phi(\vs+\{X_1, X_2\})$),
    after which $\Phi$ strictly decreases after each round of this process.
    At some point in this process,
    $\calD$ will not be able to reach all the way to the next $\va_i$ or $\vb_i$ without $\Phi$ becoming negative,
    a contradiction.

    The argument for parity is similar, but instead of alternating adding $X_1$ then $X_2$, in each round we always add one more $X$ to flip the correct answer.
\end{proof}

\Cref{thm:non-collapsing_allvoting_entirely_execution_bounded_cannot_decide_majority_parity}
is false without the non-collapsing hypothesis.
The following collapsing, leaderless (but all-voting and entirely execution bounded) CRD stably decides majority:
Species $X_1,x_1$ vote yes, while $X_2,x_2$ vote no:
\begin{align*}
    X_1+X_2 &\rxn x_1+x_2
    \\
    X_1+x_2 &\rxn X_1
    \\
    X_2+x_1 &\rxn X_2
    \\
    x_1+x_2 &\rxn x_1
\end{align*}
It has bounded executions from every configuration:
$\min(\#X_1,\#X_2)$ of the first reaction can occur,
and the other reactions decrease molecular count,
so are limited by the total configuration size.
However, it is collapsing since,
for any $n$, there exists an input of size $n$ that reaches a stable configuration of size 1.
\Cref{thm:non-collapsing_allvoting_entirely_execution_bounded_cannot_decide_majority_parity}
is similarly false without the all-voting hypothesis; for each of the reactions with one product above, add another non-voting product $W$.
This converts the CRD to be non-collapsing but not all-voting.
Of course,
the execution bounded hypothesis is also necessary:
the original population protocols paper~\cite{angluin2004computation} showed that all-voting, non-collapsing, leaderless population protocols can stably decide all semilinear predicates.

The following collapsing, all-voting, leaderless (but entirely execution bounded) CRD stably decides parity.
Let the input species be named $X_1$.
Species $X_1$ votes yes, $X_0$ votes no:
\begin{align*}
    X_1+X_1 &\rxn X_0
    \\
    X_1+X_0 &\rxn X_1
    \\
    X_0+X_0 &\rxn X_0
\end{align*}
\opt{full}{
It has bounded executions from any configuration:
exactly $\# X_1 + \# X_0 - 1$ reactions can occur since each reduces $\#X_1 + \#X_0$ by 1.
Similar to above, by adding the non-voting product $W$ to each reaction above,
this CRD becomes non-collapsing but not all-voting, showing that the all-voting hypothesis is also necessary for stably deciding parity.
}

\subsection{Impossibility of stably deciding not eventually constant predicates}

We now present our main negative result, \Cref{thm:non-collapsing_allvoting_entirely_execution_bounded_decide_almost_constant},
which generalizes \Cref{thm:non-collapsing_allvoting_entirely_execution_bounded_cannot_decide_majority_parity} to show that such CRNs can stably decide only very limited (eventually constant) predicates.

\begin{definition}
\label{defn:eventually_constant_predicate}
    Let $\phi: \N^d \rightarrow \{0, 1\}$ be a predicate. We say $\phi$ is \emph{eventually constant} if there is $n_0 \in \N$ such that $\phi$ is constant on $\N_{\geq n_0}^d=\left\{\vx \in \N^d \mid(\forall i \in\{1, \ldots, d\})\ \vx(i) \geq n_0\right\}$, i.e., either $\phi^{-1}(0) \cap \N_{\geq n_0}^d=\emptyset$ or $\phi^{-1}(1) \cap \N_{\geq n_0}^d=\emptyset$. 
\end{definition}
In other words, although $\phi$ may have an infinite number of each output, ``sufficiently far from the boundary of the positive orthant'' (where all coordinates exceed $n_0$), only one output appears. 
\opt{submission,full}{See \Cref{fig:almost_vs_eventually_vs_not_constant} for a 2D example.}
\opt{submission}{A complete proof appears in the appendix.}
\opt{final}{A complete proof appears in the full version of this paper.}

\begin{toappendix}

\begin{figure}
    \centering
    \includegraphics[width=0.8\linewidth]{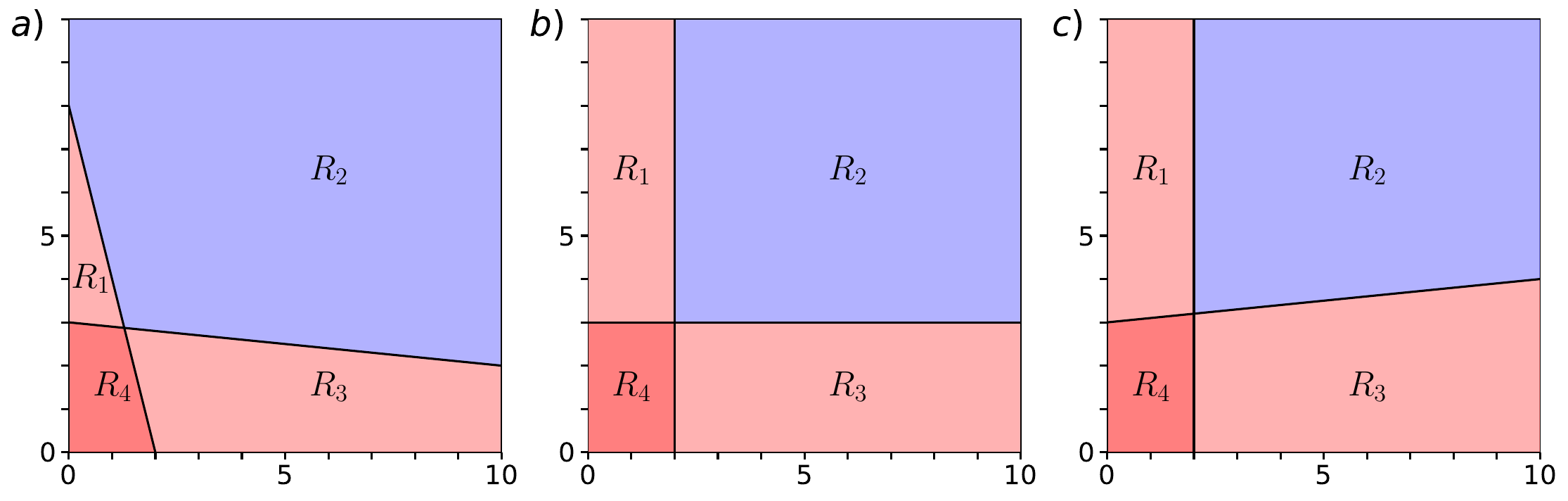}
    \caption{Boolean combinations of threshold predicates in 2D. 
    In all cases, inputs of region $R_2$ output \emph{yes}, while $R_1, R_3, R_4$ output \emph{no}. 
    In part (a), the \emph{no} regions are finite, making the predicate trivially \emph{eventually constant}. 
    Part (b) features both infinite \emph{no} and \emph{yes} regions; setting $m_0 \geq 3$, the output within $\N^k_{\geq m_0}$ is constant, rendering the predicate \emph{eventually constant}.
    In contrast, part (c) has two totally unbounded regions $R_2$ and $R_3$, both of which contain points that are arbitrarily large on all components. 
    We further observe that this implies a separating line with a positive, non-infinite slope between the two regions (in 2D). 
    Generalized to more dimensions: a hyperplane with a normal vector with at least one positive and at least one negative component (\Cref{lem:threshold-not-eventually-constant-regions-two-adjacent-infinite-opposite-output}).}
    \label{fig:almost_vs_eventually_vs_not_constant}
\end{figure}

For any set $B \subseteq \N^d$ and $\vv \in \N^d$, write $B+\vv$ to denote the set $\{ \vx + \vv \mid \vx \in B\},$
which is $B$ translated by vector $\vv$.
Let $\vu_i \in \N^d$ denote the unit vector in direction $i$, i.e., $\vu_i(i)=1$ and $\vu_i(j)=0$ for $j \neq i$.

\begin{definition}
\label{defn:periodic}
    We say $A \subseteq \N^d$ is \emph{periodic} if, for some $k \in \N^+$, for some finite set $F \subseteq \{0,1,\dots,k-1\}^d$,
    $A = \bigcup_{n_1,\dots,n_d \in \N} F + \sum_{i=1}^d k \cdot n_i \cdot \vu_i$.
    We say $k$ is the \emph{period} of $A$ and say that $A$ is $k$-periodic.
    Equivalently, $A$ is \emph{$k$-periodic} if, for all $\vx \in \N^d$ and all unit vectors $\vu_i$, $\vx \in A \iff \vx + k \cdot \vu_i \in A$.
\end{definition}

In other words, $A$ is periodic if it is a union of copies of a finite subset $F$ of the $k \times k \times \dots \times k$ hypercube with a corner at the origin,
translated in each direction by every nonnegative integer multiple of the hypercube's width $k$.
See \Cref{fig:periodic_set_2d}.
Note that if $A$ is $k$-periodic, then it is also $k'$-periodic for every positive integer multiple $k' = i \cdot k$ of $k$.

\begin{figure}
    \centering
    \includegraphics[width=0.8\linewidth]{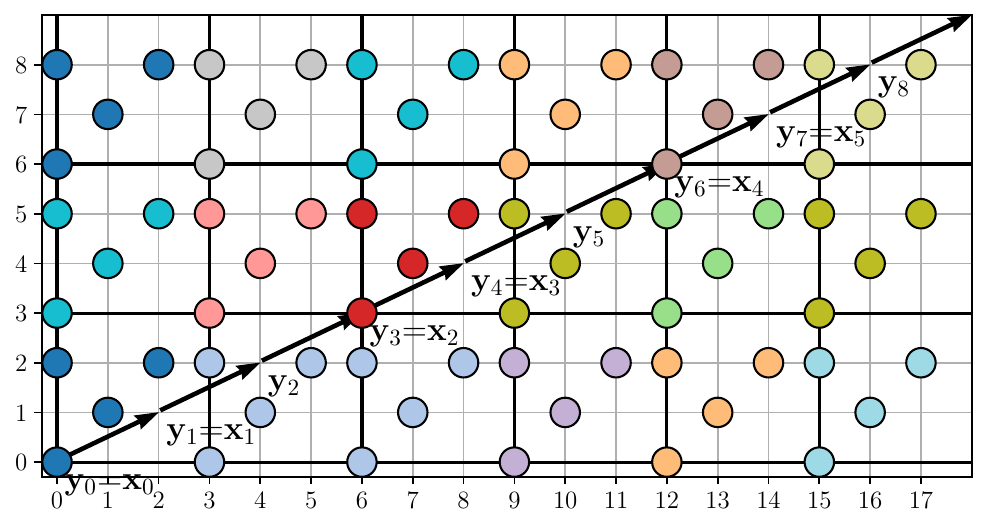}
    \caption{A 2D periodic set $A$.
    The finite set $F = \{(0,0), (0,2), (1,1), (2,2)\}$, a subset of the $3 \times 3$ square $\{0,1,2\}^2$, is translated in both $x$ and $y$ directions by every nonnegative vector with entries that are multiples of 3.
    If the set is periodic but not constant, 
    then there is some vector $\vv$ such that the sequence $\vy_0,\vy_1,\dots$,
    each being the previous plus $\vv$
    ($\vv=(2,1)$ in this example),
    is periodic but not constant,
    and then a regularly-spaced subsequence $\vx_0,\vx_1,\dots$,
    alternating points in $A$ and points not in $A$,
    satisfies \Cref{lem:semilinear-not-almost-constant-alternating-period-vector-and-offset}.
    For brevity,
    the figure shows $\vy_0$ as the origin, but in the proof of \Cref{lem:semilinear-not-almost-constant-alternating-period-vector-and-offset},
    $\vy_0$ is chosen to be sufficiently far from the origin that, moving in direction $\vv$ infinitely from that point will not cross any hyperplanes defined by the threshold sets of that proof.
    }
    \label{fig:periodic_set_2d}
\end{figure}

\begin{lemma}
\label{lem:mod-sets-periodic-hypercube}
    Let $A \subseteq \N^d$ be a Boolean combination of mod sets.
    Then $A$ is periodic.
\end{lemma}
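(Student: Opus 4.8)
The plan is to first establish periodicity for a single mod set and then bootstrap to arbitrary Boolean combinations by exhibiting a \emph{common} period. Throughout I would work with the second characterization in \Cref{defn:periodic}: a set $A$ is $k$-periodic iff for every $\vx \in \N^d$ and every unit vector $\vu_i$ we have $\vx \in A \iff \vx + k \cdot \vu_i \in A$. This characterization is the convenient one because it turns periodicity into a local, pointwise condition that behaves well under Boolean operations.

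First I would show that a single mod set $M = \{ \vx \in \N^d \mid \sum_{j=1}^d w_j \vx(j) \equiv c \bmod m \}$ is $m$-periodic. Adding $m \cdot \vu_i$ to $\vx$ changes the weighted sum $\sum_j w_j \vx(j)$ by exactly $w_i \cdot m$, which is $\equiv 0 \bmod m$; hence the residue of the weighted sum modulo $m$ is unchanged, so $\vx \in M \iff \vx + m \cdot \vu_i \in M$. This gives $m$-periodicity of $M$ immediately.

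Next comes the key step: passing to a common period. Given a Boolean combination of finitely many mod sets $M_1, \dots, M_r$ with moduli $m_1, \dots, m_r$, let $k = \mathrm{lcm}(m_1, \dots, m_r)$ (the product would also work). Since each $M_t$ is $m_t$-periodic and $m_t \mid k$, each $M_t$ is also $k$-periodic, using the observation following \Cref{defn:periodic} that an $m$-periodic set is $k$-periodic for every positive multiple $k$ of $m$. Now all building blocks share the single period $k$, and I would verify that $k$-periodicity (for this fixed $k$) is preserved by the three Boolean operations directly from the iff characterization. For complement, $\vx \in \overline{A} \iff \vx \notin A \iff \vx + k\vu_i \notin A \iff \vx + k\vu_i \in \overline{A}$. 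For union, $\vx \in A \cup B$ iff $\vx \in A$ or $\vx \in B$, iff $\vx + k\vu_i \in A$ or $\vx + k\vu_i \in B$, iff $\vx + k\vu_i \in A \cup B$; intersection is identical with ``or'' replaced by ``and.'' By structural induction on the Boolean expression, the whole combination $A$ is $k$-periodic, hence periodic by \Cref{defn:periodic}.

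The main obstacle — really the only nontrivial point — is the common-period step: the naive Boolean-closure argument fails because two mod sets with, say, coprime moduli have incompatible native periods, so one cannot argue with each set's own period in isolation. The fix is precisely to coarsen every period up to the least common multiple $k$, which is legitimate because periodicity is preserved under passing to integer multiples of the period. Once a single $k$ works for every atom of the Boolean combination, the closure argument is a routine unwinding of the pointwise definition.
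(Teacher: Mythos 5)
Your proposal is correct and follows essentially the same route as the paper: prove a single mod set is $m$-periodic by noting that adding $m\vu_i$ leaves the weighted sum unchanged modulo $m$, then pass to a common period via the least common multiple and verify closure under complement, union, and intersection pointwise from the characterization in \Cref{defn:periodic}. The only cosmetic difference is that you coarsen all moduli to one common period up front before a structural induction, whereas the paper takes the lcm of two periods inside each inductive step; the arguments are interchangeable.
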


\begin{proof}
    We prove this by induction on the number of mod sets.
    For the base case, let 
    $A = \{ \vx \mid \vw \cdot \vx \equiv c \mod m \}$ be a single mod set,
    where $\vw \in \{0,\dots,m-1\}^d$ and $c,m \in \N$ are constants.
    Letting $k=m$ 
    and
    $F = A \cap \{0,\dots,m-1\}^d$
    in \Cref{defn:periodic} works.
    Let $\vx \in \N^d$.
    Then for all $1 \leq i \leq d$,
    $\vw \cdot \vx \equiv \vw \cdot (\vx + m \vu_i) \mod m$,
    so 
    $\vw \cdot \vx \equiv c \mod m \iff \vw \cdot (\vx + m \vu_i) \equiv c \mod m$,
    meaning that $\vx \in A \iff \vx + m \vu_i \in A$,
    so $A$ is $k$-periodic.

    The inductive case amounts to showing that periodic sets are closed under Boolean operations of union, intersection, and complement.
    Clearly the complement of any periodic set is also periodic.
    
    Inductively assume that $A_1,A_2 \subseteq \N^d$ are periodic;
    we argue that $A_1 \cup A_2$ is periodic.
    Letting $k$ be the least common multiple of their periods, we may assume both $A_1$ and $A_2$ are $k$-periodic with the same period $k$.
    Then for all $\vx \in \N^d$
    and all unit vectors $\vu_i$,
    $\vx \in A_1 \iff \vx + k \cdot \vu_i \in A_1$
    and
    $\vx \in A_2 \iff \vx + k \cdot \vu_i \in A_2$.
    Thus
    $\vx \in A_1 \cup A_2 \iff \vx + k \cdot \vu_i \in A_1 \cup A_2$,
    so $A_1 \cup A_2$ is also $k$-periodic.
    Similar reasoning shows $A_1 \cap A_2$ is $k$-periodic (one can also appeal to DeMorgan's Laws).
\end{proof}

Each threshold set $T$ is defined by a hyperplane that partitions $\N^d$ into the sets $T$ 
(on one side of the hyperplane, including integer points on the hyperplane itself) 
and $\overline{T}$ 
(on the other side of the hyperplane).
More generally, several threshold sets partition $\N^d$ into multiple disjoint subsets we call ``regions''.
Furthermore, any predicate that is a Boolean combination of threshold sets has constant output in any region; the next definition formalizes this.

\begin{definition}
\label{defn:threshold-regions}
    Let $A \subseteq \N^d$ be Boolean combination of threshold sets $T_1,\dots,T_k \subset \N^d$.
    A \emph{region} of $A$ is a convex polytope $R \subset \R_{\geq 0}^d$ such that, for all $\vx,\vy \in R \cap \N^d$,
    for all $1 \leq i \leq k$,
    $\vx \in T_i \iff \vy \in T_i$.
    The \emph{output} of the region $R$ is the value 1 if $R \cap \N^d \subset A$ and 0 if $R \cap \N^d \cap A = \emptyset$.
    (Note these are the only two possibilities, since no individual threshold set $T_i$ is exited or entered as we move within $R$.)
    A region $R$ is \emph{totally unbounded}
    if, for all $c \in \N$,
    $R \cap \N_{\geq c}^d \neq \emptyset$,
    i.e., $R$ contains points that are arbitrarily large on all components.
    A region is called \emph{partially bounded} if it is not totally unbounded.
\end{definition}

Put another way, predicates defined by Boolean combinations of threshold sets are defined by $(d-1)$-dimensional hyperplanes that partition $\N^d$ into regions, where in each region, the output of the predicate is all yes, or all no.
In fact this is an exact characterization of Boolean combinations of threshold predicates.

\begin{definition}
    For any set $A \subseteq \R^d$,
    the \emph{recession cone} of $A$ is
    \[
        \recc(A) = \{ \vv \in \R^d \mid (\forall \vx \in A)(\forall \lambda > 0)\ \vx + \lambda \vv \in A \},
    \]
    the set of vectors $\vv$ such that, 
    from any point in $A$, one can move in direction $\vv$ forever without leaving $A$.
\end{definition}

\begin{observation}
\label{obs:region-totally-unbounded-iff-recc-contains-positive-vector}
    A region $R$ defined by threshold sets is totally unbounded if and only if $\recc(R) \cap \R_{> 0}^d \neq \emptyset$,
    i.e., the recession cone of $R$ contains a positive vector.    
\end{observation}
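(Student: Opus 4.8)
The plan is to treat $R$ as a rational polyhedron and reduce both directions to standard facts about recession cones, with the only real care needed at the integrality step. Since $R$ is a region cut out by finitely many threshold sets, it has the form $R = \{\vx \in \R_{\geq 0}^d : A\vx \leqq \vb\}$ for an integer matrix $A$ and integer vector $\vb$ (each threshold set $\vw \cdot \vx \leq c$ or $\geq c$ contributes one inequality with integer entries). Consequently $\recc(R) = \{\vv \in \R_{\geq 0}^d : A\vv \leqq \vec 0\}$ is a \emph{rational} polyhedral cone, finitely generated by rational vectors via Minkowski--Weyl. I will also use that the region is nonempty in integers, $R \cap \N^d \neq \emptyset$, which holds for the cells of the threshold arrangement that actually partition $\N^d$; this is precisely what the reverse direction needs.

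For the forward direction, suppose $R$ is totally unbounded, so for every $c$ there is an integer point $\vx_c \in R$ with $\vx_c \geqq c\,\vec 1$ (where $\vec 1$ is the all-ones vector). I would invoke the Minkowski--Weyl decomposition $R = P + \recc(R)$, where $P$ is a bounded polytope, and write $\vx_c = \vp_c + \vc_c$ with $\vp_c \in P$ and $\vc_c \in \recc(R)$. Since $P$ is bounded there is a constant $M$ with $\|\vp_c\|_\infty \leq M$ for all $c$, so coordinatewise $\vc_c = \vx_c - \vp_c \geqq (c - M)\,\vec 1$. Taking any $c > M$ yields a strictly positive vector $\vc_c \in \recc(R) \cap \R_{> 0}^d$, as desired. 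This direction is essentially bookkeeping on top of Minkowski--Weyl and presents no difficulty.

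For the reverse direction, suppose $\vv \in \recc(R) \cap \R_{> 0}^d$. The crux --- and the step I expect to be the main obstacle --- is upgrading this real positive direction to an integer one, so that moving along it from an integer point keeps us on the lattice. Because $\recc(R)$ is a rational cone, $\vv = \sum_j \lambda_j \vg_j$ is a nonnegative combination of its rational generators; positivity of each coordinate is an open condition, so perturbing the $\lambda_j$ to nearby nonnegative rationals produces a rational positive vector in $\recc(R)$, and clearing denominators gives an integer $\vv' \in \recc(R)$ with $\vv' > \vec 0$. Fixing any integer point $\vz \in R \cap \N^d$, the recession property gives $\vz + n\vv' \in R \cap \N^d$ for every $n \in \N$, and since every coordinate of $\vv'$ is positive these points satisfy $\vz + n\vv' \geqq c\,\vec 1$ once $n$ is large; hence $R \cap \N_{\geq c}^d \neq \emptyset$ for all $c$, i.e.\ $R$ is totally unbounded. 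The genuine content therefore lies in this integrality argument, which relies both on the rationality of the threshold-defined cone and on $R$ containing an integer point.
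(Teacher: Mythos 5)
The paper states this as an \emph{observation} and provides no proof of its own, so there is nothing to compare against; judged on its own, your argument is correct and complete. Both directions are the standard polyhedral facts: Minkowski--Weyl ($R = P + \recc(R)$ with $P$ bounded) forces the recession part of a coordinatewise-large integer point to be strictly positive, and the rationality of the cone $\{\vv \geqq \vec{0} : A\vv \leqq \vec{0}\}$ lets you replace a real positive recession direction by an integer one and march along the lattice from any integer point of $R$. You were right to flag the two places where the paper's definitions are loose: regions carved out by the \emph{complements} of threshold sets involve strict inequalities (harmless over $\Z$, since $\vw \cdot \vx > c$ becomes $\vw \cdot \vx \geq c+1$), and the reverse direction genuinely needs $R \cap \N^d \neq \emptyset$, which holds for the regions that actually appear in the partition of $\N^d$ but is not forced by the bare definition. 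Your explicit handling of the integrality step is, if anything, more careful than what the paper's usage of the observation requires.
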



\begin{lemma}
\label{lem:threshold-not-eventually-constant-regions-two-adjacent-infinite-opposite-output}
    Let $A \subseteq \N^d$ be Boolean combination of threshold sets that is not eventually constant.
    Then there are two adjacent totally unbounded regions $R_0$, $R_1$ with opposite outputs, such that the normal vector $\vh$ of the hyperplane $H$ separating $R_0$ and $R_1$ has at least one negative component and at least one positive component.
\end{lemma}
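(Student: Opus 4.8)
The plan is to work with the hyperplane arrangement defined by the threshold sets $T_1,\dots,T_k$, whose full-dimensional cells are the regions of $A$, each with constant output. The strategy has three parts: (i) show that every point sufficiently far from the boundary of the orthant lies in a totally unbounded region; (ii) use non-eventual-constancy to produce a yes-region and a no-region that are both totally unbounded, then connect them by a straight segment that stays far out, whose first output flip reveals two adjacent totally unbounded regions of opposite output; and (iii) read off the sign pattern of the separating hyperplane's normal from the recession cones of the two regions.

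For (i), I would first show that a partially bounded region $R$ --- equivalently, by \Cref{obs:region-totally-unbounded-iff-recc-contains-positive-vector}, one whose recession cone contains no positive vector --- satisfies $\sup_{\vx \in R}\min_i \vx(i) < \infty$. Writing $R=\{\vx : \mathbf{A}\vx \le \vb\}$ (folding nonnegativity into $\mathbf{A},\vb$), the quantity $\sup\{t : \vx\in R \text{ and } \vx(i)\ge t \text{ for all } i\}$ is a linear program whose feasible set (in $(\vx,t)$-space) is unbounded in $t$ only if there is a direction $\vu$ with $\mathbf{A}\vu \le \vec{0}$ and $\vu > \vec{0}$, i.e.\ a positive recession vector, which $R$ lacks; hence the program is bounded and there is $M_R$ with $R \cap \R_{\ge M_R}^d=\emptyset$. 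Taking $n_0$ larger than all $M_R$ over the finitely many partially bounded regions, the set $\R_{\ge n_0}^d$ meets only totally unbounded regions. Since $A$ is not eventually constant, both outputs occur among integer points of $\N_{\ge n_0}^d$, and each such point lies in a totally unbounded region, yielding a totally unbounded region $R_{\mathrm{yes}}$ of output $1$ and $R_{\mathrm{no}}$ of output $0$.

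For (ii), which I expect to be the main obstacle, I would pick integer points $\vy \in R_{\mathrm{yes}}$ and $\vz \in R_{\mathrm{no}}$ with all coordinates $\ge n_0$, perturbing slightly to real interior points so that the segment $\sigma(t)=(1-t)\vy+t\vz$ crosses the hyperplanes one at a time, transversally through relative interiors of facets (a generic segment avoids the measure-zero lower-dimensional faces). Each coordinate of $\sigma(t)$ is a convex combination of two values $\ge n_0$, hence $\ge n_0$, so $\sigma$ stays in $\R_{\ge n_0}^d$ and therefore passes only through totally unbounded regions. Along $\sigma$ the output changes from $1$ to $0$ at least once; at such a crossing $\sigma$ traverses a facet lying on some threshold hyperplane $H$ separating two adjacent totally unbounded regions $R_0,R_1$ of opposite output.

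For (iii), write $H=\{\vx : \vh\cdot\vx = c\}$ with $R_0 \subseteq \{\vh\cdot\vx \le c\}$ and $R_1 \subseteq \{\vh\cdot\vx \ge c\}$. By \Cref{obs:region-totally-unbounded-iff-recc-contains-positive-vector} each $\recc(R_j)$ contains a positive vector $\vv_j > \vec{0}$. Since $\vv_0$ is a recession direction of $R_0$, the value $\vh\cdot(\vx+\lambda\vv_0)$ must stay $\le c$ for all $\lambda>0$, forcing $\vh\cdot\vv_0 \le 0$; symmetrically $\vh\cdot\vv_1 \ge 0$. As $\vv_0>\vec{0}$ and $\vh\neq\vec{0}$, the inequality $\vh\cdot\vv_0\le 0$ is impossible if all components of $\vh$ are nonnegative, so $\vh$ has a negative component; likewise $\vh\cdot\vv_1\ge 0$ forces a positive component. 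This produces the desired mixed-sign normal and completes the lemma. The two delicate points are the boundedness argument in (i), where I must handle real versus integer points so that far-out points genuinely avoid partially bounded cells, and the genericity in (ii), ensuring the connecting segment crosses facets cleanly between totally unbounded cells rather than grazing lower-dimensional faces.
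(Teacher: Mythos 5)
Your proposal is correct and follows essentially the same route as the paper's proof: establish that all points sufficiently far from the orthant boundary lie in totally unbounded regions, connect a far-out yes-point to a far-out no-point by a path that stays far out to locate two adjacent totally unbounded regions of opposite output, and then use total unboundedness of both regions to force the separating normal to have mixed signs. Your variations---a generic straight segment in place of the paper's unit-step lattice path, an explicit LP-boundedness argument for the claim the paper only asserts, and recession-cone vectors in place of an increasing sequence of points in the sign argument---are cosmetic and all sound.
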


\begin{proof}
    See \Cref{fig:threshold_sets} for an example in 2D.
    Since $A$ is not eventually constant, it must have two totally unbounded regions $R_0$ and $R_1$ with opposite outputs;
    assume WLOG that $R_i$ has output $i$.
    Let $c \geq 0$ be sufficiently large that all partially bounded regions of $A$ are subsets of $B = \N^d \setminus \N_{\geq c}^d$.
    Now, simply pick any points $\vx_0 \in R_0 \setminus B$ and $\vx_1 \in R_1 \setminus B$.
    There is some path from $\vx_0$ to $\vx_1$ that follows only unit vectors (i.e., moves only to adjacent points that are distance 1 from the previous point),
    such that every intermediate point $\vx'$ also obeys $\vx' \not \in B$.
    
    Then this path never enters a partially bounded region of $A$,
    since they are all subsets of $B$.
    Thus, since the path starts in a region $R_0$ with output 0,
    ends in a region $R_1$ with output 1,
    there must be two adjacent points $\va,\vb$ on the path,
    where $\va$ is in a totally unbounded region with output 0 and $\vb$ is in a totally unbounded region with output 1.

    Finally, we must that the normal vector of the hyperplane separating $R_0$ from $R_1$ has a negative and a positive entry. 
    Recall that a threshold set $T$ is defined by $T = \{ \vx \in \mathbb{N}^d \mid \vw \cdot \vx \leq a \}$, where $\vw = (w_1, \dots, w_d) \in \N^d$ and $a \in \N$ (\cref{defn:threshold_and_mod_set}). 
    Since $A$ is a Boolean combination of threshold sets and $R_0, R_1$ are adjacent with opposite outputs, there must be some threshold set $T$ such that $R_1 \subseteq T$, but $R_0 \cap T = \emptyset$ (or vice versa, but assume $R_1 \subseteq T$ WLOG, since we could replace $T$ with $\overline{T}$ in the Boolean combination defining $A$).
    Equivalently, we can think of the regions $R_0$ and $R_1$ as being separated by the hyperplane $\vw \cdot \vx = a$, with normal vector $\vw$ and offset $a$, such that all points $\vx \in R_1$ obey $\vw \cdot \vx \leq a$, 
    and all points $\vx \in R_0$ obey $\vw \cdot \vx > a$.
    The transition between the regions at points $\va$ and $\vb$ involves crossing the hyperplane, where the inequality changes from $\leq a$ to $> a$, which defines the boundary between different outputs (0 in $R_0$ and 1 in $R_1$). Therefore, the points on the hyperplane $\vw \cdot \vx = a$ necessarily lie exactly at the boundary between these regions.
    
    We show that $\vw$ cannot be nonnegative or nonpositive. Suppose $\vw \geq \vec{0}$ (scale the normal vector by $-1$ otherwise). 
    Since $R_1$ is totally unbounded, it contains points that are arbitrarily large on all components. 
    More formally, there is a strictly increasing sequence $\vx_1 < \vx_2 < \dots$ such that all $\vx_i \in R_1.$
    Since $\vw \geq \vec{0}$, 
    $\lim_{i \to \infty} \vw \cdot \vx_i = \infty$.
    This contradicts the previous assumption that
    all points $\vx \in R_1$
    obey
    $\vw \cdot \vx \leq a$ (geometrically, we would cross the hyperplane somewhere and land in $R_0$).
    Symmetric reasoning applies to the case $\vw \leq \vec{0}$.
    We conclude that the separating hyperplane must have a normal vector $\vw$ with at least one positive and at least one negative component, establishing the lemma.
\end{proof}

The next lemma shows that the there exists a vector $\vv > \vec{0}$ parallel to the hyperplane separating the two regions. In other words, we can move along $H$ while increasing every component.

\begin{lemmarep}
\label{lem:normal_vector_hyperplane_pos_neg_for_pos_vec_parallel}
    Let $H$ be a hyperplane with normal vector $\vh$.
    Then there is a positive vector $\vv > \vec{0}$ with $\vv \cdot \vh = 0$
    if and only if $\vh$ has at least one negative component and at least one positive component.
\end{lemmarep}

\begin{proof}
\item[$\implies$:]
    If $\vv > \vec{0}$ and $\vh \geq \vec{0}$ then $\vv \cdot \vh > 0$.
    Similarly, if $\vv > \vec{0}$ and $\vh \leq \vec{0}$ then $\vv \cdot \vh < 0$.
    So to get $\vv \cdot \vh = 0$, $\vh$ must have at least one positive and at least one negative element.

\item[$\impliedby$:]
    We construct $\vv$ as follows: Let $I_+$ denote the indices of the positive coordinates of $\vh$ and $I_-$ the indices of the negative coordinates. Our goal is to balance out the positive and negative parts of the dot product, given by $\vv \cdot \vh=\sum_{i \in I_{+}} \vv(i)\vh(i)+\sum_{i \in I_{-}} \vv(i)\vh(i)$. Set $\vv(i)$ to be the sum of the positive coordinates of $\vh$ if $i \in I_{-}$ and the sum of the absolute values of negative coordinates of $\vh$ otherwise:
    \begin{align*}
        \vv(i) =
        \begin{cases} 
        \sum_{j \in I_-} |\vh(j)| & \text{if } i \in I_+, \\
        \sum_{j \in I_+} \vh(j) & \text{if } i \in I_-, \\
        0 & \text{otherwise.}
        \end{cases}
    \end{align*}
    Substituting into the formula shows the correctness. For brevity, let $p := \vv(i)$ if $i \in I_{+}$ and $n := \vv(i)$ if $i \in I_{-}$ as above.
    \begin{align*}
        \vv \cdot \vh 
        &= 
        \sum_{i \in I_{+}} \vv(i)\vh(i) + \sum_{i \in I_{-}} \vv(i)\vh(i) 
        \\ &= 
        \sum_{i \in I_{+}} \left( n \right) \vh(i) + \sum_{i \in I_{-}} \left( p \right) \vh(i) 
        \\ &= 
        n \sum_{i \in I_{+}} \vh(i) + p \sum_{i \in I_{-}} \vh(i)
        \\ &= 
        n p + p (-n)
        \\ &= 
        0.
\end{align*}
    Finally, if $\vv$ is not integer-valued, scale it by the least common multiple of all coordinate denominators to ensure $\vv \in \N^d$ without altering the dot product.
    \qedhere

\end{proof}


\begin{lemma}
\label{lem:semilinear-not-almost-constant-alternating-period-vector-and-offset}
    Let $\phi:\N^d \to \{0,1\}$ be a semilinear predicate that is not eventually constant.
    Then there is an infinite sequence $\vx_0, \vx_1, \dots$ and constant $c$, 
    such that for all $j \in \N$,
    \begin{enumerate}
        \item
        \label{lem:semilinear-not-almost-constant-alternating-period-vector-and-offset:cond1}
        $\phi(\vx_j) \neq \phi(\vx_{j+1})$
        (correct answer swaps for each subsequent input),

        \item
        \label{lem:semilinear-not-almost-constant-alternating-period-vector-and-offset:cond2}
        $\vx_j \leq \vx_{j+1}$ 
        (inputs are increasing),
        and

        \item
        \label{lem:semilinear-not-almost-constant-alternating-period-vector-and-offset:cond3}
        $\|\vx_{j+1} - \vx_j\| \leq c$ 
        (adjacent inputs are ``close'').
    \end{enumerate}
\end{lemma}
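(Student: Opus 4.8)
The plan is to reduce $\phi$ to its threshold and mod constituents via \Cref{thm:semilinear-Boolean-combination-threshold-mod}, writing $A = \phi^{-1}(1)$ as a Boolean combination of threshold sets $T_1,\dots,T_p$ and mod sets $M_1,\dots,M_q$. Let $k$ be the least common multiple of all the moduli, so that every Boolean combination of the $M_j$ is $k$-periodic (\Cref{lem:mod-sets-periodic-hypercube}). For each residue $\vr \in \{0,\dots,k-1\}^d$, on the sublattice $\vr + k\N^d$ each $M_j$ is constant, so on this sublattice $A$ coincides with a set $\psi_\vr$ that is a Boolean combination of the threshold sets $T_i$ alone (substituting the fixed truth values of the $M_j$). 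I would then split into two cases: either (B) some $\psi_\vr$ is not eventually constant, or (A) every $\psi_\vr$ is eventually constant.

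Case A is the ``mod-dominated'' case and is the easier one. If every $\psi_\vr$ is eventually constant, say to value $b_\vr$ on $\N^d_{\ge N_\vr}$, set $N=\max_\vr N_\vr$; then on $\N^d_{\ge N}$ the output of $A$ within each residue class $\vr$ is the constant $b_\vr$. Since $\phi$ is not eventually constant, two residues $\vr\ne\vr'$ must have $b_\vr\ne b_{\vr'}$. I would then build the sequence by ``hopping'' between the two classes: pick $\vx_0$ in residue $\vr$ with every coordinate at least $N$, and alternately add the nonnegative representatives $\vd\equiv\vr'-\vr$ and $\vd'\equiv\vr-\vr' \pmod k$ in $\{0,\dots,k-1\}^d$. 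Each step is nonnegative and nonzero (since $\vr\ne\vr'$), so the sequence is increasing with bounded steps ($c< dk$), stays in $\N^d_{\ge N}$, and visits residues $\vr,\vr',\vr,\dots$, so its outputs alternate between $b_\vr$ and $b_{\vr'}$.

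Case B is the ``threshold-dominated'' case. Here I apply \Cref{lem:threshold-not-eventually-constant-regions-two-adjacent-infinite-opposite-output} to $\psi_\vr$, obtaining two adjacent totally unbounded regions $R_0,R_1$ of opposite output whose separating hyperplane $H$ has normal $\vh$ with a positive coordinate $i_+$ and a negative coordinate $i_-$, and \Cref{lem:normal_vector_hyperplane_pos_neg_for_pos_vec_parallel} to get a positive vector $\vv>\vec 0$ with $\vv\cdot\vh=0$; scale $\vv$ to lie in $(k\N)^d$. The idea is to drift outward along $\vv$ (which does not change $\vh\cdot\vx$) while oscillating across $H$: split the per-pair drift as $\vd_0=k\vu_{i_-}$ and $\vd_1=\vv-k\vu_{i_-}$, both nonnegative multiples of $k$ (so residue $\vr$, hence $A=\psi_\vr$, is preserved). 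Starting just inside $R_0$ a distance less than $k|\vh(i_-)|$ above $H$, adding $\vd_0$ strictly decreases $\vh\cdot\vx$ and crosses into $R_1$, while adding $\vd_1$ restores $\vh\cdot\vx$ and crosses back; thus $\vh\cdot\vx$ oscillates between two values straddling $H$ and the outputs alternate, with $\vx_j\le\vx_{j+1}$ and steps bounded by $\|\vv\|$.

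The main obstacle is entirely in Case B: guaranteeing that the drifting zigzag remains inside $R_0\cup R_1$ rather than wandering across some other threshold hyperplane into a region of uncontrolled output. This requires the drift direction $\vv$ to be a recession direction of both $R_0$ and $R_1$, not merely parallel to $H$. I would resolve this by choosing $\vv$ inside the recession cone of the shared facet $F=\overline{R_0}\cap\overline{R_1}\subseteq H$: any positive vector in $\recc(F)$ is automatically parallel to $H$ and lies in $\recc(R_0)\cap\recc(R_1)$. To see such a vector exists, I would strengthen the use of \Cref{lem:threshold-not-eventually-constant-regions-two-adjacent-infinite-opposite-output}: since $\psi_\vr$ is not eventually constant, opposite-output crossings between totally unbounded regions occur arbitrarily far from the origin, and as there are only finitely many region pairs, some pair $(R_0,R_1)$ has a totally unbounded shared facet $F$; applying \Cref{obs:region-totally-unbounded-iff-recc-contains-positive-vector} within $H$ then yields the desired positive $\vv\in\recc(F)$. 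Fixing the base point sufficiently far out (as in \Cref{fig:periodic_set_2d}) then ensures the whole sequence stays in $R_0\cup R_1$.
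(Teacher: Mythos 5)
Your proposal is correct, but it organizes the argument dually to the paper's. The paper partitions $\N^d$ into regions via the threshold sets \emph{first} and then examines the mod structure inside each totally unbounded region: its first case (some region has non-constant periodic behavior) builds the alternating sequence inside a single region using periodicity of $M'_j$, and its second case (mod sets constant on every totally unbounded region) factors the mod sets out and zigzags across a separating hyperplane. You instead partition first by residue class mod $k$, reducing $A$ on each sublattice $\vr+k\N^d$ to a pure Boolean combination $\psi_\vr$ of threshold sets. Your Case A replaces the paper's somewhat fiddly first case---which itself splits into two subcases about constancy along recession directions---with a cleaner hop between two residue classes of differing eventual value; your Case B is essentially the paper's second case restricted to a sublattice, with step vectors scaled by $k$. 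On the one genuinely delicate point your write-up is \emph{more} careful than the paper's: the paper merely asserts $\vx_0$ can be chosen ``sufficiently large that the vector $\vv$, starting at $\vx_0$, does not cross any of the hyperplanes,'' whereas you correctly observe that $\vv$ must lie in the recession cone of \emph{both} regions (equivalently, of their shared facet) and sketch why such a facet exists. Two details are worth writing out in full: (i) the pigeonhole step---rerunning the proof of \Cref{lem:threshold-not-eventually-constant-regions-two-adjacent-infinite-opposite-output} with growing cutoff $c$ shows opposite-output unit-step crossings occur in $\N^d_{\geq c}$ for every $c$, so some fixed pair of regions hosts crossings arbitrarily far out, and a polyhedron containing points arbitrarily large in every coordinate has a strictly positive recession vector (by the Minkowski--Weyl decomposition, since the bounded part is absorbed); and (ii) the existence of a base point of the sublattice $\vr+k\N^d$ lying in $R_0$ inside the narrow band just above $H$, which is routine but not automatic. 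Neither is a gap in the approach; both are at the level of detail the paper itself elides.
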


\begin{proof}
    We associate to $\phi$ the set $A \subseteq \N^d$ where $\phi^{-1}(1) = A$, i.e., $\phi(\vx) = 1 \iff \vx \in A$.
    
    Since $A$ is semilinear, it is a Boolean combination of threshold sets $T_1,\dots,T_k$ and mod sets $M_1,\dots,M_l$.
    Recall \Cref{defn:threshold-regions}, where the threshold sets partition $\N^d$ into \emph{regions},
    where moving within a region does not cross and hyperplanes defining the threshold sets,
    thus does not change the Boolean value [$\vx \in T_i$?] for any $T_i$.
    Suppose we have $m$ regions $R_1,\dots,R_m.$
    Then we can rewrite $A \cap R_j$ as a Boolean combination of mod sets only, intersected with $R_j$.
    We do this by replacing each $T_i$ in the original Boolean expression 
    with either $\N^d$ or $\emptyset$, depending whether $R_j \subseteq T_i$ or $R_j \cap T_i = \emptyset$, respectively.\footnote{
        For example, if the expression is 
        $T_1 \cup (M_1 \cap T_2) \cup (M_2 \cup T_3) \cup (M_3 \cap M_4)$,
        if the points are in $T_2$ but not $T_1$ or $T_3$,
        this becomes
        $\emptyset \cup (M_1 \cap \N^d) \cup (M_2 \cup \emptyset) \cup (M_3 \cap M_4) = M_1 \cup M_2 \cup (M_3 \cap M_4)$.
    }
    (Note by the definition of region these are the only two possibilities.)
    Let $M'_j$ be this Boolean combination of mod sets, such that $M'_j \cap R_j = A \cap R_j$.
    By \Cref{lem:mod-sets-periodic-hypercube},
    $M'_j$ is periodic.

    Consider a totally unbounded region $R_j.$
    By \Cref{obs:region-totally-unbounded-iff-recc-contains-positive-vector},
    $\recc(R_j)$ contains a positive vector $\vv$.
    We have two cases:
    \begin{description}

    \item[for some totally unbounded region $R_j$, $M'_j \cap R_j$ is not constant:]
        This is illustrated in \Cref{fig:periodic_set_2d,fig:region_mod_periodic_not_constant_but_constant_along_recession_cone},
        which show two subcases.
        \Cref{fig:periodic_set_2d}
        shows the subcase where,
        for some $\vv \in \recc(R_j) \cap \N^d$ and point $\vy_0 \in R_j$,
        defining $\vy_i = \vy_0 + i \vv$,
        the sequence $\phi(\vy_0),\phi(\vy_1),\dots$ is not constant.
        Since $M'_j$ is periodic,
        the sequence $O = (\phi(\vy_0),\phi(\vy_1),\dots)$ is periodic with period $p$.
        So we can find a subsequence $\vx_0,\vx_1,\dots$ obeying all three conditions of the lemma.
        In particular, it suffices to choose a point $\vx_0 = \vy_0 \in R_j \cap A$ 
        (resp. $\vy_0 \in R_j \setminus A$ )
        let $i < p$ such that
        $\vy_i \not \in A$
        (resp. $\vy_i \in A$),
        letting $\vx_1 = \vy_i$,
        and let $\vx_2 = \vy_p$,
        and subsequent elements of the subsequence are the same distances apart 
        ($\vx_3 = \vy_{p+i},
        \vx_4 = \vy_{2p}, \dots$).

        \Cref{fig:region_mod_periodic_not_constant_but_constant_along_recession_cone} 
        shows the subcase where,
        for all $\vy_0 \in R_j$ and $\vv \in \recc(R_j) \cap \N^d$,
        defining $\vy_i = \vy_0 + i \vv$,
        the sequence $\phi(\vy_0),\phi(\vy_1),\dots$ is constant.
        However,
        since $M'_j$ is not constant,
        we can still find a sequence $\vx_0,\vx_1,\dots$, but unlike the previous subcase, it is not a subsequence of points collinear along one vector $\vv$.
        
        Since $M'_j$ is periodic and not constant,
        and since $R_j$ is totally unbounded,
        for every $\vx \in R_j \cap A$,
        there is $\vx' \geq \vx$ such that $\vx' \in R_j \setminus A$,
        i.e., for every point in the region in $A$, there is a larger point in $R_j$ not in $A$.
        Also, since $M'_j$ is periodic,
        there is a constant $c$ independent of $\vx$ such that $\| \vx' - \vx \| \leq c$.
        By symmetric reasoning, there is a $\vx'' \in R_j \cap A$ such that $\vx'' \geq \vx'$ and $\| \vx'' - \vx' \| \leq c$.

        Let $\vx_0 \in R_j \cap A$ be arbitrary.
        For all $i \in \N$,
        choose $\vx_{i+1} \in R_j$ based on $\vx_i$ as above,
        such that 
        $\vx_{i+1} \geq \vx_i$, 
        $\| \vx_{i+1} - \vx_i \| \leq c$,
        and
        $\vx_{i+1} \in A$ if $i$ is odd and $\vx_{i+1} \not \in A$ if $i$ is even.
        Then the sequence $\vx_0,\vx_1,\dots$ satisfies the lemma.

    \item[for all totally unbounded regions $R_j$, $M'_j \cap R_j$ is constant:]
        This implies that the mod sets $M_1,\dots,M_l$ can be ``factored out'' of the Boolean expression defining $A$ in terms of threshold sets $T_1,\dots,T_k$ and the mod sets $M_1,\dots,M_l$,
        which will give the same output as $A$ in totally unbounded regions.
        Put another way,
        $A \cap (R_1 \cup \dots \cup R_u)$ is a Boolean combination of the threshold sets $T_1,\dots,T_k$,
        where $R_1,\dots,R_u$ represents all the totally unbounded regions.

        By \Cref{lem:threshold-not-eventually-constant-regions-two-adjacent-infinite-opposite-output},
        two adjacent totally unbounded regions of $A$ have opposite outputs.
        See \Cref{fig:threshold_sets} for an example of picking the points $\vx_0,\vx_1,\dots$ below.
        These adjacent regions are separated by some hyperplane $H_j$, such that $H_j \subseteq A$,
        but for some unit vector $\vu_i$,
        $(H_j + \vu_i) \cap A = \emptyset$,
        i.e., all of $H_j$ is contained in $A$,
        but the entire hyperplane adjacent to $H_j$ in direction $\vu_i$,
        consists of points not in $A$.
        Note this is not true for general hyperplanes, e.g., one whose orthogonal vector is $(1,1)$, where both unit vectors $\vu_1 = (1,0)$ and $\vu_2 = (0,1)$ would move off the hyperplane, but in the ``yes'' direction where the point is still contained in the threshold set.
        However, since $H_j$ is separating two totally unbounded regions,
        some strictly positive vector $\vv > \vec{0}$ is parallel to $H_j$, i.e., obeys $\vv \cdot \vh = 0$ for $H_j$'s orthogonal vector $\vh$.
        By \Cref{lem:normal_vector_hyperplane_pos_neg_for_pos_vec_parallel}, 
        $\vh$ has at least one positive coordinate (say $i$) and at least one negative coordinate (say $k$),
        so that unit vector $\vu_i$ moves to one side of $H_j$ and $\vu_k$ moves to the other side.
        
        In this case,
        we let $\vv > \vec{0}$ be some vector parallel to $H_j$,
        let $\vx_0 \in H_j$,
        sufficiently large that the vector $\vv$, starting at $\vx_0$,
        does not cross any of the hyperplanes of $T_1,\dots,T_k$
        (as in \Cref{fig:threshold_sets}).
        Define the rest of the infinite sequence as
        \begin{align*}
        \vx_1 &= \vx_0 + \vu,
        \\
        \vx_2 &= \vx_0 + \vv,
        \\
        \vx_3 &= \vx_0 + \vv + \vu,
        \\
        \vx_4 &= \vx_0 + 2\vv,
        \\
        \vx_5 &= \vx_0 + 2\vv + \vu,
        \\
        \vx_6 &= \vx_0 + 3\vv,
        \\
        \vx_7 &= \vx_0 + 3\vv + \vu,
        \\
        \vdots
        \end{align*}
        By the arguments given above,
        for all odd $i$,
        $\phi(\vx_i) = 0$
        and for all even $i$,
        $\phi(\vx_i) = 1$, 
        satisfying condition \eqref{lem:semilinear-not-almost-constant-alternating-period-vector-and-offset:cond1}.
        If $j$ is even, then $\vx_{j+1} = \vx_j + \vv + \vu$, so clearly $\vx_j \leq \vx_{j+1}$,
        satisfying condition \eqref{lem:semilinear-not-almost-constant-alternating-period-vector-and-offset:cond2}.
        If $j$ is odd,
        then $\vx_{j+1} = \vx_j - \vu + \vv$.
        Since $\vv > \vec{0}$,
        we have $\vv - \vu \geq \vec{0}$,
        so $\vx_{j+1} \geq \vx_j$,
        satisfying condition \eqref{lem:semilinear-not-almost-constant-alternating-period-vector-and-offset:cond2}.
        Finally,
        $\| \vx_{j+1} - \vx_j \| \leq \| \vv \| + 1$, 
        satisfying condition \eqref{lem:semilinear-not-almost-constant-alternating-period-vector-and-offset:cond3}.
        \qedhere
    \end{description}
\end{proof}

\begin{figure}
    \centering
    \includegraphics[width=0.7\linewidth]{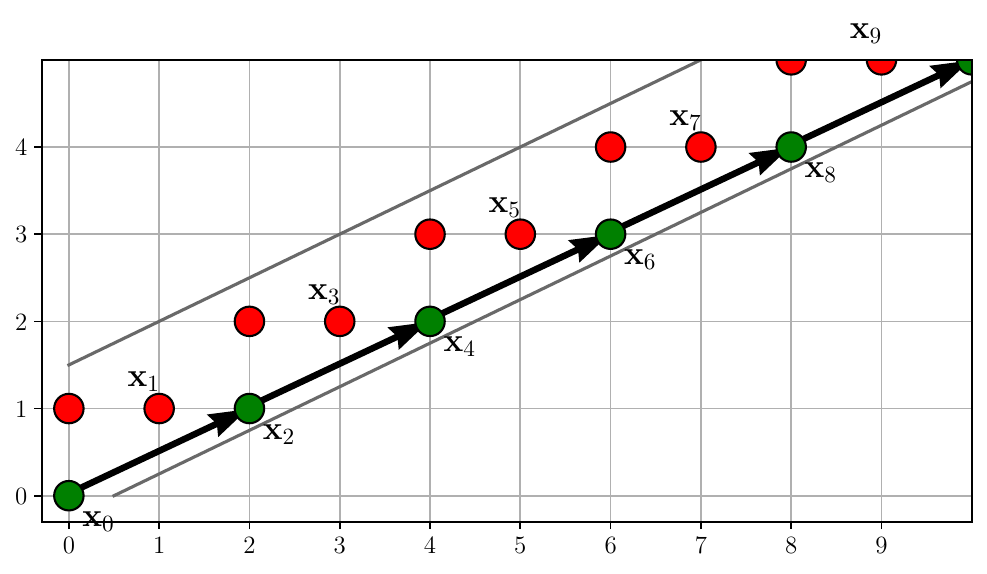}
    \caption{A totally unbounded region $R_j$ (between two gray lines with slope $1/2$ defined by two threshold sets),
    where within $R_j$, the Boolean combination $M'_j$ of mod sets that is not constant (see proof of \Cref{lem:semilinear-not-almost-constant-alternating-period-vector-and-offset} for definitions of $R_j$ and $M'_j$), but that is constant along any vector in $\recc(R)$,
    which contains only multiples of the vector $(2,1)$.
    This illustrates one case of \Cref{lem:semilinear-not-almost-constant-alternating-period-vector-and-offset},
    where we can find an increasing sequence of points $\vx_0,\vx_1,\dots,$ with alternating outputs.
    }    \label{fig:region_mod_periodic_not_constant_but_constant_along_recession_cone}
\end{figure}

\begin{figure}
    \centering
    \includegraphics[width=0.4\linewidth]{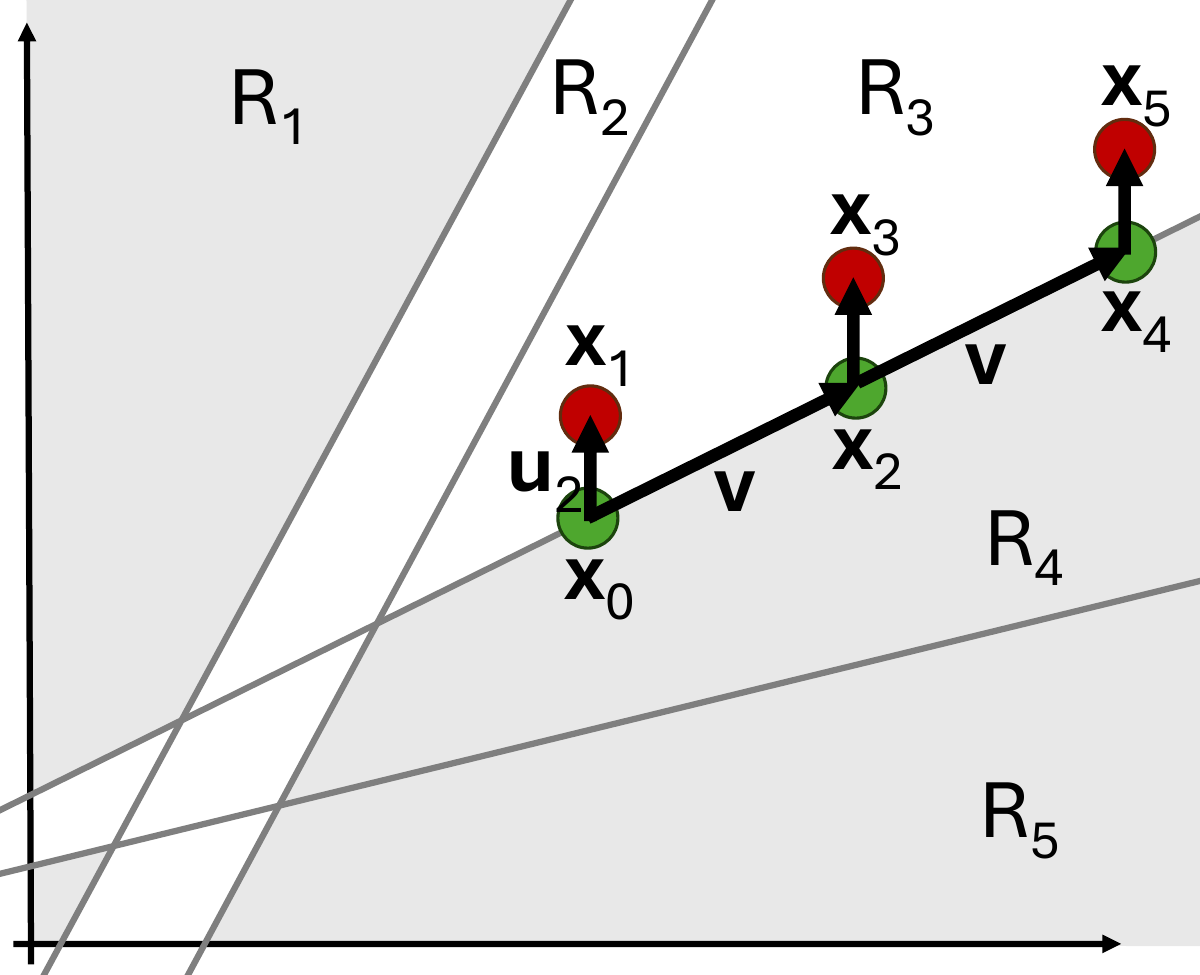}
    \caption{A Boolean combination of threshold sets in 2D, illustrating the intuition of \Cref{lem:threshold-not-eventually-constant-regions-two-adjacent-infinite-opposite-output}.
    Each threshold set is defined by a line (hyperplane in higher dimensions).
    The threshold sets partition the first quadrant into a finite number of \emph{regions}.
    There are four (unlabeled) finite regions near the origin.
    There are five infinite regions labeled $R_1,R_2,R_3,R_4,R_5.$
    Within each region, either all points are contained in the set (output 1, shaded gray regions $R_1,R_4,R_5$), or all points are not (output 0, shaded white regions $R_2,R_3$).    
    Since the set is not eventually constant,
    there are infinite regions with opposite outputs, and two such regions must be adjacent; $R_3$ and $R_4$ in this example.
    The points $\vx_0,\vx_1,\dots$ illustrate the intuition of the last paragraph of the proof of \Cref{lem:semilinear-not-almost-constant-alternating-period-vector-and-offset}.
    The vector $\vv$ is parallel to a line separating the two adjacent opposite-output regions $R_3$ and $R_4$.
    Points $\vx_0,\vx_2,\vx_4,\dots$ are on the line, thus are in $R_4$ and have output 1,
    and points 
    $\vx_1,\vx_3,\dots$
    are just above the line (off of it by unit vector $\vu_2$),
    thus are in $R_3$ and have output 0.
    }
    \label{fig:threshold_sets}
\end{figure}

\end{toappendix}

\begin{thmrep}
\label{thm:non-collapsing_allvoting_entirely_execution_bounded_decide_almost_constant}
    If a non-collapsing, all-voting, entirely execution bounded CRD stably decides a predicate $\phi$,
    then $\phi$ is eventually constant. 
\end{thmrep}

\begin{proofsketch}
    This proof is similar to that of \Cref{thm:non-collapsing_allvoting_entirely_execution_bounded_cannot_decide_majority_parity}.
    In that proof, we repeatedly add a ``constant amount of additional input $\{X_2\}$ or $\{X_1\}$, which flips the output''.
    For more general semilinear, but not eventually constant, predicates,
    we dig into the structure of the semilinear set to find a sequence of constant-size vectors representing additional inputs that flip the correct output.
    Any predicate that is not eventually constant has infinitely many yes inputs and infinitely many no inputs, but in general they could be increasingly far apart:
    e.g., $\phi(\vx) = 1$ if and only if $2^{n} \leq \| \vx \| < 2^{n+1}$ for even $n$.
    For the potential function argument to work, 
    each subsequent input needs to be at most a constant larger than the previous.
    
    But if $\phi$ is \emph{semilinear} (and not eventually constant) then we can show that there is a sequence of increasing inputs $\vx_0 \leq \vx_1 \leq \vx_2 \leq \dots$, each a \emph{constant} distance from the next ($\| \vx_{j+1} - \vx_j \| = O(1)$),
    flipping the output ($\phi(\vx_j) \neq \phi(\vx_{j+1})$).
    Roughly, this is true for one of two reasons.
    Using \Cref{thm:semilinear-Boolean-combination-threshold-mod},
    $\phi$ is a Boolean combination of threshold and mod sets.
    Either the mod sets are not combined to be trivially $\emptyset$ or $\N^d$,
    in which case we can find some vector $\vv$ that, followed infinitely far from some starting point $\vx_0$ (so $\vx_i = \vx_0 + i \vv$)
    periodically hits both yes inputs ($\phi(\vx_j)=1$) and no inputs ($\phi(\vx_j)=0$).
    \opt{submission,full}{(See \Cref{fig:periodic_set_2d,fig:region_mod_periodic_not_constant_but_constant_along_recession_cone}.)}
    Otherwise,
    the mod sets can be removed and simplify the Boolean combination to only threshold sets,
    in which case the infinite sequence $\vx_0,\vx_1,\dots$ can be obtained by moving along a threshold hyperplane that separates yes from no inputs.
    \opt{submission,full}{(See \Cref{fig:threshold_sets}.)}
\end{proofsketch}

\begin{proof}
    This proof is similar to that of \Cref{thm:non-collapsing_allvoting_entirely_execution_bounded_cannot_decide_majority_parity},
    with the vectors $\vv_i$ defined below playing the role of the ``constant amount of additional input $\{X_2\}$ or $\{X_1\}$ that flips the correct answer'' in that proof.

    Let $\calD=(\Lambda,R,\Sigma,\Upsilon_\mathrm{Y}, \Upsilon_\mathrm{N},\vs)$ be a CRD obeying the stated conditions,
    and suppose for the sake of contradiction that $\calD$ stably decides a semilinear predicate $\phi$ that is not eventually constant.
    
    By \Cref{lem:semilinear-not-almost-constant-alternating-period-vector-and-offset},
    there is an infinite sequence $\vx_0, \vx_1, \dots$ such that
    \begin{enumerate}
    \item
        $\phi(\vx_i) \neq \phi(\vx_{i+1})$ (correct answer swaps for each subsequent input),

    \item
        $\vx_i \leq \vx_{i+1}$, (inputs are increasing (on at least one coordinate(s)),
        and

    \item
        for some constant $c$, $\|\vx_{i+1} - \vx_i\| \leq c$. (adjacent inputs are ``close'')
    \end{enumerate}
    Assume WLOG that $\phi(\vx_0) = 0$.
    For each $i \in \N$,
    let $\vv_i = \vx_{i+1} - \vx_i$,
    noting by condition \eqref{lem:semilinear-not-almost-constant-alternating-period-vector-and-offset:cond2} that
    $\vv_i \geq 0$.

    We consider the sequence of stable configurations $\va_0,\va_1,\va_2,\dots$ defined as follows.
    Let $\va_0$ be a stable configuration reachable from $\vx_0$;
    since the correct answer is no,
    all species present in $\va_0$ vote no.
    Now add $\vv_0$ to $\va_0$.
    By additivity,
    the configuration $\va_0 + \vv_0$ is reachable from $\vx_1 = \vx_0 + \vv_0$.
    Since the correct answer for $\vx_1$ is yes,
    $\calD$ must go from $\va_0 + \vv_0$ to a stable ``yes'' configuration, call this $\va_1$.
    Now add $\vv_1$ to $\va_1$.
    Since the correct answer is no, $\calD$ must now reach from $\va_1 + \vv_1$ to a stable ``no'' configuration,
    call it $\va_2$.
    By condition \eqref{lem:semilinear-not-almost-constant-alternating-period-vector-and-offset:cond3},
    each $\vv_i$ obeys $\| \vv_i \| < c$ for some constant $c$.

    Continuing in this way, we have a sequence of stable configurations
    $
        \va_0,
        \va_1, 
        \dots
    $
    where all species in $\va_i$ vote yes for odd $i$, and all species in $\va_i$ vote no for even $i$.
    Since $\calD$ is non-collapsing,
    the size of the configurations $\va_i$ increases without bound as $i \to\infty$.
    (Possibly $\|\va_{i+1}\| < \|\va_i\|$, i.e., the size is not necessarily monotonically nondecreasing, but for all sufficiently large $j > i$, we have $\|\va_j\| > \|\va_i\|$.)
    
    Since all species vote,
    for some constant $\delta > 0$,
    to get from $\va_i+\vv_{i}$ to $\va_{i+1}$,
    at least $\delta \| \va_i \|$ reactions must occur.
    This is because all species in $\va_i$ must be removed since they vote the opposite of the voters in $\va_{i+1}$,
    and each reaction removes at most $O(1)$ molecules.
    (Concretely, let $\delta = 1 / \max_{(\vr,\vp) \in R} \| \vr \| - \| \vp \|$, i.e., 1 over the most net molecules consumed in any reaction.)

    Since $\calD$ is entirely execution bounded, by \Cref{thm:potential-function-iff-entirely-execution-bounded},
    $\calD$ has a linear potential function $\Phi(\vx) = \vw \cdot \vx$,
    where $\vw \geq \vec{0}$.
    Adding $\vv_{i}$ to $\va_i$ increases $\Phi$ by $\vw(\vv_{i})$, which is bounded above by a constant since $\| \vv_{i} \| < c$.
    Since $\|\va_i\|$ grows without bound,
    the number of reactions to get from $\va_i + \vv_{i}$ to $\va_{i+1}$ increases without bound as $i\to\infty$,
    and since each reaction strictly decreases $\Phi$ by at least 1,
    the total change in $\Phi$ that results from adding $\vv_i$ and then going from $\va_i + \vv_{i}$ to $\va_{i+1}$ is unbounded in $i$, so unboundedly negative for sufficiently large $i$
    (negative once $i$ is large enough that $\delta \| \va_i \| \geq \vw(\vv_{i}) + 2$).

    However, $\Phi$ started at the constant $\Phi(\vx_0)$.
    Before $\|\va_i\|$ is large enough that 
    $\delta \| \va_i \| \geq \vw(\vv_{i}) + 2$ 
    (i.e., large enough that the net change in $\Phi$ is negative resulting from adding a single input and going to the next stable configuration),
    $\Phi$ could increase,
    if $\Phi(\vv_{i})$ 
    is larger than the net decrease in $\Phi$ due to following reactions to get from $\va_i + \vv_{i}$ to $\va_{i+1}$.
    
    However, since $\calD$ is non-collapsing,
    this can only happen for a constant number of $i$
    (so $\Phi$ never reaches more than a constant above its initial value $\Phi(\vx_0)$),
    after which point $\Phi$ strictly decreases after each round of this process.
    
    At some point in this process,
    $\calD$ will not be able to reach all the way to the next $\va_i$ without $\Phi$ becoming negative,
    a contradiction.
\end{proof}

The statement of \Cref{thm:non-collapsing_allvoting_entirely_execution_bounded_cannot_decide_majority_parity} does not mention the concept of a leader, 
but it would typically apply to leaderless CRDs.
A CRD may be execution bounded from configurations with a single leader, but not execution bounded when multiple leaders are present (preventing the use of \Cref{thm:potential-function-iff-entirely-execution-bounded}, which requires the CRD to be execution bounded from \emph{all} configurations).
For example, in \Cref{lem:boolean-closure-crd},
reaction \eqref{rxn:boolean-closure-crd-flip-first-vote} occurs finitely many times if the leader/voter $S_Y$ or $S_N$ has count 1.
However, if $S_Y$ and $S_N$ can be present simultaneously (e.g., if we start with two leaders),
then the reactions
$S_Y + V_{NN} \rxn S_Y + V_{YN}$
and
$S_N + V_{YN} \rxn S_N + V_{NN}$
can flip between $V_{NN}$ and $V_{YN}$ infinitely often in an unbounded execution.

If the CRN is leaderless, however, we have the following, which says that if it is execution bounded from \emph{valid initial} configurations, then it is execution bounded from \emph{all} configurations.

\begin{lemmarep}
\label{lem:leaderless-and-execution-bdd-implies-entirely-execution-bdd}
    If a leaderless CRD or CRC is execution bounded,
    then it is entirely execution bounded.
\end{lemmarep}

\begin{proofsketch}
    \opt{submission}{A proof is in the appendix.}
    Since $\calC$ is leaderless, the sum of two valid initial configurations is also valid.
    Thus if we can produce some species from a valid initial configuration,
    we can produce arbitrarily large counts of all species by adding up sufficiently many initial configurations.
    This means that for any configuration $\vx$,
    from any sufficiently large valid initial configuration $\vi$,
    some $\vy \geqq \vx$ is reachable from $\vi$.
    But if $\calC$ is execution bounded from $\vi$,
    since $\vi \reach \vy$,
    it must also be execution bounded from $\vy$,
    thus also from $\vx$ since by additivity any reactions applicable to $\vx$ are also applicable to $\vy$.
\end{proofsketch}

\begin{proof}
    Let $\calC$ be a leaderless CRD or CRC.
    Let $\vx$ be any configuration.
    We first show that some $\vy \geqq \vx$ is reachable from a valid initial configuration $\vi$.
    
    We may assume without loss of generality that $\calC$ only contains species producible from valid initial configurations, otherwise we obtain an equivalent CRN by removing those unproducible species from $\calC$.
    
    Since $\calC$ is leaderless, the sum of two valid initial configurations is also valid.
    Then each species $S$ being producible means that 
    there is a valid initial configuration $\vi_{S,1}$ such that for some $\vy_{S,1}$,
    $\vi_{S,1} \reach \vy_{S,1}$
    and $\vy_{S,1}(S) \geq 1$,
    i.e., at least one copy of $S$ can be produced.
    Let $\vi_{S,k} = k \cdot \vi_{S,1}$.
    By additivity,
    $\vi_{S,k} \reach \vy_{S,k}$,
    where $\vy_{S,k} = k \cdot \vy_{S,1}$,
    noting that $\vy_{S,k}(S) \geq k$.
    In other words, all species are producible in arbitrarily large counts from some valid initial configuration.

    Now we argue all species can be made \emph{simultaneously} arbitrarily large count from some valid initial configuration;
    in particular, we can reach a configuration with counts at least $\vx$.
    Let $\vi = \sum_{S \in \Lambda} \vi_{S,\vx(S)}$.
    Since each $\vi_{S,\vx(S)} \reach \vy_{S,\vx(S)}$,
    by additivity we have
    $\vi \reach \vy$, where $\vy = \sum_{S \in \Lambda} \vy_{S,\vx(S)}$.
    Then for each $S \in \Lambda$, $\vy(S) \geq \vx(S)$,
    so $\vy \geqq \vx$.

    Since all executions from $\vi$ are finite,
    all executions from $\vy$ are finite.
    By additivity,
    any sequence of reactions applicable to $\vx$ is also applicable to $\vy$.
    Thus all executions from $\vx \leqq \vy$ must be finite as well, i.e., $\calC$ is entirely execution bounded since $\vx$ is an arbitrary configuration.
\end{proof}

\Cref{lem:leaderless-and-execution-bdd-implies-entirely-execution-bdd} lets us replace ``entirely execution bounded'' in \Cref{thm:non-collapsing_allvoting_entirely_execution_bounded_decide_almost_constant} with ``leaderless and execution bounded'':

\begin{corollary}
\label{cor:non-collapsing_allvoting_leaderless_execution_bounded_decide_almost_constant}
    If a non-collapsing, all-voting, leaderless, execution bounded CRD stably decides a predicate $\phi$,
    then $\phi$ is eventually constant.
\end{corollary}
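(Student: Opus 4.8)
The plan is to obtain this corollary as an immediate composition of \Cref{lem:leaderless-and-execution-bdd-implies-entirely-execution-bdd} and \Cref{thm:non-collapsing_allvoting_entirely_execution_bounded_decide_almost_constant}. The only discrepancy between the corollary's hypotheses and those of the theorem is that the corollary assumes the CRD is \emph{leaderless and execution bounded}, whereas the theorem assumes it is \emph{entirely execution bounded}; the remaining hypotheses (non-collapsing, all-voting, and stably deciding $\phi$) appear verbatim in both and therefore carry over unchanged. So the task reduces entirely to reconciling the two execution-boundedness assumptions.

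The first and only substantive step is to bridge this gap by applying \Cref{lem:leaderless-and-execution-bdd-implies-entirely-execution-bdd}, which guarantees that a leaderless CRD that is execution bounded (from every valid initial configuration) is in fact entirely execution bounded (from every configuration). Applying this lemma to the CRD of the corollary's hypothesis upgrades the weaker assumption to exactly the form required by the theorem. With all four hypotheses of \Cref{thm:non-collapsing_allvoting_entirely_execution_bounded_decide_almost_constant} now in place, I would invoke that theorem directly to conclude that $\phi$ is eventually constant.

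There is no genuine obstacle in this step, since all of the difficult reasoning resides in the two cited results: the additivity/producibility argument of \Cref{lem:leaderless-and-execution-bdd-implies-entirely-execution-bdd} (leaderlessness lets one sum valid initial configurations to dominate any target configuration, so finiteness of executions propagates everywhere) and the linear-potential-function argument underlying \Cref{thm:non-collapsing_allvoting_entirely_execution_bounded_decide_almost_constant}. The only point meriting care is confirming that the lemma's conclusion, ``entirely execution bounded,'' matches precisely the strengthened hypothesis the theorem consumes, which it does; the corollary is thus a one-line composition of the two results.
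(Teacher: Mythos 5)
Your proposal is correct and matches the paper's own reasoning exactly: the corollary is obtained by applying \Cref{lem:leaderless-and-execution-bdd-implies-entirely-execution-bdd} to upgrade ``leaderless and execution bounded'' to ``entirely execution bounded,'' then invoking \Cref{thm:non-collapsing_allvoting_entirely_execution_bounded_decide_almost_constant}. Nothing further is needed.
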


In particular, since the original model of population protocols~\cite{angluin2004computation} defined them as leaderless and all-voting---and since population protocols are non-collapsing---we have the following.

\begin{corollary}
\label{cor:pp_execution_bdd_decide_almost_constant}
    If an execution bounded population protocol stably decides a predicate $\phi$,
    then $\phi$ is eventually constant.
\end{corollary}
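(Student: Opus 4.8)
The plan is to derive this corollary directly from \Cref{cor:non-collapsing_allvoting_leaderless_execution_bounded_decide_almost_constant} by verifying that every execution bounded population protocol satisfies that corollary's four hypotheses: non-collapsing, all-voting, leaderless, and execution bounded. The last hypothesis is assumed outright, so the work reduces to checking the first three, each of which follows from the standard definition of a population protocol in the original model of~\cite{angluin2004computation}.

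First I would recall that the original population protocol model takes protocols to be leaderless (empty initial context, $\vs = \vec{0}$) and all-voting (every species is a yes- or no-voter, so $\Upsilon_1 \cup \Upsilon_0 = \Lambda$), so these two hypotheses hold by definition. Next I would establish the non-collapsing property from conservation of molecular count: since every reaction $(\vr,\vp)$ of a population protocol obeys $\|\vr\| = \|\vp\| = 2$, we have $\|\vc\| = \|\vc'\|$ whenever $\vc \reach \vc'$. Consequently, from any valid initial configuration of size $n$, every reachable configuration---and in particular every \emph{stable} configuration---also has size $n$, so the output size function of \Cref{defn:non-collapsing} satisfies $s(n) = n$, which tends to infinity. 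Hence the protocol is non-collapsing. (This conservation argument is exactly why the remark following \Cref{defn:non-collapsing} asserts that all population protocols are non-collapsing.)

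With all four hypotheses of \Cref{cor:non-collapsing_allvoting_leaderless_execution_bounded_decide_almost_constant} in hand, the conclusion that $\phi$ is eventually constant is immediate. I expect no genuine obstacle here; the corollary is essentially a packaging of earlier results specialized to population protocols. The only points requiring care are bookkeeping: confirming the mass-conservation argument applies to stable (not merely reachable) configurations, and noting that the passage from ``execution bounded'' to ``entirely execution bounded''---needed so that \Cref{thm:potential-function-iff-entirely-execution-bounded} can be invoked inside the proof of \Cref{thm:non-collapsing_allvoting_entirely_execution_bounded_decide_almost_constant}---is already handled by \Cref{lem:leaderless-and-execution-bdd-implies-entirely-execution-bdd} and is valid precisely because population protocols are leaderless.
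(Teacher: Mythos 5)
Your proposal is correct and matches the paper's own derivation exactly: the paper obtains this corollary from \Cref{cor:non-collapsing_allvoting_leaderless_execution_bounded_decide_almost_constant} by noting that the original population protocol model is leaderless and all-voting by definition, and non-collapsing because every reaction preserves configuration size ($\|\vr\| = \|\vp\| = 2$). Your additional bookkeeping remarks (that stable configurations inherit the size $n$, and that \Cref{lem:leaderless-and-execution-bdd-implies-entirely-execution-bdd} supplies the passage to entirely execution bounded) are exactly the points the paper relies on implicitly.
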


\begin{toappendix}
    
\subsection{Feedforward CRNs}

We show that another common constraint, \emph{feedforwardness}, significantly reduces computational power, making it impossible to decide even simple mod and threshold sets.

\begin{definition}
\label{defn:reaction-feedforward}
    A CRN is \emph{reaction-feedforward} if reactions can be ordered $r_1, r_2, \ldots, r_n$ such that, for all $k<\ell$, no reactant of $r_k$ appears in $r_{\ell}$ (as either reactant or product).
\end{definition}

Reaction-feedforward CRNs are significant in the sense that many \emph{continuous} real-valued CRNs computing numerical-valued \emph{functions} (where the count of some species $Y$ is interpreted as the output, e.g., $2X \to Y$ computes $f(x) = \lfloor x / 2 \rfloor$) can be computed by reaction-feedforward CRNs~\cite{chen2023rate}.\footnote{
    The definition of feedforward in reference~\cite{chen2023rate} is different from the definition given here, being based on an ordering of species rather than reactions.
    However, it is straightforward to verify by inspection that the CRNs given for the positive results of~\cite{chen2023rate} are reaction-feedforward according to \Cref{defn:reaction-feedforward}.
}
Compared to general CRNs, reaction-feedforward CRNs are easy to analyze and prove correctness. One reason is that, if a reaction-feedforward CRN can reach terminal configuration from $\vx$ at all, then it is execution bounded from $\vx$.

There is a similar definition, called simply \emph{feedforward} in~\cite{chen2023rate}, based on ordering of species rather than reactions.
We use the term \emph{species-feedforward} to avoid confusion with \Cref{defn:reaction-feedforward}.
We say a reaction $(\vr,\vp)$ \emph{produces} a species $S$ if $\vp(S) > \vr(S)$,
and it \emph{consumes} $S$ if $\vr(S) > \vp(S)$.

\begin{definition}
\label{defn:feedforward-species}
    A CRN is \emph{species-feedforward} if species can be ordered $S_1, S_2, \ldots, S_n$ such that every reaction producing a species $S_\ell$ consumes a earlier species $S_k$ where $k<\ell$.
\end{definition}

Although the term ``linear potential function'' was not used in~\cite{chen2023rate},
it is shown in~\cite[Lemma 4.8]{chen2023rate} that species-feedforward CRNs have a linear potential function (assigning weight $\frac{1}{K^i}$ to species $S_i$ for a suitably large constant $K$), thus are entirely execution bounded.
The same is not always true of reaction-feedforward CRNs,
for example $X \rxn 2X$ is reaction-feedforward but not execution bounded.
However, we can use similar techniques to proofs used for so-called \emph{noncompetitive} CRNs in~\cite{vasic2022programming} to show ``reasonable'' reaction-feedforward CRNs are execution bounded.

\begin{lemma}
\label{lem:reaction-feedforward-terminal-only-if-not-less-applications-of-any}
Suppose in a reaction-feedforward CRN that $\vi \reach \vc$ by execution $P$, and $\vi \reach \vd$ by execution $Q$. If any reaction occurs less in $P$ than $Q$, then $\vc$ is not terminal.
\end{lemma}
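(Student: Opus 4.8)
The plan is to exhibit a single reaction that is applicable at $\vc$, which immediately shows $\vc$ is not terminal. Write $g_j$ and $h_j$ for the number of times reaction $r_j$ is applied in $P$ and in $Q$, respectively. By hypothesis some reaction occurs less in $P$ than in $Q$, so $\{j : g_j < h_j\}$ is nonempty; I would fix $r_k$ to be the reaction of \emph{smallest} index in the feedforward order $r_1,\dots,r_n$ with $g_k < h_k$. By minimality, $g_j \geq h_j$ for all $j < k$. The claim is then that $r_k$ is applicable at $\vc$.

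First I would extract the structural consequences of reaction-feedforwardness for a reactant $S$ of $r_k$. Since $S$ is a reactant of $r_k$, \Cref{defn:reaction-feedforward} forbids $S$ from appearing in any $r_\ell$ with $\ell > k$. Moreover $S$ cannot be a reactant of any other reaction $r_j$: if $j < k$ this would force $S$ out of $r_k$, and $j > k$ is already excluded. Hence $S$ is consumed only by $r_k$, and in each earlier reaction $r_j$ ($j < k$) it appears, if at all, only as a product, contributing the nonnegative amount $\vp_j(S)$ per application. Consequently, in any execution from $\vi$, the count of $S$ after $r_j$ ($j<k$) has fired $a_j$ times and $r_k$ has fired $a_k$ times is exactly $\vi(S) + \sum_{j<k} a_j \vp_j(S) + a_k(\vp_k(S) - \vr_k(S))$, independent of the order and of how often $r_{k+1},\dots,r_n$ occur.

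The core of the argument is a comparison between $\vc$ and a carefully chosen intermediate configuration of $Q$. Applying the formula at the end of $P$ gives $\vc(S) = \vi(S) + \sum_{j<k} g_j \vp_j(S) + g_k(\vp_k(S)-\vr_k(S))$. Since $g_k < h_k$, execution $Q$ applies $r_k$ at least $g_k+1$ times, so I would consider the moment in $Q$ just before its $(g_k+1)$-th application of $r_k$. At that instant $r_k$ has fired exactly $g_k$ times and each $r_j$ ($j<k$) some number $a_j \leq h_j$ of times, and since the next reaction $r_k$ is applicable there, $\vi(S) + \sum_{j<k} a_j\vp_j(S) + g_k(\vp_k(S)-\vr_k(S)) \geq \vr_k(S)$. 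Using $a_j \leq h_j \leq g_j$ and $\vp_j(S) \geq 0$, the first two terms only increase in passing to $\vc(S)$ while the $g_k$-term is unchanged, yielding $\vc(S) \geq \vr_k(S)$. As $S$ was an arbitrary reactant of $r_k$, the reaction $r_k$ is applicable at $\vc$, so $\vc$ is not terminal.

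The step I expect to be the main obstacle is the possibility that $S$ is simultaneously a reactant and a product of $r_k$ (a catalyst-like species), where $\vp_k(S)-\vr_k(S)$ need not be negative. A naive comparison of $\vc$ against the \emph{final} configuration $\vd$ of $Q$ drops exactly the $g_k(\vp_k(S)-\vr_k(S))$ contribution and gives only the weaker bound $\vc(S) \geq \vr_k(S)-\vp_k(S)$. The resolution, and the reason for referencing the configuration just before the $(g_k+1)$-th firing of $r_k$ rather than the end of $Q$, is that this intermediate configuration carries the same coefficient $g_k$ on the $r_k$-term as $\vc$ does; that term then cancels in the comparison, leaving only the nonnegative difference $\sum_{j<k}(g_j - a_j)\vp_j(S)$.
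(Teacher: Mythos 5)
Your proof is correct and follows essentially the same route as the paper's: fix the first reaction $r_k$ in the feedforward order occurring less in $P$ than in $Q$, observe that its reactants are produced only by earlier reactions and consumed only by $r_k$ itself, and compare $\vc$ against the configuration of $Q$ just before the $(g_k+1)$-th firing of $r_k$. Your write-up is in fact slightly more careful than the paper's, which assumes a single reactant ``for brevity'' and concludes only $\vc(A) > 0$ rather than $\vc(S) \geq \vr_k(S)$; your explicit handling of general stoichiometry and of catalyst-like reactants is a welcome refinement of the same argument.
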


\begin{proof}
Here we equivalently think of an execution from $\vi$ as a sequence of \emph{reactions}, since from those and $\vi$ we can deduce the configurations in the execution.
Define $\#(r_k, P)$ as the number of times reaction $r_k$ occurs in the execution $P$.
Let $r_k$ be the first reaction in the reaction-feedforward order such that $\#(r_k, P) < \#(r_k, Q)$. Assume, for brevity of explanation, that $r_k$ has only one reactant, denoted $A$;
the argument below, however, is general and applies to any number of reactants in $r_k$.

By the definition of a reaction-feedforward CRN, the reactions $r_{k+1}$ through $r_n$ do not affect the count of $A$. Further, reactions $r_1$ through $r_{k-1}$ can only produce $A$ and not consume it, reactions $r_1$ through $r_k$ can increase the count of $A$, and among them, only $r_k$ can decrease it.
Let $m = \#(r_k, P)$. Let $Q'$ represent the prefix sequence $(\vi, \vx_1, \ldots, \vx_p)$ of $Q$ where the transition $\vx_p \reach \vx_{p+1}$ corresponds to the $(m+1)$st execution of reaction $r_k$. The configuration $\vx_p$ is thus the configuration just before $r_k$ occurs more in $Q$ than in $P$.

Note that reactions $r_1$ through $r_{k-1}$ occur at least as often in $P$ as in $Q$ (i.e. $\#(r_i, P) \geq \#(r_i, Q)$ for $i = 1$ to $k-1$). Therefore, they occur at least as often in $P$ as in $Q'$, since $Q'$ is a prefix of $Q$.
Moreover, by our choice of $Q'$, $\#(r_k, P) = \#(r_k, Q')$. So $A$ is present in $\vc$, i.e. $\vc(A) > 0$. Thus, $r_k$ is applicable at $\vc$, so $\vc$ is not terminal.
\end{proof}

The following corollary implies that any reaction-feedforward CRN that can reach a terminal configuration from $\vi$ is execution bounded from $\vi$.

\begin{corollary}
    In a reaction-feedforward CRN $\calC$, if there is a terminal configuration $\vc_{\vi}$ reachable from initial configuration $\vi$, then $\vc_{\vi}$ is reached by every sufficiently long execution from $\vi$,
    so $\vc_\vi$ is the only terminal configuration reachable from $\vi$.
    Furthermore, all of these executions are permutations of the same number of each reaction type.
    In particular, $\calC$ is execution bounded from $\vi$.
\end{corollary}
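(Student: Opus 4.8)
The plan is to extract everything from \Cref{lem:reaction-feedforward-terminal-only-if-not-less-applications-of-any} by applying it in both comparison directions. Fix the terminal configuration $\vc_{\vi}$ together with an execution $Q$ witnessing $\vi \reach \vc_{\vi}$, and let $\vu \in \N^R$ record how many times each reaction occurs in $Q$, writing $\#(r,Q)$ for its entry at reaction $r$. Throughout I think of an execution from $\vi$ as a sequence of reactions, since $\vi$ together with the reaction sequence determines all intermediate configurations, and the number of reaction steps in an execution $P$ is $\sum_r \#(r,P)$.

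First I would establish a uniform upper bound on reaction counts. Let $P$ be any execution from $\vi$, reaching some configuration $\vc$. Apply \Cref{lem:reaction-feedforward-terminal-only-if-not-less-applications-of-any} with $Q$ in the role of the first execution (endpoint $\vc_{\vi}$) and $P$ in the role of the second (endpoint $\vc$): if some reaction occurred strictly less in $Q$ than in $P$, the lemma would force $\vc_{\vi}$ to be non-terminal, contradicting its terminality. Hence $\#(r,P) \le \#(r,Q)$ for every reaction $r$. Summing over $r$ gives $\sum_r \#(r,P) \le \sum_r \#(r,Q) = \|\vu\| =: N$, a single bound on the length of every execution from $\vi$, so by \Cref{obs:finite_global_bound_on_execution_length} $\calC$ is execution bounded from $\vi$.

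Next I would pin down the maximal executions. Suppose $P$ is maximal, i.e., reaches a terminal configuration $\vc$. We already have $\#(r,P) \le \#(r,Q)$ for all $r$. If this inequality were strict for some $r$, then applying the lemma the other way (first execution $P$, with terminal endpoint $\vc$; second execution $Q$) would force $\vc$ to be non-terminal, a contradiction. Therefore $\#(r,P) = \#(r,Q)$ for all $r$. Since the net effect of an execution is $\vM$ applied to its reaction-count vector, $\vc = \vi + \vM\vu = \vc_{\vi}$; thus $\vc_{\vi}$ is the unique terminal configuration reachable from $\vi$, every maximal execution has length exactly $N$ and reaches $\vc_{\vi}$, and any execution of length $N$ is necessarily maximal and reaches $\vc_{\vi}$ (so ``sufficiently long'' means ``of the maximum length $N$''). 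Finally, since all such executions consume exactly $\#(r,Q)$ copies of each reaction $r$ and start from the same $\vi$, they are precisely the valid reorderings of one fixed reaction multiset, establishing the ``permutation'' claim.

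The argument is essentially a bookkeeping packaging of the lemma, so there is no single hard step; the one point requiring care is the bidirectional use of \Cref{lem:reaction-feedforward-terminal-only-if-not-less-applications-of-any} — using terminality of $\vc_{\vi}$ to cap reaction counts from above, and terminality of a candidate endpoint $\vc$ to force equality from below — together with keeping the reaction-counting (rather than configuration-counting) viewpoint consistent so that ``length'' and ``number of each reaction'' line up.
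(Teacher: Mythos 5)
Your proof is correct and takes essentially the same approach as the paper: both apply \Cref{lem:reaction-feedforward-terminal-only-if-not-less-applications-of-any} in the two directions (terminality of $\vc_{\vi}$ capping reaction counts from above, terminality of a candidate endpoint forcing equality). The only difference is that you derive the per-reaction inequality $\#(r,P)\le\#(r,Q)$ directly for every reaction, whereas the paper reaches the same conclusions via the pigeonhole principle on total execution lengths; this is a minor streamlining, not a different route.
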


\begin{proof}
Let $P$ be the execution leading from $\vi$ to $\vc_{\vi}$. 
Let $Q$ be any execution from $\vi$.

We first claim that $|Q| \leq |P|$;
suppose otherwise the sake of contradiction.
By the pigeonhole principle, $Q$ must have more of some reaction than $P$ does.
By \Cref{lem:reaction-feedforward-terminal-only-if-not-less-applications-of-any}, this would imply that $\vc_{\vi}$ is not terminal, a contradiction.
Therefore, no execution $Q$ can be longer than $P$;
in particular $\calC$ is execution bounded from $\vi$.

If $|Q| = |P|$,
we claim $Q$ must be a permutation of $P$.
Supposing otherwise, by the pigeonhole principle, 
$Q$ would have more of some reaction than $P$, again with \Cref{lem:reaction-feedforward-terminal-only-if-not-less-applications-of-any} implying that $\vc_{\vi}$ is not terminal, a contradiction.

Finally,
suppose $|Q| < |P|$.
By pigeonhole some reaction occurs more frequently in $P$ than in $Q$, so by \Cref{lem:reaction-feedforward-terminal-only-if-not-less-applications-of-any}, $Q$ cannot reach a terminal configuration.
Therefore \emph{all} executions from $\vi$ to $\vc_{\vi}$ are permutations of $P$.
\end{proof}

As noted,
in the model of \emph{continuous} CRNs, it is known that 
all the functions that can be stably computed (the continuous, piecewise linear functions) can be stably computed by reaction-feedforward CRNs~\cite{chen2023rate}.
In contrast, with \emph{discrete} CRNs computing \emph{predicates},
we show that reaction-feedforward CRNs cannot stably decide all semilinear sets by giving two counterexamples, showing that reaction-feedforward CRDs can decide neither ``the simplest'' \emph{mod} set (\cref{lem:feedforward_cannot_decide_mod}) nor ``the simplest'' \emph{threshold} set (\cref{lem:feedforward_cannot_decide_threshold}).
Specifically, we chose the parity and majority predicate as our counterexamples,
although the techniques generalize to more complex mod and threshold sets, e.g., $[X_1 + 2 X_2 \equiv 3 \mod 5?]$.

\begin{lemmarep}
    Reaction-feedforward CRDs can't stably decide the parity predicate $[X \equiv 1 \mod 2?]$.
    \label{lem:feedforward_cannot_decide_mod}
\end{lemmarep}

\begin{proof}
    We show that in any possible construction, the input species must be a reactant of two distinct reactions. By letting the CRN stabilize and then introducing another input molecule, there must exist a set of rules inverting the output in either way, consisting of at least two reactions with $X$ as reactant, breaking the \emph{reaction-feedforward} condition.

    Consider the set of even numbers. A simple, non-reaction-feedforward CRD that decides parity is:
    \begin{align*} 
        Y+X \rightarrow N \\ 
        N+X \rightarrow Y
    \end{align*}
    where $X$ is the input species, $Y$ is a \emph{yes} voter, and $N$ is a \emph{no} voter, initialized with $1Y$ and $nX$. In either way to order these reactions, a reactant of the first reaction appears in the second reaction. Thus, the CRN is not reaction-feedforward.

    To show that no such CRN could decide parity, we show that any construction requires us to have at least one reactant reappear in a later reaction, or even stronger: at least one species must be a reactant of two distinct reactions. Specifically, this is true for the input species $X$.

    To motivate the choice of species, let's consider an even simpler parity computing CRD.
    \begin{align*}
        X+X \rightarrow Y \\
        Y+X \rightarrow X
    \end{align*}
    where $X$ is both input and votes \emph{no}, $Y$ votes yes, initialized with $1Y$ and $nX$.
    Only the input species appears twice as a reactant. Intuitively, this is true for all CRDs because we expect the input to be able to change our answer in either way, reversing the previous one.

    Suppose for the sake of contradiction that there is a reaction-feedforward CRD $\C$ which stably decides whether the initial number $n=\#X$ of input $X$ is even. 
    We withhold two copies of $X$ and let $\C$ stabilize on the correct output of \emph{yes}. 
    Denote $\Upsilon$ as the set of \emph{yes} voters. We denote the \emph{no} voters with $\overline{\Upsilon} \triangleq \Lambda \backslash \Upsilon$. Only species contained in $\Upsilon$ are present in the stable, correct output configuration. 

    Now, we release one of the remaining copies of $X$. 
    We first run the chain reaction (if any) starting from only one $X$. Let $\Omega_{X} := \{ S \mid \exists \vx \in \reachset{\{1X\}}: \vx(S) > 0\}$ be the set of species 
    producible from $\{1X\}$
    (e.g., if there is no reaction $X \rxn ...$, then $\Omega_{X}$ is just $\{X\}$).
    Without loss of generality, we assume $\Omega_{X} \subseteq \Upsilon$, that is, all of $X$'s direct products are \emph{yes-voters} (if not, exchange $\Upsilon$ and $\overline \Upsilon$ in what follows). 
    To correct the answer ($n+1 \not\equiv n \bmod 2$), $\C$ must consume all species currently present and produce at least one copy of a species in $\overline \Upsilon$. It follows that for all $X \in \Omega_{X}$, $\C$ contains a reaction with $X$ as a reactant. Further, none of these reactions contain a reactant of $\overline \Upsilon$, since none are present in the current configuration.
    
    Finally, we release the last remaining copy of $X$. Again, we produce the set $\Omega_{X}$ from $X$. To invert the vote again, we must consume all $Y \in \overline \Upsilon$ and produce at least one member of $\Upsilon$. The reaction(s) consuming $\overline \Upsilon$ must have a member of $\Omega_{X}$ as a reactant since the configuration is stable without $\Omega_{X}$. Further, the reaction cannot be any of the ones from before since they contain a member of $\overline \Upsilon$ as reactant.
    
    Since there are least two reactions sharing a common species as reactant, the reactions cannot be ordered such that no reactant of the first of these reactions appears in the latter one. This makes $\C$ non-reaction-feedforward, contradicting our initial assumption.
\end{proof}

\begin{lemmarep}
    Reaction-feedforward CRDs can't stably decide the majority predicate $[X_1 \geq X_2?]$.
    \label{lem:feedforward_cannot_decide_threshold}
\end{lemmarep}

\begin{proof}
    Suppose, for the sake of contradiction, there exists a reaction-feedforward CRD $\C$ which stably decides the predicate. We let $\C$ stabilize on input $\{n X_1, n X_2\}$ (yielding output \emph{yes}), while withholding two copies of $X_2$. We release one $X_2$. Again, we consider the full set of species a single $X_2$ could produce before reacting with other molecules (denoted $\Omega_{X_2}$). Without loss of generality, we consider all of them \emph{yes} voters i.e. $\Omega_{X_2} \subseteq \Upsilon$. 
    The correct output now changes to \emph{no}, and all \emph{yes} voters must be consumed by reactions that only have reactants which are \emph{yes} voters and further, these reactions contain all species in $\Omega_{X_2}$ as reactants.

    Once the vote has reversed and stabilized, it contains only species of $\overline \Upsilon\triangleq \Lambda \backslash \Upsilon$. We release the last $X_2$ and let it produce $\Omega_{X_2}$. Since $\Omega_{X_2} \subseteq \Upsilon$ i.e. all elements are \emph{yes} voters, but the correct vote is still \emph{no}, all $X \in \Omega_{X_2}$ must be consumed again. This time, they must be consumed in reactions involving \emph{no} voters, must be distinct reactions from those in the previous step. Thus, all species $X \in \Omega_{X_2}$ appear at least twice as a reactant, breaking the reaction-feedforward condition.
\end{proof}










\end{toappendix}

\section{Conclusion}

\opt{full}{
    We explored the computational capabilities of execution bounded Chemical Reaction Networks (CRNs), which terminate after a finite number of reactions. 
    This constraint aligns the model with practical scenarios where fuel supply is limited.
    
    Our findings illustrate that the computational power of these CRNs varies significantly based on structural choices. 
    Specifically, CRNs with an initial leader and the ability to allow only the leader to vote can stably compute all semilinear predicates and functions in $O(\|x\| \log \|x\|)$ parallel time. 
    Without an initial leader, and requiring all species to vote, these networks are limited to computing eventually constant predicates. 
    This limitation holds considerable weight for decentralized systems modeled by population protocols, which inherently exhibit these traits.
    Additionally, we introduced a new characterization of execution bounded networks through a nonnegative linear potential function, providing a novel theoretical tool for analyzing the physical constraints CRNs.
}

A key question remains open: \emph{Can execution bounded CRNs compute semilinear functions and predicates within polylogarithmic time?} Angluin, Aspnes and Eisenstat~\cite{AngluinAE2008Fast} introduced a fast population protocol that simulates a register machine with high probability,
and can be made probability 1 with semilinear predicates. 
However, this construction seems inherently unbounded in executions.

\bibliographystyle{plain}
\bibliography{references}

\end{document}